\newcommand{\acli}[1]{\emph{\acl{#1}}}		
\newcommand{\acdef}[1]{\emph{\acl{#1}} \textup{(\acs{#1})}\acused{#1}}		
\colorlet{MyRed}{DarkViolet}
\colorlet{MyGreen}{DarkGreen!80!Black}
\colorlet{MyBlue}{MediumBlue}
\newcommand{\afterhead}{.\;}		
\newcommand{\para}[1]{\medskip\paragraph{\textbf{#1\afterhead}}}
\newcommand{\EMAIL}[1]{\email{\href{mailto:#1}{#1}}}
\crefname{assumption}{Assumption}{Assumptions}
\crefname{model}{Model}{Models}
\theoremstyle{plain}
\newtheorem{corollary}{Corollary}		
\newtheorem{lemma}{Lemma}		
\newtheorem{proposition}{Proposition}		
\newtheorem*{corollary*}{Corollary}		
\theoremstyle{definition}
\newtheorem{definition}{Definition}		
\newtheorem{model}{Model}		
\newtheorem*{definition*}{Definition}		
\newtheorem*{assumption*}{Assumptions}		
\newtheorem*{example*}{Example}		
\theoremstyle{remark}
\newtheorem{remark}{Remark}		
\newtheorem*{remark*}{Remark}		
\def\endenv{\hfill\P\smallskip}
\newcounter{proofpart}
\numberwithin{example}{section}		
\newcommand{\debug}[1]{#1}		
\newcommand{\revise}[1]{#1}		
\newcommand{\remove}[1]{}		
\def\beginrev{\color{black}}		
\def\endedit{\color{black}}		
\newcommand{\newmacro}[2]{\newcommand{#1}{\debug{#2}}}		
\newcommand{\newop}[2]{\DeclareMathOperator{#1}{\debug{#2}}}		
\DeclarePairedDelimiter{\braces}{\{}{\}}		
\DeclarePairedDelimiter{\bracks}{[}{]}		
\DeclarePairedDelimiter{\parens}{(}{)}		
\DeclarePairedDelimiter{\abs}{\lvert}{\rvert}		
\DeclarePairedDelimiterX{\setdef}[2]{\{}{\}}{#1:#2}		
\DeclarePairedDelimiterXPP{\exclude}[1]{\mathopen{}\setminus}{\{}{\}}{}{#1}
\newcommand{\N}{\mathbb{N}}		
\newcommand{\R}{\mathbb{R}}		
\DeclareMathOperator*{\argmin}{arg\,min}		
\DeclareMathOperator*{\intersect}{\bigcap}		
\DeclareMathOperator*{\union}{\bigcup}		
\DeclareMathOperator{\bigoh}{\mathcal{O}}		
\DeclareMathOperator{\dist}{dist}		
\DeclareMathOperator{\Jac}{Jac}		
\DeclareMathOperator{\one}{\mathds{1}}		
\DeclareMathOperator{\proj}{proj}		
\DeclareMathOperator{\relint}{ri}		
\DeclareMathOperator{\supp}{supp}		
\DeclareMathOperator{\unif}{Unif}		
\newcommand{\cf}{cf.\xspace}		
\newcommand{\eg}{e.g.,\xspace}		
\newcommand{\ie}{i.e.,\xspace}		
\newcommand{\textpar}[1]{\textup(#1\textup)}		
\newcommand{\alt}[1]{#1'}		
\newcommand{\altalt}[1]{#1''}		
\newmacro{\dd}{\:d}		
\newcommand{\eps}{\varepsilon}		
\newcommand{\pd}{\partial}		
\newcommand{\insum}{\sum\nolimits}		
\newmacro{\const}{c}		
\newmacro{\coef}{\lambda}		
\newmacro{\param}{\theta}		
\newmacro{\params}{\Theta}		
\newmacro{\pexp}{p}		
\newmacro{\qexp}{q}		
\newmacro{\rexp}{r}		
\newmacro{\beforestart}{0}		
\newmacro{\start}{1}		
\newmacro{\afterstart}{2}		
\newmacro{\running}{\start,\afterstart,\dotsc}		
\newmacro{\run}{n}		
\newmacro{\runalt}{k}		
\newmacro{\runaltalt}{m}		
\newmacro{\runoff}{m}		
\newmacro{\nRuns}{T}		
\newmacro{\runs}{\mathcal{\nRuns}}		
\newmacro{\state}{\policy}		
\newmacro{\statealt}{Y}		
\newmacro{\statealtalt}{Z}		
\newcommand{\beforeinit}[1][\state]{\debug{#1}_{\beforestart}}		
\newcommand{\init}[1][\state]{\debug{#1}_{\start}}		
\newcommand{\beforeiter}[1][\state]{\debug{#1}_{\runalt-1}}		
\newcommand{\iter}[1][\state]{\debug{#1}_{\runalt}}		
\newcommand{\afteriter}[1][\state]{\debug{#1}_{\runalt+1}}		
\newcommand{\beforeprev}[1][\state]{\debug{#1}_{\run-2}}		
\newcommand{\prev}[1][\state]{\debug{#1}_{\run-1}}		
\newcommand{\curr}[1][\state]{\debug{#1}_{\run}}		
\renewcommand{\next}[1][\state]{\debug{#1}_{\run+1}}		
\newmacro{\tstart}{0}		
\newmacro{\tafterstart}{1}
\renewcommand{\time}{\debug{t}}		
\newmacro{\timealt}{s}		
\newmacro{\horizon}{T}		
\newmacro{\traj}{\tau}		
\newmacro{\trajalt}{y}		
\newmacro{\trajaltalt}{z}		
\newmacro{\flowmap}{\Theta}		
\DeclarePairedDelimiterXPP{\flowof}[2]{\flowmap_{#1}}{(}{)}{}{#2}		
\newop{\Nash}{NE}		
\newop{\CE}{CE}		
\newop{\CCE}{CCE}		
\newop{\NI}{NI}		
\newop{\brep}{br}		
\newop{\reg}{Reg}		
\newop{\preg}{\overline{Reg}}		
\newop{\val}{val}		
\newmacro{\play}{i}		
\newmacro{\playalt}{j}		
\newmacro{\playaltalt}{k}		
\newmacro{\nPlayers}{N}		
\newmacro{\players}{\mathcal{\nPlayers}}		
\newmacro{\pure}{\alpha}		
\newmacro{\purealt}{\beta}		
\newmacro{\purealtalt}{\gamma}		
\newmacro{\pureq}{\eq[\pure]}		
\newmacro{\nPures}{A}		
\newmacro{\pures}{\mathcal{\nPures}}		
\newmacro{\strat}{x}		
\newmacro{\stratalt}{\alt\strat}		
\newmacro{\strataltalt}{\altalt\strat}		
\newmacro{\strats}{\mathcal{X}}		
\newmacro{\intstrats}{\strats^{\circle}}		
\newmacro{\score}{y}		
\newmacro{\scorealt}{\alt\score}		
\newcommand{\eq}{\sol}		
\newmacro{\loss}{\ell}		
\newmacro{\pay}{u}		
\newmacro{\payv}{v}		
\newmacro{\pot}{f}		
\newmacro{\game}{\mathcal{G}}		
\newmacro{\gameall}{\game(\players,\points,\loss)}		
\newmacro{\fingame}{\Gamma}		
\newmacro{\fingameall}{\Gamma(\players,\pures,\pay)}		
\newmacro{\gmat}{g}		
\newmacro{\gdist}{\dist_{\gmat}}
\newmacro{\mfld}{M}		
\newmacro{\form}{\omega}		
\newmacro{\tvec}{z}		
\newmacro{\uvec}{u}		
\newmacro{\ball}{\mathbb{B}}		
\newmacro{\sphere}{\mathbb{S}}		
\newmacro{\vertex}{v}		
\newmacro{\vertexalt}{\alt\vertex}		
\newmacro{\vertexaltalt}{\altalt\vertex}		
\newmacro{\nVertices}{V}		
\newmacro{\vertices}{\mathcal{\nVertices}}		
\newmacro{\edge}{e}		
\newmacro{\edgealt}{\alt\edge}		
\newmacro{\edgealtalt}{\altalt\edge}		
\newmacro{\nEdges}{E}		
\newmacro{\edges}{\mathcal{\nEdges}}		
\newmacro{\graph}{\mathcal{G}}		
\newmacro{\graphall}{\graph(\vertices,\edges)}		
\newmacro{\vecspace}{\mathcal{V}}		
\newmacro{\subspace}{\mathcal{W}}		
\newmacro{\bvec}{e}		
\newmacro{\bvecs}{\mathcal{E}}		
\newmacro{\vvec}{v}		
\newmacro{\coord}{i}		
\newmacro{\coordalt}{j}		
\newmacro{\coordaltalt}{k}		
\newmacro{\nCoords}{d}		
\newmacro{\dims}{\nCoords}		
\newmacro{\vdim}{\nCoords}		
\newmacro{\pspace}{\mathcal{X}}		
\newmacro{\dspace}{\prod_{\play}\R^{\pures_{\play}\times\states}}		
\newmacro{\ppoint}{x}		
\newmacro{\ppointalt}{\alt\ppoint}		
\newmacro{\ppointaltalt}{\altalt\ppoint}		
\newmacro{\ppoints}{\mathcal{X}}		
\newmacro{\pstate}{X}		
\newmacro{\dpoint}{y}		
\newmacro{\dpointalt}{\alt\dpoint}		
\newmacro{\dpointaltalt}{\altalt\dpoint}		
\newmacro{\dpoints}{\mathcal{Y}}		
\newmacro{\dstate}{y}		
\newmacro{\pvec}{u}		
\newmacro{\dvec}{v}		
\newmacro{\mat}{M}		
\newmacro{\hmat}{H}		
\newmacro{\ones}{\mathbf{1}}		
\newmacro{\eye}{I}		
\newmacro{\zer}{\mathbf{0}}		
\DeclarePairedDelimiter{\norm}{\lVert}{\rVert}		
\DeclarePairedDelimiterXPP{\dnorm}[1]{}{\lVert}{\rVert}{}{#1}		
\DeclarePairedDelimiterXPP{\onenorm}[1]{}{\lVert}{\rVert}{_{1}}{#1}		
\DeclarePairedDelimiterXPP{\twonorm}[1]{}{\lVert}{\rVert}{_{2}}{#1}		
\DeclarePairedDelimiterXPP{\supnorm}[1]{}{\lVert}{\rVert}{_{\infty}}{#1}		
\DeclarePairedDelimiterX{\braket}[2]{\langle}{\rangle}{#1,#2}		
\DeclarePairedDelimiterX{\inner}[2]{\langle}{\rangle}{#1,#2}		
\newcommand{\defeq}{\coloneqq}		
\newcommand{\from}{\colon}		
\newcommand{\comp}[1]{#1^{\mathtt{c}}}		
\newmacro{\source}{O}		
\newmacro{\sink}{D}		
\newmacro{\pair}{i}		
\newmacro{\pairalt}{j}		
\newmacro{\pairaltalt}{k}		
\newmacro{\nPairs}{N}		
\newmacro{\pairs}{\mathcal{\nPairs}}		
\newmacro{\route}{p}		
\newmacro{\routealt}{\alt\route}		
\newmacro{\routealtalt}{\altalt\route}		
\newmacro{\nRoutes}{P}		
\newmacro{\routes}{\mathcal{\nRoutes}}		
\newmacro{\flow}{f}		
\newmacro{\flowalt}{\alt\flow}		
\newmacro{\flowaltalt}{\altalt\flow}		
\newmacro{\flows}{\mathcal{F}}		
\newmacro{\load}{x}		
\newmacro{\loadalt}{\alt\load}		
\newmacro{\loadaltalt}{\altalt\load}		
\newmacro{\loads}{\mathcal{X}}		
\newop{\Opt}{Opt}		
\newop{\Sol}{Sol}		
\newop{\gap}{Gap}		
\newop{\orcl}{Or}		
\newmacro{\obj}{f}		
\newmacro{\objalt}{g}		
\newmacro{\sobj}{F}		
\newmacro{\gvec}{g}		
\newmacro{\oper}{A}		
\newmacro{\vecfield}{v}		
\newcommand{\sol}[1][\policy]{#1^{\ast}}		
\newmacro{\vbound}{G}		
\newmacro{\lips}{L}		
\newmacro{\strong}{\mu}		
\newmacro{\smooth}{\beta}		
\newop{\tspace}{T}		
\newop{\tcone}{TC}		
\newop{\dcone}{\tcone^{\ast}}		
\newop{\ncone}{NC}		
\newop{\pcone}{PC}		
\newop{\hull}{\Delta}		
\newmacro{\cvx}{\mathcal{C}}		
\newmacro{\subd}{\partial}		
\newmacro{\minmax}{L}		
\newmacro{\minvar}{\theta}		
\newmacro{\minvaralt}{\alt\minvar}		
\newmacro{\minvars}{\Theta}		
\newmacro{\maxvar}{\phi}		
\newmacro{\maxvaralt}{\alt\maxvar}		
\newmacro{\maxvars}{\Phi}		
\newop{\Eucl}{\Pi}		
\newop{\logit}{\Lambda}		
\newop{\dkl}{KL}		
\newmacro{\hreg}{h}		
\newmacro{\breg}{D}		
\newmacro{\mprox}{P}		
\newmacro{\mirror}{Q}		
\newmacro{\fench}{F}		
\newmacro{\hstr}{K}		
\newmacro{\depth}{H}		
\newmacro{\proxdom}{\points^{\hreg}}		
\DeclarePairedDelimiterXPP{\proxof}[2]{\mprox_{#1}}{(}{)}{}{#2}		
\newmacro{\zone}{\mathbb{D}}		
\newmacro{\point}{x}		
\newmacro{\pointalt}{\alt\point}		
\newmacro{\pointaltalt}{\altalt\point}		
\newmacro{\points}{\mathcal{K}}		
\newmacro{\intpoints}{\relint\points}		
\newmacro{\base}{p}		
\newmacro{\basealt}{q}		
\newmacro{\basealtalt}{u}		
\newmacro{\open}{\mathcal{U}}		
\newmacro{\closed}{\mathcal{C}}		
\newmacro{\cpt}{\mathcal{K}}		
\newmacro{\nhd}{\mathcal{U}}		
\newop{\ex}{\mathbb{E}}		
\newop{\probMatrix}{P} 
\newop{\prob}{\mathbb{P}}		
\newop{\Var}{Var}		
\newop{\simplex}{\hull}		
\providecommand\given{}		
\DeclarePairedDelimiterXPP{\exof}[1]{\ex}{[}{]}{}{
\renewcommand\given{\nonscript\,\delimsize\vert\nonscript\,\mathopen{}} #1}
\DeclarePairedDelimiterXPP{\probof}[1]{\prob}{(}{)}{}{
\renewcommand\given{\nonscript\:\delimsize\vert\nonscript\:\mathopen{}} #1}
\DeclarePairedDelimiterXPP{\tprobof}[1]{P}{(}{)}{}{
\renewcommand\given{\nonscript\:\delimsize\vert\nonscript\:\mathopen{}} #1}
\DeclarePairedDelimiterXPP{\oneof}[1]{\one}{\{}{\}}{}{
\renewcommand\given{\nonscript\,\delimsize\vert\nonscript\,\mathopen{}} #1}
\newmacro{\sample}{\omega}		
\newmacro{\samples}{\Omega}		
\newmacro{\filter}{\mathcal{F}}		
\newmacro{\probspace}{(\samples,\filter,\prob)}		
\newcommand{\as}{\debug{\textpar{a.s.}}\xspace}		
\newmacro{\event}{\mathcal{E}}       
\newmacro{\eventalt}{H}       
\newmacro{\mean}{\mu}		
\newmacro{\sdev}{\sigma}		
\newmacro{\variance}{\sdev^{2}}		
\newcommand{\est}[1]{\hat #1}		
\newmacro{\signal}{\est\gvalue}		
\newmacro{\step}{\gamma}		
\newmacro{\learn}{\eta}		
\newmacro{\mix}{\eps}		
\newmacro{\conf}{\delta}		
\newmacro{\mindiff}{d^{\ast}}		
\newmacro{\proper}{\tau}		
\newmacro{\noise}{U}		
\newmacro{\snoise}{\xi}		
\newmacro{\noisepar}{\sdev}		
\newmacro{\noisevar}{\variance}		
\newmacro{\aggnoise}{\mathrm{\uppercase\expandafter{\romannumeral1}}}		
\newmacro{\supnoise}{\aggnoise_{\infty}}		
\newmacro{\maxnoise}{\aggnoise^{\ast}}		
\newmacro{\bias}{b}		
\newmacro{\bbound}{B}		
\newmacro{\sbias}{\chi}		
\newmacro{\aggbias}{\mathrm{\uppercase\expandafter{\romannumeral2}}}		
\newmacro{\supbias}{\aggbias_{\infty}}		
\newmacro{\maxbias}{\aggbias^{\ast}}		
\newmacro{\second}{\psi}		
\newmacro{\sbound}{M}		
\newmacro{\aggsecond}{\mathrm{\uppercase\expandafter{\romannumeral3}}}		
\newmacro{\supsecond}{\aggsecond_{\infty}}		
\newmacro{\maxsecond}{\aggsecond^{\ast}}		
\newmacro{\mart}{M}     
\newmacro{\submart}{S}      
\newmacro{\toterr}{W}      
\newmacro{\this}{\mathbb{A}}
\newmacro{\that}{\mathbb{B}}
\newmacro{\visitMatrix}{\mathcal{T}}
\newmacro{\transMatrix}{\mathcal{P}}
\newmacro{\pertrubation}{\mathrm{pert}}
\newcommand{\explain}[1]{\hspace{2em}\tag*{\itshape\#\:#1}}     
\newcommand{\reinforce}{\textsc{\debug{Reinforce}}\xspace}
\newcommand{\greedy}{\textsc{\debug{$\eps$-Greedy Policy Gradient}}\xspace}
\newcommand{\MDP}{\textsc{\debug{MDP}}\xspace}
\newcommand{\adv}{\mathrm{adv}}		
\newcommand{\avgadv}{\cramped{\overline{\adv}}}		
\newmacro{\bexp}{{\ell_{\bias}}}		
\newmacro{\sexp}{{\ell_{\sdev}}}		
\newmacro{\dbasin}{\mathcal{W}}		
\newmacro{\basin}{\mathcal{B}}		
\newmacro{\energy}{D}		
\newmacro{\poldiam}{\norm{\policies}}       
\newmacro{\radius}{\varrho}       
\newmacro{\thres}{a}		
\newmacro{\Const}{C}        
\newmacro{\nStates}{S} 
\newmacro{\states}{\mathcal{\nStates}} 
\newmacro{\rstate}{s} 
\newmacro{\rstatealt}{\alt\rstate} 
\newmacro{\rewards}{R} 
\newmacro{\reward}{r} 
\newmacro{\rvalue}{V} 
\newmacro{\gvalue}{v} 
\newmacro{\logvalue}{\hat{v}} 
\newmacro{\discount}{\gamma} 
\newmacro{\expar}{\varepsilon} 
\newmacro{\stateDist}{\rho}
\newmacro{\tprob}{P} 
\newmacro{\realReward}{r}
\newmacro{\policy}{\pi} 
\newmacro{\policyalt}{\alt\policy} 
\newmacro{\policies}{\Pi} 
\newmacro{\estpolicy}{\hat{\pi}} 
\newmacro{\np}{\pi^*} 
\newmacro{\mdp}{\mathcal{G}} 
\newmacro{\mismatch}{\mathcal{C}_\mdp}
\newmacro{\action}{\pure}
\newmacro{\actions}{\pures}
\newmacro{\stopTime}{T} 
\newmacro{\stopPr}{\zeta}
\newmacro{\logtrick}{\Lambda}
\newcommand{\visitDistr}[2]{{d}_{#1}^{#2}}
\newcommand{\visitMeas}[2]{\tilde{d}_{#1}^{#2}}
\newcommand{\normConst}[2]{{Z}_{#1}^{#2}}
\newcommand{\dpol}[1][\policy]{\tilde{d}^{#1}} 
\newmacro{\rad}{\beta} 
\newmacro{\bD}{\bar\energy}
\begin{document}


\title
[On the convergence of policy gradient methods]
{On the convergence of policy gradient methods\\
to Nash equilibria in general stochastic games}

\author
[A.~Giannou]
{Angeliki Giannou$^{\ast}$}
\address{$^{\ast}$\,%
University of Wisconsin-Madison.}
\EMAIL{giannou@wisc.edu}
\author
[K.~Lotidis]
{Kyriakos Lotidis$^{\ddag}$}
\address{$^{\ddag}$\,%
Department of Management Science \& Engineering, Stanford University.}
\EMAIL{klotidis@stanford.edu}
\author
[P.~Mertikopoulos]
{Panayotis Mertikopoulos$^{\diamond,\star}$}
\address{$^{\diamond}$\,%
Univ. Grenoble Alpes, CNRS, Inria, Grenoble INP, LIG, 38000 Grenoble, France.}
\address{$^{\star}$\,%
Criteo AI Lab.}
\EMAIL{panayotis.mertikopoulos@imag.fr}
\author
[E.~V.~Vlatakis-Gkaragkounis]
{Emmanouil V. Vlatakis-Gkaragkounis$^{\S}$}
\address{$^{\S}$\,%
University of California, Berkeley.}
\EMAIL{emvlatakis@berkeley.edu}

\subjclass[2020]{%
Primary 91A15, 91A26;
secondary 68Q32, 68T05, 90C40.}

\keywords{%
Nash equilibrium;
stochastic games;
policy gradient;
stationary policies;
strict equilibria.}

\newacro{LHS}{left-hand side}
\newacro{RHS}{right-hand side}
\newacro{iid}[i.i.d.]{independent and identically distributed}
\newacro{lsc}[l.s.c.]{lower semi-continuous}
\newacro{NE}{Nash equilibrium}
\newacroplural{NE}[NE]{Nash equilibria}

\newacro{whp}[w.h.p.]{with high probability}
\newacro{wp1}[w.p.1]{with probability $1$}

\newacro{MDP}{Markov decision process}
\newacroplural{MDP}[MDPs]{Markov decision processes}
\newacro{GDP}{gradient dominance property}
\newacroplural{GDP}[GDPs]{gradient dominance properties}
\newacro{FOS}{first-order stationary}
\newacro{SOS}{second-order stationary}
\newacro{PG}{policy gradient}
\newacro{LPG}{lazy policy gradient}
\newacro{VI}{variational inequality}		
\newacroplural{VI}[VIs]{variational inequalities}

\begin{abstract}
%
%
Learning in stochastic games is a notoriously difficult problem because, in addition to each other's strategic decisions, the players must also contend with the fact that the game itself evolves over time, possibly in a very complicated manner.
Because of this, the convergence properties of popular learning algorithms \textendash\ like policy gradient and its variants \textendash\ are poorly understood, except in specific classes of games (such as potential or two-player, zero-sum games).
In view of this, we examine the long-run behavior of policy gradient methods with respect to Nash equilibrium policies that are \acf{SOS} in a sense similar to the type of sufficiency conditions used in optimization.
Our first result is that \ac{SOS} policies are locally attracting with high probability, and we show that policy gradient trajectories with gradient estimates provided by the \reinforce algorithm achieve an $\mathcal{O}(1/\sqrt{\run})$ distance-squared convergence rate if the method's step-size is chosen appropriately.
Subsequently, specializing to the class of \emph{deterministic} Nash policies, we show that this rate can be improved dramatically and, in fact, policy gradient methods converge within a \emph{finite} number of iterations in that case.
\end{abstract}

\maketitle
\allowdisplaybreaks		
\acresetall		
\addtocontents{toc}{\protect\setcounter{tocdepth}{0}}

\section{Introduction}
\label{sec:introduction}

Ever since they were introduced by \citet{Sha53} in the 1950's, stochastic games have been one of the staples of non-cooperative game theory, with a range of pioneering applications to
multi-agent reinforcement learning \cite{SLB08},
unmanned vehicles \cite{shalev2016safe},
general game-playing \cite{Silver17:Mastering,Vinyals19:Grandmaster,Mora17:DeepStack},
etc.
Informally, a stochastic game unfolds in discrete time as follows:
At each point in time, the players are at a given state which determines the rules of the game for that stage.
The actions of the players in this state determine not only their instantaneous payoffs (as defined by the stage game), but also the transition probabilities towards the next state of the process.
In this way, each player has to balance two distinct \textendash\ and often competing \textendash\ objectives:
optimizing the payoffs of \emph{today} versus picking a possibly suboptimal action which could yield significant benefits \emph{tomorrow} (\ie by influencing the transitions of the process towards a more favorable state for the player).

Since all players in the game are involved in a similar dilemma, the decision-making problem for each player is a very complicated affair.
In particular, in addition to their changing strategic decisions, the players of the game must also contend with the fact that the stage game itself evolves over time.
Because of this, even the existence of a Nash equilibrium policy \textendash\ viz. a stationary Markovian policy that is stable to unilateral deviations \cite{fink1964equilibrium} \textendash\ is far more difficult to prove compared to standard, stateless normal form games;
for a comprehensive survey, \cf \cite{NS03,solan2015stochastic} and references therein.

The question we seek to address in this paper is whether an ensemble of boundedly rational players can reach an equilibrium policy in a stochastic game.
Specifically, if players do not have sufficient information \textendash\ or the computational resources required \textendash\ to solve a high-dimensional Bellman equation \cite{FilVri97,SB98}, it is not at all clear if they would somehow end up playing a Nash policy in the long run.
After all, the complexity of most games increases exponentially with the number of players, so the identification of a game's equilibria quickly becomes prohibitively difficult \cite{Yujia22:The}.

\para{Our contributions in the context of related work}

This issue has sparked a vigorous literature with important ramifications for the range of applications mentioned above.
Nevertheless, these efforts must grapple with a series of strong lower bounds for computing even weaker solution concepts like coarse correlated equilibria in turn-based stochastic games \cite{daskalakis2022complexity, Yujia22:The}.
On that account, a recent line of work has focused on establishing convergence in \emph{specific} subclasses of stochastic games, such as \emph{min-max} \cite{LPX20,daskalakis2020independent,Wei21:Last,Sayin20:Fictitious,Sayin21:Decentralized,Cen21:Fast} and common interest \emph{potential} games \cite{LOPP22,ding2022independent,zhang2021gradient}.
However, despite these encouraging results, the general case remains particularly elusive.

Our paper takes a complementary approach to the above and seeks to study the convergence landscape of a class of \emph{equilibrium policies} \textendash\ not \emph{games}.
For concreteness, we focus on the general class of policy gradient methods as pioneered by \cite{sutton1999policy,williams1992simple,kakade2001natural,konda1999actor}, and we examine the methods' convergence properties in general random stopping games \textendash\ as opposed to ergodic stochastic games with an infinite horizon \cite{Per13,LPX20}.
Concretely, this means that the sequence of play evolves episode-by-episode:
within each episode, the players commit a policy and play the game, and from one episode to the next, they use an iterative gradient step to update their policy and continue playing.

Our main contributions in this general context may be summarized as follows:
\begin{enumerate}
\item
We introduce a flexible algorithmic template for the analysis of policy gradient methods which accounts for different information and update frameworks \textendash\ from perfect policy gradients to value-based estimates obtained on a per-episode basis, \eg via the \reinforce algorithm \cite{williams1992simple,sutton1999policy,baxter2001infinite}.
\item
Within this framework, we show that Nash policies that satisfy a certain strategic stability condition are locally attracting with arbitrarily high probability.
Moreover, to estimate the method's rate of convergence, we focus on Nash policies that satisfy a second-order sufficiency condition similar to the type of sufficiency conditions used in optimization, and we show that such policies enjoy an $\bigoh(1/\sqrt{\run})$ squared distance convergence rate.
\item
Finally, we also consider the method's convergence to \emph{deterministic} Nash policies \textendash\ a special case of \ac{SOS} policies \textendash\ and we show that, generically, the above rate can be improved dramatically.
In particular, by a simple tweak to the method's projection step, the induced sequence of play converges to equilibrium in a \emph{finite} number of iterations, despite all the noise and uncertainty.
\end{enumerate}

It is also worth noting that our analysis focuses squarely on the actual, episode-by-episode trajectory of play, not any ``best-iterate'' or time-averaged variant thereof.
\beginrev
In regards to the latter class of guarantees, the recent work of \citet{JLWY21} proposed an algorithm (called V-learning) which updates the policy $\policy_{\run}$ of the $\run$-th episode based on the observed rewards so far.
Thanks to the algorithm's regret guarantees, \citet{JLWY21} showed that
\begin{enumerate*}
[(\itshape a\upshape)]
\item
in min-max games, the time-averaged policy $\bar\state_{\run} = (1/\run) \sum_{\runalt=\start}^{\run} \iter$ converges to equilibrium at a rate of $\bigoh(1/\sqrt{\run})$;
whereas
\item
in \emph{general} stochastic games, the empirical frequency of play converges to the game's set of coarse correlated equilibria (a substantial relaxation of the notion of Nash equilibrium) at a rate of $\bigoh(1/\sqrt{\run})$.
\end{enumerate*}

By contrast, as we mentioned above, our paper focuses on the \emph{actual} sequence of play, \ie the policy $\curr$ employed at each episode of the game.
Moreover, the rates that we obtain all concern the convergence of the players' policies to a \emph{Nash} equilibrium \textendash\ not a correlated equilibrium or other relaxation thereof.
In this regard, the best-iterate / ergodic convergence rates are incomparable to our own as they concern a weaker type of convergence (time-averaged instead of the actual sequence), and to a weaker solution concept (correlated equilibria instead of Nash equilibria).
\endedit
This aspect of our results is especially relevant for multi-agent reinforcement learning scenarios where agents learn ``on the fly'', and it has important ramifications for many of the practical applications of stochastic games.

From a technical standpoint, our analysis is based on mapping the problem of multi-agent policy learning to the problem of equilibrium learning in a class of continuous games characterized by the fact that first-order stationary points are necessarily Nash (itself a consequence of the so-called ``gradient dominance'' property of stochastic games).
By means of this reframing, we are able to leverage a series of recent techniques for establishing local convergence in (non-monotone) continuous games and \aclp{VI} \citep{RBS16,CRMB20,MZ19,HIMM19,HIMM20,AIMM21}, which ultimately also yield convergence in our setting.
As a result, even though the unbounded variance of the \reinforce estimator is a source of considerable complications, the resulting link between stochastic and continuous games is of particular technical interest because it opens up a wide array of stochastic approximation tools and techniques that can be used for the analysis of multi-agent learning in stochastic games.

\section{Preliminaries}
\label{sec:preliminaries}

\subsection{Setup of the game}
Throughout this work we consider $\nPlayers$-player generic stochastic games where players repeatedly select actions in a shared \ac{MDP} with the goal of maximizing their individual value functions.
Formally, we study the tabular version with random stopping of general stochastic games, which is specified by a tuple $\game = (\states, \players , \braces{\pures_{\play}, \rewards_{\play}}_{\play \in \players} , \tprob, \stopPr,\stateDist)$ with the following primitives:
\begin{itemize}
\item
A finite set of \emph{agents} $\play \in \players = \{1,2,\dots, \nPlayers\}$ and a finite set of \emph{states} $\states = \{1,\dotsc,\nStates\}$.
\item
For each $\play\in\players$, a finite space of \emph{actions} (or \emph{pure strategies}) $\pures_{\play}$ indexed by $\pure_{\play} = 1,\dotsc,\nPures_{\play} = \abs{\pures_{\play}}$.
We will write $\pures = \prod_{\play \in \players} \pures_{\play}$ and $\pures_{-\play} = \prod_{j\neq i} \pures_{j}$ for the action space of all agents and that of all agents other than $\play$ respectively.
In a similar vein, we will also write $\pure = (\pure_{\play},\pure_{-\play})$ when we want to highlight the action $\pure_{\play}$ of player $\play$ against the action profile $\pure_{-\play}$ of $\play$'s opponents.
\item
For each $\play\in\players$, we will write $\rewards_{\play} \from \states\times \pures \to [-1,1]$ for the \emph{
reward function} of agent $\play \in \players$, \ie $\rewards_{\play}(\rstate,\pure_{\play},\pure_{-\play})$ will denote the value of the 
reward of agent $\play$ when the game is at state $\rstate\in\states$, the focal agent $\play\in\players$ plays $\pure_{\play}\in\pures_{\play}$, and all other agents take actions $\pure_{-\play} \in \pures_{-\play}$.
\item
The game transits from one state to another according to a Markov transition process, so that $\tprobof{\rstatealt \given \rstate, \pure}$ denotes the probability of transitioning from $\rstate$ to $\rstate'$ when $\pure \in \pures$ is the action profile chosen by the agents.
\item
Given an action profile $\pure$ at state $\rstate$, the process terminates with probability $\stopPr_{\rstate,\action} > 0$, i.e., $\stopPr_{\rstate,\action} = 1 - \sum_{\rstatealt \in \states}\tprobof{\rstatealt \given \rstate, \pure}$; for convenience, we will write $\stopPr \defeq \min_{\rstate,\action}\braces{\stopPr_{\rstate,\action}}$.
\item
$\stateDist\in \simplex(\states)$ is the distribution for the initial state of the game.
\end{itemize}

\para{Episodic Setting} We consider an episodic setting, where in each episode a realization of the game is completed. At every time step $\time \ge \tstart$ of each episode, 
all agents observe the common state $\rstate_{\time} \in \states$, select actions $\pure_{\time}$ and receive rewards $\{\rewards_{\play}(\rstate_\time,\pure_\time)\}_{\play \in \players}$. Then, with probability $\stopPr_{\rstate_\time,\pure_\time}$ the game terminates, and with probability $1-\stopPr_{\rstate_\time,\pure_\time}$, it moves to the state $\rstate_{\time+1}$, which is drawn according to $\tprob( \cdot | \rstate_{\time}, \pure_{\time})$. 
Denoting the realized reward of player $\play$ at time $\time$ as $\realReward_{\play,\time} \defeq \rewards_{\play}(\rstate_{\time}, \pure_{\time})$, we will write $\traj = (\rstate_{\time}, \pure_{\time}, \realReward_{\time})_{\time \leq\stopTime(\traj)}$ to denote the trajectory of the episode, where $\realReward_{\time} \defeq (\realReward_{\play,\time})_{ \play \in \players}$, and $\stopTime(\traj)$ the time the episode terminates.

\para{Policies and value functions}
We consider \emph{stationary Markovian} policies, i.e., policies that do not depend on the time-step and the history, given the current state of the game. More specifically, for each agent $\play \in \players$, a \emph{policy} $\policy_{\play} \from \states \to \simplex(\pures_{\play})$ specifies a probability distribution over the actions of agent $\play$ in state $\rstate \in \states$, i.e., $\pure_{\play} \sim \policy_{\play}\parens{\cdot \vert \rstate}$ denotes the (random) action drawn by agent $\play$ at state $\rstate \in \states$ according to $\policy_{\play}$, viewed here as an element of $\policies_{\play} \defeq \simplex(\pures_{\play})^{\states}$.
In addition, we will also write
$\policy = (\policy_{\play})_{\play \in \players} \in \policies \defeq \prod_{\play} \policies_{\play}$
and
$\policy_{-\play} = (\policy_\playalt)_{\playalt\neq\play} \in \policies_{-\play} \defeq \prod_{\playalt\neq\play} \policies_{\playalt}$ for the policy profile of all agents and all agents other than $\play$, respectively.

The expected reward of agent $\play\in\players$ if agents follow policy $\policy$, starting from initial state $\rstate\in\states$, defines the \emph{value function} of agent $\play$, denoted as $\rvalue_{\play,\rstate}\parens{\policy}$, and is equal to
\begin{equation}
\rvalue_{\play,\rstate}\parens{\policy}
	\defeq\ex_{\traj \sim \MDP} \bracks*{\insum_{\time =\tstart}^{\stopTime\parens{\traj}} \rewards_{\play}\parens{\rstate_{\time},\action_\time} \Big\vert \rstate_\tstart = \rstate}
\end{equation}
where $\traj \sim \MDP$ denotes the randomness induced by the policy profile $\policy$, and the state-transition probabilities of the MDP. Overloading the notation, we set $  \rvalue_{\play,\stateDist}\parens{\policy}\defeq\ex_{\rstate\sim\stateDist}\bracks*{   \rvalue_{\play,\rstate}\parens{\policy}}$. 
Although value functions are, in general, non-convex, they share similar smoothness properties with the payoff functions of normal form games, namely bounded and Lipschitz gradients.
For precise statements, we defer to the paper's supplement.

\para{Visitation distribution and the mismatch coefficient}

For a policy profile $\policy \in \policies$ and an arbitrary initial state distribution $\rstate_\tstart\sim\stateDist$, we define the discounted state visitation measure/distribution as
\begin{equation*}
		\visitMeas{\stateDist}{\policy}\parens{\rstate} =\ex_{\traj \sim \MDP}\bracks*{\insum_{\time =\tstart}^{\stopTime\parens{\traj}}\one\braces{\rstate_{\time}=\rstate} \Big| \rstate_\tstart \sim \stateDist},\quad	\visitDistr{\stateDist}{\policy}\parens{\rstate} \defeq \visitMeas{\stateDist}{\policy}\parens{\rstate}/\normConst{\stateDist}{\policy}
\end{equation*}
In the appendix, we prove formally that the above definition is well-posed for the random stopping episodic framework described above, \ie
$\visitMeas{\stateDist}{\policy}\parens{\rstate}<\infty$,
so
$\normConst{\stateDist}{\policy}\defeq\sum_{\rstate\in\states }\visitMeas{\stateDist}{\policy}\parens{\rstate}$
is well-defined.
In our proofs, we will leverage a standard property of visitation distributions, namely the equivalence of the expected value of state-action function and the expected cumulative value over a random trajectory.
More precisely, we have:
\begin{restatable}{lemma}{ConversionLemma}\emph{[Conversion Lemma]}
\label{lem:conversion}
For an arbitrary state-action function $f\from\states\times \pures \to \R$, a policy profile $\policy$ and an initial state distribution $\rstate_\tstart\sim\stateDist$, we have
\begin{equation}
  \ex_{\traj \sim \MDP}\bracks*{\insum_{\time =\tstart}^{\stopTime\parens{\traj}}f( \rstate_\time, \pure_\time)} 
  = \normConst{\stateDist}{\policy}
  \ex_{\rstate\sim\visitDistr{\stateDist}{\policy}}\ex_{\pure\sim\policy(\cdot \mid \rstate)}\bracks*{f(\rstate, \pure)}
\end{equation}
\end{restatable}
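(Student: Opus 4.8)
The plan is to start from the left-hand side, unroll the random finite sum over the episode into an honest infinite series gated by the survival indicator $\one\braces{\time \le \stopTime(\traj)}$, and then interchange this series with the expectation over $\traj$. Concretely, I would write
\[
\ex_{\traj \sim \MDP}\bracks*{\insum_{\time=\tstart}^{\stopTime(\traj)} f(\rstate_\time,\pure_\time)}
= \insum_{\time=\tstart}^{\infty} \ex_{\traj \sim \MDP}\bracks*{\one\braces{\time \le \stopTime(\traj)}\, f(\rstate_\time,\pure_\time)},
\]
and then insert the trivial identity $\sum_{\rstate\in\states}\one\braces{\rstate_\time=\rstate}=1$ inside each summand so that the argument of $f$ becomes a fixed state $\rstate$ rather than the random state $\rstate_\time$.

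The crux of the argument is to peel off the action expectation via the tower property. For fixed $\time$ and $\rstate$, I would condition on the history $\filter_\time$ generated by the trajectory up to and including $\rstate_\time$ but \emph{before} the action $\pure_\time$ is drawn. Both the survival event $\braces{\time \le \stopTime(\traj)}$ and the state event $\braces{\rstate_\time=\rstate}$ are $\filter_\time$-measurable, whereas $\pure_\time \sim \policy(\cdot\mid\rstate_\time)$ is conditionally independent of the past given the current state; hence $\ex[f(\rstate,\pure_\time)\mid\filter_\time] = \ex_{\pure\sim\policy(\cdot\mid\rstate)}[f(\rstate,\pure)]$ on $\braces{\rstate_\time=\rstate}$, and the two indicators factor out. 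Summing over $\time$ first reconstitutes exactly the visitation measure,
\[
\insum_{\time=\tstart}^{\infty}\ex_{\traj\sim\MDP}\bracks*{\one\braces{\time\le\stopTime(\traj)}\,\one\braces{\rstate_\time=\rstate}}
= \ex_{\traj\sim\MDP}\bracks*{\insum_{\time=\tstart}^{\stopTime(\traj)}\one\braces{\rstate_\time=\rstate}}
= \visitMeas{\stateDist}{\policy}(\rstate),
\]
so the left-hand side equals $\sum_{\rstate\in\states}\visitMeas{\stateDist}{\policy}(\rstate)\,\ex_{\pure\sim\policy(\cdot\mid\rstate)}[f(\rstate,\pure)]$. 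Substituting $\visitMeas{\stateDist}{\policy}(\rstate) = \normConst{\stateDist}{\policy}\,\visitDistr{\stateDist}{\policy}(\rstate)$ and recognizing the remaining outer sum as the expectation $\ex_{\rstate\sim\visitDistr{\stateDist}{\policy}}$ then yields the claimed identity.

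The main obstacle is the legitimacy of the sum–expectation interchange, which I would justify by Fubini–Tonelli. Because every step terminates with probability at least $\stopPr > 0$, the stopping time is stochastically dominated by a geometric random variable, giving $\ex[\stopTime(\traj)] \le 1/\stopPr < \infty$; combined with the boundedness of $f$ (a fixed function on the finite set $\states\times\pures$, so $\abs{f}\le \max_{\rstate,\pure}\abs{f(\rstate,\pure)} \eqdef \norm{f}_{\infty}$), this yields the integrable dominating bound $\sum_{\time=\tstart}^{\infty}\one\braces{\time\le\stopTime(\traj)}\abs{f(\rstate_\time,\pure_\time)} \le \norm{f}_{\infty}\,(\stopTime(\traj)+1)$, whose expectation is finite. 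This legitimizes every interchange above and, along the way, re-confirms the finiteness of $\visitMeas{\stateDist}{\policy}$ and the well-definedness of $\normConst{\stateDist}{\policy}$ that were already established, so no further care is needed there.
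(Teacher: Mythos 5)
Your proposal is correct and follows essentially the same route as the paper's proof: unroll the random finite sum into an infinite series of indicator-gated terms, factor out the conditional action distribution $\policy(\cdot\mid\rstate)$ given the current state, recognize the time-summed state probabilities as the visitation measure $\visitMeas{\stateDist}{\policy}$, and normalize. The only difference is that you make explicit the Fubini--Tonelli justification (via the geometric bound $\ex[\stopTime]\le 1/\stopPr$) and the tower-property/measurability argument that the paper leaves implicit, which is a welcome but not structurally different addition.
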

Finally, to quantify the difficulty of hard-to-reach states via a policy gradient method, we will follow the standard approach of \cite{zhang2021gradient,fan2020theoretical,chen2019information,munos2005error,munos2003error} and use an appropriately-defined distribution ``mismatch coefficient'', generalizing the single-agent counterpart of \citet{AgarwalKLM20}.
More precisely, for a  stochastic game $\game$, we define the \emph{mismatch coefficient} as $\mismatch\defeq\max_{\policy,\policy'\in\policies}\braces[\big]{\supnorm{\visitMeas{\stateDist}{\policy}/\visitMeas{\stateDist}{\policy'}}}$ or, more simply, as $\mismatch\defeq\max_{\policy,\in\policies}\braces[\big]{\tfrac{1}{\stopPr}\supnorm{\visitDistr{\stateDist}{\policy}/\stateDist}}$.
Similar to prior work in this direction \cite{AgarwalKLM20,bhandari2019global,daskalakis2020independent}, we will assume $\mismatch$ is finite, which, equivalently, means that $\visitDistr{\stateDist}{\policy}(\rstate)>0$ for any policy $\policy$ and state $\rstate$.

\subsection{Solution concepts}

The most widely used solution concept in game theory is that of a Nash
equilibrium \ie a strategy profile  $\eq\in\policies$ that discourages unilateral deviations.
However, in stochastic games, the definition of a Nash policy is much more involved because of the existence of multiple states and steps, \cf \cite{Sha53,takahashi1962stochastic,fink1964equilibrium,solan2015stochastic} and references therein.
Formally, we have:

\begin{definition}
[Nash policies]
\label{def:NP}
A policy $\eq = (\eq_{\play})_{\play \in \players} \in \policies$ is said to be a \emph{Nash policy} for a given distribution of initial states $\stateDist \in\simplex\parens{\states}$ if, for every player $\play\in\players$, we have
\begin{equation}
\label{eq:NP}
\tag{NE}
\rvalue_{\play,\stateDist}(\eq_{\play};\eq_{-\play}) \geq \rvalue_{\play,\stateDist}(\policy_{\play};\eq_{-\play})\
	\quad
	\text{for all $\play\in\players$ and all $\policy_{\play} \in \simplex(\pures_{\play})^{\states}$}.
\end{equation}
\end{definition}


In contrast to general non-convex continuous games, stochastic games satsify a version of the well-known Polyak-Łojasiewicz condition \cite{polyak1963}
but with linear gradient growth,
also known as a \acdef{GDP} \cite{AgarwalKLM20,bhandari2019global}.
For the multi-agent case, \citet{zhang2021gradient} and \citet{daskalakis2020independent} showed that a similar property holds even in the episodic setting:

\begin{restatable}[\Acl{GDP}]{lemma}{GradientDominance}
\label{lem:GDP}
For any policy profile $\policy=(\policy_{\play})_{\play\in\players}\in\policies$, we have that \begin{equation}
\label{eq:GDP}
\tag{GDP}
\rvalue_{\play,\stateDist}(\policyalt_{\play};\policy_{-\play}) -
	\rvalue_{\play,\stateDist}(\policy_{\play};\policy_{-\play})
	\leq \mismatch \max_{\bar\policy_{\play}\in\policies_{\play}}
		\braket{\nabla_{\play}\rvalue_{\play,\stateDist}(\policy)}{\bar{\policy}_{\play}-{\policy_{\play}}}
\end{equation}
for any unilateral deviation $\policyalt_{\play}\in\policies_{\play}$ of player $\play\in\players$.
\end{restatable}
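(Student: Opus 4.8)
The plan is to reduce the multi-agent inequality to a single-agent argument. Because the profile $\policy_{-\play}$ of the opponents is held fixed while only $\policy_\play$ is varied, player $\play$ effectively faces a single-agent random-stopping \ac{MDP} whose reward, transition, and stopping data are obtained by averaging the primitives of $\game$ over $\pure_{-\play}\sim\policy_{-\play}(\cdot\mid\rstate)$; relative to this effective \ac{MDP} the claim is exactly the gradient-dominance bound of \citet{AgarwalKLM20} transported to the random-stopping setting. Concretely, I would introduce player $\play$'s state--action value $Q_\play^\policy(\rstate,\pure_\play)$ and advantage $A_\play^\policy(\rstate,\pure_\play) \defeq Q_\play^\policy(\rstate,\pure_\play) - \rvalue_{\play,\rstate}(\policy)$ for the profile $\policy=(\policy_\play,\policy_{-\play})$, and then establish two structural identities adapted to the present model through the Conversion Lemma (\Cref{lem:conversion}).

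The first is a \emph{performance difference} identity,
\[
\rvalue_{\play,\stateDist}(\policyalt_{\play};\policy_{-\play}) - \rvalue_{\play,\stateDist}(\policy_\play;\policy_{-\play}) = \sum_{\rstate\in\states}\visitMeas{\stateDist}{(\policyalt_{\play};\policy_{-\play})}(\rstate)\insum_{\pure_\play}\policyalt_{\play}(\pure_\play\mid\rstate)\,A_\play^\policy(\rstate,\pure_\play),
\]
obtained by telescoping the per-step reward gaps along a trajectory against player $\play$'s state values under $\policy$ and collecting visitation counts via \Cref{lem:conversion}. The second is the policy-gradient identity for the direct (tabular) parameterization,
\[
\nabla_{\policy_\play(\pure_\play\mid\rstate)}\rvalue_{\play,\stateDist}(\policy) = \visitMeas{\stateDist}{\policy}(\rstate)\,Q_\play^\policy(\rstate,\pure_\play),
\]
which follows by differentiating the same trajectory expansion and again invoking the Conversion Lemma. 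Note that both identities feature the \emph{unnormalized} visitation measure $\visitMeas{\stateDist}{\cdot}$, which here plays the role of the effective horizon $1/(1-\discount)$ of the discounted setting.

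Next I would evaluate the right-hand maximum in closed form. Since $\bar\policy_\play$ ranges over the product simplex $\policies_\play = \simplex(\pures_\play)^\states$, the maximization decouples state by state, and the gradient identity together with $\insum_{\pure_\play}\policy_\play(\pure_\play\mid\rstate)\,Q_\play^\policy(\rstate,\pure_\play) = \rvalue_{\play,\rstate}(\policy)$ gives
\[
\max_{\bar\policy_{\play}\in\policies_\play}\braket{\nabla_{\play}\rvalue_{\play,\stateDist}(\policy)}{\bar\policy_{\play}-\policy_\play} = \sum_{\rstate\in\states}\visitMeas{\stateDist}{\policy}(\rstate)\,\max_{\pure_\play}A_\play^\policy(\rstate,\pure_\play).
\]
The crucial point is that each state-wise term $\max_{\pure_\play}A_\play^\policy(\rstate,\pure_\play)$ is \emph{non-negative}, since the largest $Q$-value at $\rstate$ dominates its $\policy_\play$-average $\rvalue_{\play,\rstate}(\policy)$.

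Finally I would assemble the bound. Estimating the inner $\policyalt_{\play}$-expectation in the performance-difference identity by the per-state maximal advantage yields
\[
\rvalue_{\play,\stateDist}(\policyalt_{\play};\policy_{-\play}) - \rvalue_{\play,\stateDist}(\policy_\play;\policy_{-\play}) \leq \sum_{\rstate\in\states}\visitMeas{\stateDist}{(\policyalt_{\play};\policy_{-\play})}(\rstate)\,\max_{\pure_\play}A_\play^\policy(\rstate,\pure_\play),
\]
and it remains only to exchange the visitation measure of the deviation for that of $\policy$. Here the non-negativity established above is exactly what licenses the term-by-term substitution $\visitMeas{\stateDist}{(\policyalt_{\play};\policy_{-\play})}(\rstate)\leq \mismatch\,\visitMeas{\stateDist}{\policy}(\rstate)$, which is guaranteed by the definition of the mismatch coefficient; this produces the factor $\mismatch$ and matches the closed form for the maximum computed above, completing the proof. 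I expect the main obstacle to lie not in the final algebra --- which is routine once the maximal advantage is known to be non-negative --- but in the careful verification of the two identities within the random-stopping model: one must confirm that the absence of an explicit discount factor is compensated by the normalization constant $\normConst{\stateDist}{\policy}$ and that all trajectory sums converge, both of which rest on \Cref{lem:conversion} and the uniform lower bound $\stopPr>0$ on the stopping probability.
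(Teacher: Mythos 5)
Your proposal is correct and follows essentially the same route as the paper's proof: both rest on the performance-difference identity, the tabular policy-gradient identity $\partial\rvalue_{\play,\stateDist}/\partial\policy_\play(\pure_\play\mid\rstate)=\visitMeas{\stateDist}{\policy}(\rstate)\,\bar Q_\play^\policy(\rstate,\pure_\play)$, a bound by the per-state maximal averaged advantage, and the exchange of visitation measures via the mismatch coefficient. Your explicit observation that the nonnegativity of $\max_{\pure_\play}A_\play^\policy(\rstate,\pure_\play)$ is what licenses the measure substitution is a point the paper leaves implicit, but it is the same argument.
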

\begin{remark*}
In the above and throughout our paper, we will write $\nabla_{\play}$ to denote the gradient of the quantity in question with respect to $\policy_{\play}$, \ie when $\policy_{-\play}$ is kept fixed and only $\policy_{\play}$ is varied. 
For concision, we will write $\gvalue_{\play}(\policy) = \nabla_{\play} \rvalue_{\play,\stateDist}(\policy)$ for the individual gradient of player $\play$'s value function, and $\gvalue(\policy) = (\gvalue_{\play}(\policy))_{\play\in\players}$ for the ensemble thereof.
\endenv
\end{remark*}

Thanks to \eqref{eq:GDP}, it is straightforward to check that \ac{FOS} points of $\rvalue$ are Nash.
Formally, as in \cite{daskalakis2020independent,LOPP22,zhang2021gradient}, we have the following characterization:

\begin{restatable}[\Acl{FOS} policies are Nash]{lemma}{FOSNE}
\label{lem:fos2ne}
A policy $\eq = (\eq_{\play})_{\play \in \players} \in \policies$ is Nash if and only if it satisfies the \acl{FOS} condition
\begin{equation}
\label{eq:FOS}
\tag{FOS}
\braket{\gvalue(\eq)}{\policy - \eq}
	\leq 0
	\quad
	\text{for all $\policy\in\policies$}.
\end{equation}
\end{restatable}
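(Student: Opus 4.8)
The plan is to exploit the product structure $\policies = \prod_{\play} \policies_{\play}$ to decouple the joint \eqref{eq:FOS} condition into one first-order condition per player, and then to dispatch the two implications separately: the ``only if'' direction by a routine first-order optimality argument, and the ``if'' direction as an essentially immediate consequence of \Cref{lem:GDP}. First I would record the decoupling. Since $\gvalue(\eq) = (\gvalue_{\play}(\eq))_{\play\in\players}$ and $\policies$ is Cartesian, the pairing splits as $\braket{\gvalue(\eq)}{\policy-\eq} = \sum_{\play\in\players}\braket{\gvalue_{\play}(\eq)}{\policy_{\play} - \eq_{\play}}$, and each summand can be probed independently by freezing the other players at $\eq_{-\play}$. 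Hence \eqref{eq:FOS} is equivalent to the family of per-player inequalities
\[
\braket{\gvalue_{\play}(\eq)}{\policy_{\play} - \eq_{\play}} \leq 0 \quad \text{for all $\play\in\players$ and all $\policy_{\play} \in \policies_{\play}$.}
\]

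For the ``only if'' direction, suppose $\eq$ is Nash, so each $\eq_{\play}$ maximizes the differentiable map $\policy_{\play} \mapsto \rvalue_{\play,\stateDist}(\policy_{\play}; \eq_{-\play})$ over the convex polytope $\policies_{\play} = \simplex(\pures_{\play})^{\states}$. Fixing an arbitrary $\policy_{\play} \in \policies_{\play}$, convexity keeps the segment $\eq_{\play} + t(\policy_{\play} - \eq_{\play})$ inside $\policies_{\play}$ for $t \in [0,1]$; since the restriction of the value function to this segment attains its maximum at $t = 0$, its one-sided derivative at $t = 0^{+}$ is nonpositive, and the chain rule identifies this derivative with $\braket{\gvalue_{\play}(\eq)}{\policy_{\play} - \eq_{\play}}$. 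This yields the per-player condition, hence \eqref{eq:FOS}.

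For the ``if'' direction, the substantive work is already carried by \Cref{lem:GDP}. Assuming \eqref{eq:FOS}, the per-player condition gives $\max_{\bar\policy_{\play} \in \policies_{\play}} \braket{\gvalue_{\play}(\eq)}{\bar\policy_{\play} - \eq_{\play}} \leq 0$ (indeed it equals $0$, as $\bar\policy_{\play} = \eq_{\play}$ is admissible). Instantiating \eqref{eq:GDP} at the profile $\policy = \eq$ then gives, for every unilateral deviation $\policyalt_{\play} \in \policies_{\play}$,
\[
\rvalue_{\play,\stateDist}(\policyalt_{\play}; \eq_{-\play}) - \rvalue_{\play,\stateDist}(\eq_{\play}; \eq_{-\play}) \leq \mismatch \max_{\bar\policy_{\play} \in \policies_{\play}} \braket{\gvalue_{\play}(\eq)}{\bar\policy_{\play} - \eq_{\play}} \leq 0,
\]
which is exactly the \eqref{eq:NP} inequality; since $\play$ was arbitrary, $\eq$ is Nash.

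I do not expect either implication to pose a serious obstacle: the entire nontrivial content is front-loaded into the gradient dominance property, and once that is in hand the ``if'' direction collapses to a one-line substitution. The only points meriting mild care are (i) the differentiability of $\rvalue_{\play,\stateDist}$ needed to legitimize the one-sided derivative in the ``only if'' direction, which is guaranteed by the smoothness (bounded, Lipschitz gradients) of value functions recorded in the preliminaries; and (ii) confirming the convention that $\braket{\gvalue(\eq)}{\policy-\eq}$ is the sum of the per-player pairings, so that the decoupling step is valid. Both are bookkeeping rather than genuine difficulties.
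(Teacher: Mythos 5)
Your proof is correct and follows essentially the same route as the paper's: the nontrivial direction (\eqref{eq:FOS} implies Nash) is dispatched by decoupling the joint condition into per-player conditions and instantiating the gradient dominance property \eqref{eq:GDP} at $\eq$, which is exactly what the paper does. The only difference is that you spell out the Nash-implies-FOS direction via the standard first-order optimality argument for a constrained maximum, a step the paper treats as immediate and leaves implicit.
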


\citet{LOPP22} and \citet{zhang2021gradient} proved a relaxation of the above lemma to the effect that policies that satisfy \eqref{eq:FOS} up to $\eps$ (\ie in lieu of $0$ in the \acs{RHS}) are $\bigoh(\eps)$-Nash.
Going in the other direction, we will consider the following series of refinements of Nash policies which are particularly important from a learning standpoint \cite{LRS19,solan2015stochastic}:

\begin{definition}
\label{def:refine}
Let $\eq = (\eq_{\play})_{\play\in\players} \in \policies$ be a Nash policy.
We then say that:
\begin{itemize}
\item
$\eq$ is \emph{stable} if $\braket{\gvalue(\policy)}{\policy - \eq} < 0$ for all $\policy\neq\eq$ sufficiently close to $\eq$.
\item
$\eq$ is \acli{SOS} if it satisfies the sufficiency condition
\begin{equation}
\label{eq:SOS}
\tag{SOS}
\parens{\policy - \eq}^{\top} \Jac_{\gvalue}(\eq) \parens{\policy - \eq}
	< 0
	\quad
	\text{for al $\policy\in\policies\exclude{\eq}$},
\end{equation}
where $\Jac_{\gvalue}(\eq) = (\nabla_{\playalt} \gvalue_{\play}(\eq))_{\play,\playalt\in\players} = (\nabla_{\playalt} \nabla_{\play} \rvalue_{\play}(\eq))_{\play,\playalt\in\players}$ denotes the Jacobian of $\gvalue$ at $\eq$.
\item
$\eq$ is \emph{deterministic} if it induces a deterministic selection rule $\eq_{\play}\from\states\to\pures_{\play}$ for all $\play\in\players$.
\item
$\eq$ is \emph{strict} if it is deterministic and \eqref{eq:FOS} holds as a strict inequality whenever $\policy\neq\eq$.
\end{itemize}
\end{definition}

\beginrev
\begin{remark}
In the above and what follows, ``sufficiently close'' means that there exists a neighborhood $\nhd$ of $\eq$ in $\policies$ such that the stated inequality holds for all $\policy\in\nhd$.
Unless mentioned otherwise, we will measure distances on $\policies$ relative to the Euclidean norm, but this choice does not impact our results.
\end{remark}
\endedit

Intuitively,
the condition for equilibrium stability is a game-theoretic analogue of first-order KKT sufficiency condition,
while
the condition for second-order stationarity is the second-order version thereof.
In this regard, the distinction between first-order stationary, stable and second-order stationary points is formally analogous to the distinction between critical points, minimizers, and second-order minimum points in optimization.
As for deterministic policies, we should mention that, generically, deterministic policies are also strict, so we will use the two terms interchangeably.%
\footnote{The notion of genericity is stated here in the sense of Baire, \ie the stated property holds for all but a ``meager'' set of games (\ie a countable union of nowhere dense sets in the space of all games).}

Importantly, as we show in \cref{app:solutions}, these refinements admit the following characterizations:

\begin{restatable}{proposition}{StationaryProperties}
\label{prop:var}
Let $\eq = (\eq_{\play})_{\play\in\players} \in \policies$ be a Nash policy.
Then:
\begin{subequations}
\begin{enumerate}
[\itshape a\upshape)]
\item
If $\eq$ is \acl{SOS}, there exists some $\strong>0$ such that
\begin{alignat}{2}
\label{eq:strong}
\braket{\gvalue(\policy)}{\policy - \eq}
	&\leq - \strong \, \norm{\policy - \eq}^{2}
	&\qquad
	&\text{for all $\policy$ sufficiently close to $\eq$}.
\shortintertext{%
\item
If $\eq$ is strict, there exists some $\strong>0$ such that
} 
\label{eq:sharp}
\braket{\gvalue(\policy)}{\policy - \eq}
	&\leq - \strong \, \norm{\policy - \eq}
	&\qquad
	&\text{for all $\policy$ sufficiently close to $\eq$}.
\end{alignat}
\end{enumerate}
\end{subequations}
\end{restatable}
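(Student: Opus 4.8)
The plan is to obtain both bounds from a first-order Taylor expansion of the individual-gradient field $\gvalue$ around $\eq$, with the pointwise conditions \eqref{eq:SOS} and strictness upgraded to \emph{uniform} inequalities by a compactness argument. The structural fact that makes this work is that $\policies$ is a polytope (a product of simplices): the set of normalized feasible directions $\setdef{(\policy-\eq)/\norm{\policy-\eq}}{\policy\in\policies\exclude{\eq}}$ is the intersection of the unit sphere with the (closed, polyhedral) tangent cone $\tcone_{\policies}(\eq)$, hence compact. Because both the quadratic form $v\mapsto v^{\top}\Jac_{\gvalue}(\eq)\,v$ and the linear form $v\mapsto\braket{\gvalue(\eq)}{v}$ are homogeneous, a strict pointwise sign on $\policies\exclude{\eq}$ transfers, via the maximum over this compact set, to a bound that is uniform in the direction and quadratic (resp. linear) in $\norm{\policy-\eq}$.

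For part (a), I would first write, using that $\gvalue$ is continuously differentiable near $\eq$ (which follows from the smoothness of the value functions),
\begin{equation*}
\braket{\gvalue(\policy)}{\policy-\eq}
  = \braket{\gvalue(\eq)}{\policy-\eq}
  + \parens{\policy-\eq}^{\top}\Jac_{\gvalue}(\eq)\parens{\policy-\eq}
  + o\parens*{\norm{\policy-\eq}^{2}}.
\end{equation*}
Since $\eq$ is Nash, \cref{lem:fos2ne} gives $\braket{\gvalue(\eq)}{\policy-\eq}\leq0$, so the first term can simply be discarded. For the quadratic term, \eqref{eq:SOS} together with the compactness of the direction set yields a constant $\strong>0$ with $\parens{\policy-\eq}^{\top}\Jac_{\gvalue}(\eq)\parens{\policy-\eq}\leq-2\strong\,\norm{\policy-\eq}^{2}$ for every $\policy\in\policies$; absorbing the $o(\norm{\policy-\eq}^{2})$ remainder into half of this margin for $\policy$ in a sufficiently small neighborhood of $\eq$ then gives \eqref{eq:strong}.

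For part (b), the gain comes from the \emph{first-order} term rather than the second. Strictness of $\eq$ means precisely that $\braket{\gvalue(\eq)}{\policy-\eq}<0$ for all $\policy\neq\eq$, so the same compactness argument applied to the linear form produces $\strong>0$ with $\braket{\gvalue(\eq)}{\policy-\eq}\leq-2\strong\,\norm{\policy-\eq}$ for all $\policy\in\policies$. Writing $\braket{\gvalue(\policy)}{\policy-\eq}=\braket{\gvalue(\eq)}{\policy-\eq}+\braket{\gvalue(\policy)-\gvalue(\eq)}{\policy-\eq}$ and bounding the second inner product by $\lips\norm{\policy-\eq}^{2}$ via the Lipschitz continuity of $\gvalue$ (with constant $\lips$), I obtain $\braket{\gvalue(\policy)}{\policy-\eq}\leq\norm{\policy-\eq}\parens{-2\strong+\lips\norm{\policy-\eq}}$, which is at most $-\strong\,\norm{\policy-\eq}$ once $\norm{\policy-\eq}\leq\strong/\lips$, establishing \eqref{eq:sharp}.

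The one step requiring genuine care — and the main obstacle — is the uniformization of the pointwise inequalities. The quadratic form in \eqref{eq:SOS} need not be negative definite on the whole ambient space; it is negative only along feasible directions, so one cannot invoke a naive eigenvalue bound. The argument therefore hinges on restricting to $\tcone_{\policies}(\eq)$, verifying that its intersection with the unit sphere is compact (using that $\policies$ is polyhedral, so the cone is closed), and exploiting homogeneity. A secondary point worth checking is that the nonpositive first-order term in (a) does not spoil the estimate — it is harmless precisely because \eqref{eq:FOS} makes it nonpositive — whereas in (b) it is this very term, now \emph{strictly} negative and of first order, that supplies the sharper linear bound.
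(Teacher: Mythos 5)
Your proposal is correct and follows essentially the same route as the paper's proof: both arguments split off the first-order term (nonpositive by \eqref{eq:FOS} in part (a), strictly negative and hence dominant in part (b)) and uniformize the pointwise inequalities \eqref{eq:SOS} and strict-\eqref{eq:FOS} over the compact set $\tcone_{\policies}(\eq)\cap\sphere$ of unit feasible directions. The only differences are cosmetic — you use a Peano-remainder expansion at $\eq$ plus the Lipschitz bound on $\gvalue$, whereas the paper uses an integral-form Taylor expansion together with continuity of $\hmat\mapsto\max\setdef{\tvec^{\top}\hmat\tvec}{\tvec\in\tcone(\eq),\,\norm{\tvec}=1}$ (part (a)) and joint continuity of $\braket{\gvalue(\policy)}{\tvec}$ (part (b)) — and both implementations are sound.
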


In view of all the above, we get the following string of implications for equilibria in generic games:
\begin{equation}
\textrm{strict}
	/ \textrm{deterministic}
	\implies \textrm{\ac{SOS}}
	\implies \textrm{stable}
	\implies \textrm{FOS}
	= \textrm{Nash}
\end{equation}
For posterity, we should clarify here that, due to the highly complicated structure of the game's value functions, it is not trivial to construct a concrete example where \eqref{eq:strong} holds but \eqref{eq:sharp} does not.
Examples of strict Nash policies abound in the literature \cite{LRS19,solan2015stochastic}, but we are not otherwise aware of an argument that could be used to close the gap between \eqref{eq:strong} and \eqref{eq:sharp}.
In view of this, our analysis will treat both cases concurrently (with the obvious anticipation that more refined solution concepts should enjoy stronger convergence guarantees).

\section{Policy gradient methods}
\label{sec:model}

We now proceed to describe our general model for episodic learning in stochastic games.
To that end, we will consider a framework where agents follow a specific policy $\curr$ within each episode, and update it from one episode to the next with the objective of increasing their individual rewards.
Formally, our approach will adhere to the following inter-episode sequence of events:
\begin{enumerate}
\item
At the beginning of each episode $\run=\running$, every agent $\play\in\players$ chooses a policy $\policy_{\play,\run} \in \policies_{\play}$.
\item
Within the $\run$-th episode, each player executes their chosen policy $\policy_{\play,\run}$, inducing in this way an intra-episode trajectory of play $\traj_{\run} = (\rstate^{(\run)}_{\time}, \pure^{(\run)}_{\time}, \reward^{(\run)}_{\time})_{\time \leq \stopTime(\traj_{\run})}$.
\item
Once the episode terminates, agents update their policies and the process repeats.
\end{enumerate}

In terms of feedback, we will treat several models, depending on what type of information is available to the agents during play.
More precisely, we will focus on the generic \ac{PG} template
\begin{equation}
\label{eq:PG}
\tag{PG}
\next
	= \proj_{\policies}(\curr + \curr[\step] \curr[\signal])
\end{equation}
where:
\begin{enumerate}
\item
$\curr = (\state_{\play,\run})_{\play\in\players} \in \policies$ denotes the player's policy profile at each episode $\run=\running$
\item
$\curr[\signal] = (\signal_{\play,\run})_{\play\in\players} \in \dspace$ is an estimate for the agents' inidividual policy gradients.
\item
$\proj_{\policies}\from \dspace \to \policies$ denotes the Euclidean projection to the agents' policy space $\policies$.
\item
$\curr[\step] > 0$ is the method's step-size,
for which we will assume throughout that $\sum_{\run} \curr[\step] = \infty$;
typically, \eqref{eq:PG} is run with a step-size of the form $\curr[\step] = \step/(\run + \runoff)^{\pexp}$ for some $\step>0$, $\runoff\geq0$ and $\pexp\geq0$.
\end{enumerate}

Regarding the gradient signal $\curr[\signal]$, we will decompose it as
\begin{equation}
\label{eq:signal}
\curr[\signal]
	= \vecfield(\curr) + \curr[\noise] + \curr[\bias]
\end{equation}
where
\begin{equation}
\label{eq:bias-noise}
\curr[\noise]
	= \curr[\signal] - \exof{\curr[\signal] \given \curr[\filter]}
	\quad
	\text{and}
	\quad
\curr[\bias]
	= \exof{\curr[\signal] \given \curr[\filter]} - \vecfield(\curr).
\end{equation}
In the above, we treat $\curr$, $\run=\running$, as a stochastic process on some complete probability space $\probspace$, and we write $\curr[\filter] \defeq \filter(\init,\dotsc,\curr) \subseteq \filter$ for the history (adapted filtration) of $\curr$ up to \textendash\ and including \textendash\ stage $\run$.
By definition, $\exof{\curr[\noise] \given \curr[\filter]} = 0$ and $\curr[\bias]$ is $\curr[\filter]$-measurable, so $\curr[\noise]$ can be intepreted as a random, zero-mean error relative to $\vecfield(\curr)$, whereas $\curr[\bias]$ captures all systematic (non-zero-mean) errors.
To make this precise, we will further assume that $\curr[\bias]$ and  $\curr[\noise]$  are bounded as
\begin{equation}
\label{eq:errorbounds}
\exof{\dnorm{\curr[\bias]} \given \curr[\filter]}
	\leq \curr[\bbound]
	\qquad
	\text{ and }
	\qquad
\exof{\dnorm{\curr[\noise]}^{2} \given \curr[\filter]}
	\leq \curr[\sdev]^{2}
\end{equation}
where the sequences $\curr[\bbound]$ and $\curr[\sdev]$ 
, $\run=\running$, are to be construed as deterministic upper bounds on the bias, fluctuations, and magnitude of the gradient signal $\curr[\signal]$.

Depending on these bounds, a gradient signal with $\curr[\bbound]=0$ will be called \emph{unbiased},
and an unbiased signal with $\curr[\sdev] = 0$  will be called \emph{perfect}.
More generally, we will assume that the above statistics are bounded as
\begin{equation}
\label{eq:schedule}
\curr[\bbound]
	= \bigoh(1/\run^{\bexp})
	\qquad
	\text{and}
	\qquad
\curr[\sdev]
	= \bigoh(\run^{\sexp})
\end{equation}
for some $\bexp,\sexp>0$ which depend on the specific model under consideration.
For concreteness, we describe below three basic models that adhere to the above template for $\curr[\signal]$ in order of decreasing information requirements:

\begin{model}
[Full gradient information]
\label{mod:full}
The first model we will consider assumes that agents observe their \emph{full policy gradients}, \ie
\begin{equation}
\label{eq:signal-full}
\curr[\signal]
	= \gvalue(\curr)
\end{equation}
implying in particular that $\curr[\noise] = \curr[\bias] = 0$.
This model is fully deterministic across episodes (though intra-episode play remains stochastic).
In particular, it tacitly assumes that agents know the game (and can observe their opponents' policies) so as to calculate the full gradients of their individual value functions $\rvalue_{\play,\stateDist}$, \cf \cite{zhang2021gradient,agarwal2021theory,LOPP22} and references therein.
\endenv
\end{model}

\begin{model}
[Learning with stochastic gradients]
\label{mod:stoch}
A relaxation of the above model which is particularly relevant for applications to deep reinforcement learning concerns the case where the player have access to stochastic policy gradients \cite{zhang2020global}, \ie unbiased gradient estimates of the form
\begin{equation}
\label{eq:signal-stoch}
\curr[\signal]
	= \gvalue(\curr)
	+ \curr[\noise]
\end{equation}
with $\exof{\curr[\noise] \given \curr[\filter]} = 0$ (so we can formally take $\bexp = \infty$ and $\sexp = 0$ in \cref{eq:schedule}
 above).
\endenv
\end{model}


\begin{figure*}[t]
\begin{minipage}{0.49\textwidth}
\begin{algorithm}[H]
\caption{\reinforce}
\label{alg:reinforce}
\begin{algorithmic}[1]
	\State 
	\textbf{Input:} $\estpolicy \in \policies, \traj = (\rstate_{\time}, \pure_{\time}, \reward_{\time})_{\time \le \stopTime(\traj)} \in\mathcal{T}$\vskip 3pt
	\For{$\play = 1,\hdots, \nPlayers$}\vskip 3pt
	\State 
		$\rewards_\play\parens{\traj} \leftarrow \sum_{\time=\tstart}^{\stopTime\parens{\traj}}\reward_{\play,\time}$\vskip 3pt
	\State 
		$\logtrick_\play\parens{\traj} \leftarrow \sum_{\time=\tstart}^{\stopTime\parens{\traj}} \nabla_{\play} \parens*{\log\estpolicy_\play\parens{\action_{\play,\time}|\rstate_\time}}$\vskip 3pt
	\State $\logvalue_\play \leftarrow \rewards_\play\parens{\traj}\cdot\logtrick_\play\parens{\traj}$\vskip 3pt
	\EndFor\vskip 3pt
	\State
	\Return $\braces{\logvalue_\play}_{\play\in\players}$
	\end{algorithmic}
\end{algorithm}
\end{minipage}
\hfill
\begin{minipage}{0.49\textwidth}
\begin{algorithm}[H]
\caption{\greedy}
\label{alg:greedy}
\begin{algorithmic}[1]
	\State
	\textbf{Input:} $\policy_{\start}, \{\step_{\run}\}_{\run \in \N}, \{\expar_{\run}\}_{\run \in \N}$\vskip 3pt
	\For{$\run = \start,\afterstart,\dots$}\vskip 3pt
	\State
	$\estpolicy_{\run}\leftarrow (1-\expar_\run) \policy_{\run} + \frac{\expar_\run}{\abs{\pures}}$\vskip 3pt
	\State
	Sample $\traj_{\run} \sim \MDP(\estpolicy_{\run}|\rstate_\tstart)$\vskip 3pt
	\State
	$\logvalue_{\run} \leftarrow \reinforce(\estpolicy_{\run},\traj_{\run})$\vskip 3pt
	\State
	$\policy_{\run+1} \leftarrow \proj_{\policies}\parens*{\policy_{\run} + \step_{\run}\logvalue_{\run}}$\vskip 3pt
\EndFor \vskip 5pt
\end{algorithmic}
\end{algorithm}
\end{minipage}
\end{figure*}


\begin{model}
[Value-based learning]
\label{mod:value}
The last model we will consider concerns the case where agents only have access to their instantaneous rewards and need to reconstruct their individual gradients based on this information.
A widely used method to achieve this is via the \reinforce subroutine, which we describe in pseudocode form in \cref{alg:reinforce}.
In words, when employing \reinforce, each agent $\play\in\play$ commits to a sampling policy $\hat\policy_{\play}\in\policies_{\play}$ and executes it in an episode of the stochastic game in play.
Then, at the end of the episode, players gather the total reward $\rewards_\play\parens{\traj} \leftarrow \sum_{\time=\tstart}^{\stopTime\parens{\traj}}\reward_{\play,\time}$ associated to the intra-episode trajectory of play $\traj$, and they estimate their policy gradients via the so-called ``log-trick''  \cite{williams1992simple} as
\begin{equation}
\label{eq:logtrick}
\logvalue_\play
	= \rewards_\play\parens{\traj}
		\cdot \sum\nolimits_{\time=\tstart}^{\stopTime\parens{\traj}} \nabla_{\play}
			\parens*{\log\estpolicy_\play\parens{\action_{\play,\time}|\rstate_\time}}.
\end{equation}
\Cref{lem:reinforce} below provides the vital statistics of the \reinforce estimator:

\begin{restatable}{lemma}{ReinforceLemma}
\label{lem:reinforce}
Suppose that each agent $\play\in\players$ follows a stationary policy $\policy_{\play} \in \policies_{\play}$.
Then:
\begin{subequations}
\label{eq:reinforce}
\begin{flalign}
\quad
\text{\itshape a\upshape)}
	\quad
	&\ex_{\traj \sim \MDP}\bracks*{\reinforce(\policy)}
		= \gvalue(\policy)
		&
	\\
\quad
\text{\itshape b\upshape)}
	\quad
	&\ex_{\traj \sim \MDP} \bracks*{\norm{\reinforce_\play(\policy)-\gvalue_\play(\policy)}^2}
		\leq \dfrac{24\nPures_{\play}}{\kappa_\play \stopPr^4}
		&
\end{flalign}
\end{subequations}
where $\kappa_\play = \min_{\rstate\in\states,\pure_{\play}\in\pures_{\play}} \policy_\play(\pure_\play \vert \rstate)$.
\end{restatable}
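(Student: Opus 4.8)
Part~(\emph{a}) is the unbiasedness of the score-function (``log-trick'') estimator, and Part~(\emph{b}) is a variance bound; I would treat them separately, deriving (\emph{b}) as a bound on the raw second moment $\ex_{\traj\sim\MDP}[\norm{\reinforce_\play(\policy)}^{2}]$ once (\emph{a}) has established that $\reinforce_\play(\policy)$ is centered at $\gvalue_\play(\policy)$. Throughout, the only structural facts I need are that rewards lie in $[-1,1]$, that the direct (tabular) parametrization gives $\partial\log\policy_\play(\action\vert\rstate)/\partial\policy_\play(\action'\vert\rstate') = \one\{\action=\action',\rstate=\rstate'\}/\policy_\play(\action\vert\rstate)$, and that $\stopPr>0$ forces a uniform geometric tail on the episode length.

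\textbf{Part (\emph{a}).} First I would write out the likelihood of a trajectory $\traj$ under the profile $\policy$, which factorizes into the initial distribution $\stateDist(\rstate_\tstart)$, the action-selection factors $\prod_{\time}\prod_{\playalt}\policy_\playalt(\action_{\playalt,\time}\vert\rstate_\time)$, and the transition/stopping factors that depend on the realized joint actions but not on $\policy_\play$ as a parameter. Differentiating $\log\prob_\policy(\traj)$ with respect to $\policy_\play$, every factor except player $\play$'s own action probabilities is annihilated, giving $\nabla_\play\log\prob_\policy(\traj)=\sum_{\time=\tstart}^{\stopTime(\traj)}\nabla_\play\log\policy_\play(\action_{\play,\time}\vert\rstate_\time)=\logtrick_\play(\traj)$. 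Then the standard identity $\nabla_\play\rvalue_{\play,\stateDist}(\policy)=\nabla_\play\ex_\traj[\rewards_\play(\traj)]=\ex_\traj[\rewards_\play(\traj)\,\nabla_\play\log\prob_\policy(\traj)]=\ex_\traj[\reinforce_\play(\policy)]$ yields the claim. The one point requiring justification is the interchange of $\nabla_\play$ with the (infinite) trajectory sum/expectation; I would settle this by dominated convergence, using finiteness of $\states,\pures$ together with the geometric bound $\prob(\stopTime(\traj)\geq k)\leq(1-\stopPr)^{k}$ (so all relevant moments of $\stopTime$ are finite), exactly the well-posedness already invoked for \cref{lem:conversion}.

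\textbf{Part (\emph{b}).} By (\emph{a}) the estimator is unbiased, so $\ex[\norm{\reinforce_\play(\policy)-\gvalue_\play(\policy)}^{2}]\leq\ex[\norm{\reinforce_\play(\policy)}^{2}]$, and I factor $\norm{\reinforce_\play(\policy)}^{2}=\rewards_\play(\traj)^{2}\,\norm{\logtrick_\play(\traj)}^{2}$ with $\rewards_\play(\traj)^{2}\leq(\stopTime(\traj)+1)^{2}$ since each reward is in $[-1,1]$. Writing the per-step score $w_\time\defeq\nabla_\play\log\policy_\play(\action_{\play,\time}\vert\rstate_\time)$, each $w_\time$ is a single-coordinate vector of magnitude $1/\policy_\play(\action_{\play,\time}\vert\rstate_\time)$ with conditional mean zero given $\rstate_\time$; the decisive estimate is $\ex[\norm{w_\time}^{2}\mid\rstate_\time]\leq\nPures_\play/\kappa_\play$, obtained from $1/\policy_\play^{2}\leq(1/\kappa_\play)(1/\policy_\play)$ followed by averaging $\sum_{\action}\policy_\play(\action\vert\rstate_\time)/\policy_\play(\action\vert\rstate_\time)=\nPures_\play$. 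This is precisely what produces the first power of $\kappa_\play$ (rather than $\kappa_\play^{2}$) and the factor $\nPures_\play$. To assemble the pieces I would combine the mean-zero (martingale-difference) orthogonality of $\{w_\time\}$ with a Cauchy--Schwarz step to reach an expression of the form $\ex[(\stopTime(\traj)+1)^{3}\sum_{\time}\norm{w_\time}^{2}]$, and bound the horizon moments via geometric domination, $\ex[(\stopTime(\traj)+1)^{4}]\leq 24/\stopPr^{4}$ --- the source of both the constant $24$ and the power $\stopPr^{4}$.

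\textbf{Main obstacle.} The crux is the entanglement between the random stopping time $\stopTime(\traj)$ and the reward/score sums: the horizon-weight factors and the action-averaging that delivers $\nPures_\play/\kappa_\play$ cannot be separated naively, since conditioning on $\stopTime(\traj)$ biases the path and $\stopTime(\traj)$ itself depends on the sampled actions through the transitions. I expect the real work to lie in decoupling these cleanly --- either by an optional-stopping/tower-property argument that compensates $\sum_{\time\leq\stopTime}\norm{w_\time}^{2}$ against its predictable quadratic variation and then bounds the residual horizon weight, or by a careful Cauchy--Schwarz trade-off that trades one power of $(\stopTime+1)$ for the orthogonality of the score increments --- so that the final bound simultaneously exhibits $\kappa_\play$ to the first power, the multiplicity $\nPures_\play$, and the clean constant $24/\stopPr^{4}$.
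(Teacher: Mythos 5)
Your proposal follows essentially the same route as the paper's proof: part~(\emph{a}) is the same likelihood-factorization/log-trick argument (the paper makes the interchange of $\nabla_\play$ and the trajectory sum implicit rather than via dominated convergence), and part~(\emph{b}) reaches the identical intermediate quantity $\ex\bracks{(\stopTime(\traj)+1)^{3}\sum_{\time}\norm{w_\time}^{2}}$ through the crude estimate $\norm{\sum_{\time\le\stopTime}w_\time}^{2}\le(\stopTime+1)\sum_{\time\le\stopTime}\norm{w_\time}^{2}$, the same per-step bound $1/\policy_\play^{2}\le(1/\kappa_\play)(1/\policy_\play)$ followed by averaging over actions to produce $\nPures_\play/\kappa_\play$, and the same geometric tail bound $\ex\bracks{(\stopTime(\traj)+1)^{4}}\le 24/\stopPr^{4}$. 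The ``main obstacle'' you flag is handled in the paper exactly by the Cauchy--Schwarz trade-off you describe as your second option (no martingale orthogonality of the scores is invoked), so the two arguments coincide.
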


Therefore, if \reinforce is executed at $\hat\policy \gets \curr$ at each episode $\run=\running$, we will have
\begin{equation}
\exof{\hat\gvalue_{\play,\run}}
	= \gvalue_{\play}(\curr)
	\qquad
	\text{and}
	\qquad
\exof{\dnorm{\noise_{\play,\run}}^{2} \given \curr[\filter]}
	\leq \dfrac{24\nPures_{\play}}{\stopPr^4\min_{\rstate\in\states,\pure_{\play} \in \pures_\play}\policy_{\play,\run}(\pure_\play \vert \rstate)}.
\end{equation}
In particular, this means that we will always have $\curr[\bbound] = 0$ for the bias of the estimator, but its variance could be unbounded if $\curr$ gets close to the boundary of $\policies$.
To avoid this, \reinforce can be paired with an explicit exploration step that modifies the sampling policy of the $\run$-th episode to
\begin{equation}
\estpolicy_{\play,\run}
	= (1-\expar_{\run}) \policy_{\play,\run}
		+ \expar_{\run}\unif_{\pures_{\play}}
		\quad
		\text{\revise{for all $\rstate\in\states$}}
\end{equation}
\ie $\estpolicy_{\play,\run}$ is the mixture between $\policy_{\play,\run}$ and the uniform distribution $\unif_{\pures_{\play}}$ over $\pures_{\play}$.
The resulting algorithm is known as \greedy;
for a pseudocode representation, see \cref{alg:greedy}.

Importantly, by calling \reinforce at $\curr[\estpolicy]$ instead of $\curr[\policy]$, $\curr[\signal]$ becomes biased (because of the difference between $\curr[\estpolicy]$ and $\curr$), but its variance is bounded;
in particular, by invoking \cref{lem:reinforce}, we have
\begin{equation}
\exof{\dnorm{\bias_{\play,\run}} \given \curr[\filter]}
	\leq G\curr[\expar]
	\qquad
	\text{and}
	\qquad
\exof{\dnorm{\noise_{\play,\run}}^{2} \given \curr[\filter]}
	\leq \frac{24\nPures_{\play}^2}{\expar_{\run}\stopPr^{4}}
\end{equation}
where $G$ is a constant that depends on the smoothness of $\rvalue$ and the cardinalities of $\pures$ and $\states$.%
\footnote{Specifically, from  \cref{lem:smoothness} we know that $\norm{v_i(\hat{\pi}_n) - v_i(\pi_n)} \leq 3\sqrt{\nPures}/\zeta^3 \cdot \sum_{\playalt}\sqrt{\nPures_{\playalt}} \cdot \norm{\hat{\pi}_{j,n}-\pi_{j,n}}$. Moreover,  $|\pi_{i,n}(\pure \mid s)-\hat{\pi}_{i,n}(\pure \mid s)| \leq \varepsilon_n$ for all $\rstate\in\states, \pure \in \mathcal{A}_i$, so $\norm{\pi_{i,n} - \hat{\pi}_{i,n}} \leq \sqrt{\nStates\nPures_{\play}} \varepsilon_n$.
Combining the above, it follows that we can take $G=3\nPlayers\nPures^{3/2}\sqrt{\nStates} \big/ \zeta^3$.}
In this way, \cref{alg:greedy} can be seen as a special case of \eqref{eq:PG} with $\curr[\bbound] = \bigoh(\curr[\expar])$ and $\curr[\sdev]^{2} = \bigoh(1/\curr[\expar])$.
\endenv
\end{model}

\section{Convergence analysis and results}
\label{sec:results}

We are now in a position to state and discuss our main results.
For convenience, we will present our results in order of increasing structure, starting with stable policies, and then moving on to \acl{SOS} and deterministic Nash policies.
All proofs are deferred to the appendix.

\subsection{Asymptotic convergence to stable Nash policies}

Our first convergence result concerns Nash policies that satisfy the stability requirement $\braket{\gvalue(\policy)}{\policy-\eq} < 0$ of \cref{def:refine}.
In this case, we have the following guarantee:

\begin{restatable}{theorem}{stable}
\label{thm:stable}
Let $\eq$ be a stable Nash policy, and let $\curr$ be the sequence of play generated by \eqref{eq:PG} with step-size $\curr[\step] = \step/(\run + \runoff)^{\pexp}$, $\pexp \in (1/2,1]$,
and policy gradient estimates such that $\pexp + \bexp >1$ and $\pexp - \sexp > 1/2$ as per \eqref{eq:schedule}.
Then there exists a neighborhood $\nhd$ of $\eq$ in $\policies$ such that, for any given $\conf>0$, we have
\begin{equation}
\label{eq:primal-local}
\probof{\text{$\curr$ converges to $\eq$} \given \init\in\nhd}
	\geq 1-\conf
\end{equation}
provided that $\step$ is small enough \textpar{or $\runoff$ large enough} relative to $\conf$.
\end{restatable}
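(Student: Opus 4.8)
The plan is to view \eqref{eq:PG} as a stochastic approximation scheme and to certify local convergence through the energy function $\energy_{\run} \defeq \tfrac{1}{2}\norm{\curr - \eq}^{2}$. Since $\eq \in \policies$ and the Euclidean projection $\proj_{\policies}$ is nonexpansive, expanding $\next = \proj_{\policies}(\curr + \curr[\step]\curr[\signal])$ around $\eq = \proj_{\policies}(\eq)$ and substituting the decomposition $\curr[\signal] = \gvalue(\curr) + \curr[\noise] + \curr[\bias]$ of \eqref{eq:signal} yields
\begin{equation}
\energy_{\run+1}
  \le \energy_{\run}
    + \curr[\step]\braket{\gvalue(\curr)}{\curr - \eq}
    + \curr[\step]\braket{\curr[\noise]}{\curr - \eq}
    + \curr[\step]\braket{\curr[\bias]}{\curr - \eq}
    + \tfrac{1}{2}\curr[\step]^{2}\norm{\curr[\signal]}^{2}.
\end{equation}
By the stability of $\eq$, there is a radius $\radius > 0$ such that $\braket{\gvalue(\policy)}{\policy - \eq} \le 0$ for all $\policy \in \ball(\eq,\radius) \cap \policies$, so the first correction term is a nonpositive dissipation term on this ball. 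The remaining three terms split into a martingale increment $\curr[\step]\braket{\curr[\noise]}{\curr - \eq}$ (zero conditional mean, since $\exof{\curr[\noise] \given \curr[\filter]} = 0$), a bias term controlled by $\curr[\step]\curr[\bbound]\,\norm{\curr - \eq}$, and a second-order term whose conditional expectation is $\bigoh(\curr[\step]^{2}(1 + \curr[\sdev]^{2}))$ thanks to the boundedness of $\gvalue$ on the compact set $\policies$ and to \eqref{eq:errorbounds}.

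The heart of the argument is a \emph{confinement} step. Taking the starting neighborhood $\nhd \defeq \ball(\eq, \radius/2) \cap \policies$ and introducing the stopping time $\proper \defeq \inf\setdef{\run}{\norm{\curr - \eq} > \radius}$, the dissipation term is nonpositive for $\run < \proper$, so the stopped energy obeys $\energy_{\run \wedge \proper} \le \energy_{\start} + \mart_{\run \wedge \proper} + \toterr_{\run \wedge \proper}$, where $\mart_{\run}$ aggregates the martingale increments and $\toterr_{\run}$ aggregates the bias and second-order contributions. Under the stated schedule, the condition $\pexp + \bexp > 1$ gives $\sum_{\run} \curr[\step]\curr[\bbound] < \infty$, while $\pexp - \sexp > 1/2$ (together with $\pexp > 1/2$) gives $\sum_{\run} \curr[\step]^{2}\curr[\sdev]^{2} < \infty$; hence both $\exof{\toterr_{\infty}}$ and the quadratic variation $\sum_{\run} \curr[\step]^{2}\exof{\braket{\curr[\noise]}{\curr - \eq}^{2}} \le \radius^{2}\sum_{\run}\curr[\step]^{2}\curr[\sdev]^{2}$ are finite and, crucially, scale down like $\step$ (respectively $\step^{2}$), so they can be made arbitrarily small by shrinking $\step$ or enlarging $\runoff$. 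Doob's maximal inequality for $\mart$ and Markov's inequality for $\toterr$ then bound $\probof{\sup_{\run}(\mart_{\run \wedge \proper} + \toterr_{\run \wedge \proper}) \ge \tfrac{1}{2}\radius^{2} - \energy_{\start}}$ by $\conf$ once $\step$ is small enough (or $\runoff$ large enough) relative to $\conf$, recalling $\energy_{\start} \le \tfrac{1}{8}\radius^{2}$ on $\nhd$. On the complementary event the stopped energy never attains $\tfrac{1}{2}\radius^{2}$, which forces $\proper = \infty$; that is, the iterates remain in $\ball(\eq,\radius) \cap \policies$ with probability at least $1 - \conf$.

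It remains to upgrade confinement to convergence. On the event $\braces{\proper = \infty}$ the recursion is a genuine Robbins--Siegmund near-supermartingale $\exof{\energy_{\run+1} \given \curr[\filter]} \le \energy_{\run} - \curr[\step] g_{\run} + \xi_{\run}$, with $g_{\run} \defeq -\braket{\gvalue(\curr)}{\curr - \eq} \ge 0$ and $\sum_{\run}\xi_{\run} < \infty$ almost surely; hence $\energy_{\run}$ converges a.s. and $\sum_{\run}\curr[\step]\,g_{\run} < \infty$ a.s. Since $\sum_{\run}\curr[\step] = \infty$, this forces $\liminf_{\run} g_{\run} = 0$, and because stability makes $\braket{\gvalue(\policy)}{\policy - \eq}$ strictly negative on the punctured ball, compactness of each annulus $\setdef{\policy}{\thres \le \norm{\policy - \eq} \le \radius}$ gives $g_{\run} \ge \const(\thres) > 0$ whenever $\norm{\curr - \eq} \ge \thres$. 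Therefore $\liminf_{\run}\norm{\curr - \eq} = 0$, so $\liminf_{\run}\energy_{\run} = 0$; combined with the a.s. convergence of $\energy_{\run}$ this yields $\energy_{\infty} = 0$, i.e., $\curr \to \eq$. Intersecting with the confinement event establishes \eqref{eq:primal-local}.

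The main obstacle is precisely the confinement step: the dissipative drift is only available while the iterates stay near $\eq$, so no global supermartingale is at hand and one must resolve the circular dependence (confinement needs the drift sign, the drift sign needs confinement) through the stopped process, while quantifying the escape probability uniformly in time via the maximal inequality and tracking how the aggregate error budget depends on $\step$ and $\runoff$. By contrast, once confinement holds the asymptotic convergence is a comparatively standard Robbins--Siegmund argument, and no rate is extracted here because mere stability supplies only strict negativity of the drift rather than the quadratic lower bound \eqref{eq:strong} used for the \ac{SOS} rate.
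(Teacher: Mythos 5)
Your proposal is correct and follows essentially the same route as the paper's proof: a quasi-Lyapunov inequality for $\energy(\policy)=\tfrac12\norm{\policy-\eq}^2$ via nonexpansiveness of the projection, a high-probability confinement step obtained by showing the aggregated martingale, bias, and second-order errors can be made smaller than a fixed tolerance by shrinking $\step$ or enlarging $\runoff$, and then an almost-supermartingale convergence argument combined with a $\liminf$/subsequence extraction using $\sum_\run\curr[\step]=\infty$ and strict negativity of the drift on punctured annuli. The only differences are cosmetic implementation choices (a stopping time with Doob's maximal inequality in place of the paper's inductive event argument on $\curr[\toterr]=\curr[\mart]^2+\curr[\submart]$, and Robbins\textendash Siegmund in place of Gladyshev's lemma plus a separate subsequence lemma), which do not change the substance of the argument.
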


\begin{restatable}{corollary}{corstable}
\label{cor:stable}
Suppose that \crefrange{mod:full}{mod:value} are run with a step-size of the form $\curr[\step] = \step/(\run+\runoff)^{\pexp}$, $\pexp>1/2$, and if applicable,
an exploration parameter $\curr[\mix] = \mix/(\run+\runoff)^{\rexp}$ such that $1-\pexp < \rexp < 2\pexp-1$.
Then:
\begin{itemize}
\item
For \cref{mod:full,mod:stoch}:
the conclusions of \cref{thm:stable} hold as stated.
\item
For \cref{mod:value}:
the conclusions of \cref{thm:stable} hold as long as $\pexp>2/3$.
\end{itemize}
\end{restatable}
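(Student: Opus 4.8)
The plan is to observe that \cref{cor:stable} is purely a matter of checking that each of the three feedback models fits the hypotheses of \cref{thm:stable}; since the theorem already does all the analytic heavy lifting, the only task is to read off the bias and noise exponents $\bexp$ and $\sexp$ of \eqref{eq:schedule} for each model and verify the two inequalities $\pexp + \bexp > 1$ and $\pexp - \sexp > 1/2$ demanded by the theorem. I would organize the argument model-by-model, in order of increasing noise, and invoke \cref{thm:stable} as a black box at the end of each case.

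For \cref{mod:full,mod:stoch}, the gradient signal is unbiased, so $\curr[\bbound] = 0$ and we may formally take $\bexp = \infty$; likewise the variance is uniformly bounded ($\curr[\sdev] = \bigoh(1)$), so $\sexp = 0$. The two conditions of \cref{thm:stable} then collapse to $\pexp + \infty > 1$ and $\pexp > 1/2$, both of which hold under the standing assumption $\pexp > 1/2$. Hence \cref{thm:stable} applies verbatim and its conclusion carries over unchanged, which is exactly the first bullet of the corollary.

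For \cref{mod:value}, I would invoke the bias/variance estimates derived for the $\eps$-greedy variant of \reinforce in \cref{mod:value} (themselves a consequence of \cref{lem:reinforce}), namely $\curr[\bbound] = \bigoh(\curr[\mix])$ and $\curr[\sdev]^{2} = \bigoh(1/\curr[\mix])$. Substituting the schedule $\curr[\mix] = \mix/(\run+\runoff)^{\rexp}$ yields $\curr[\bbound] = \bigoh(1/\run^{\rexp})$ and $\curr[\sdev] = \bigoh(\run^{\rexp/2})$, i.e.\ $\bexp = \rexp$ and $\sexp = \rexp/2$ in the notation of \eqref{eq:schedule}. The theorem's requirements $\pexp + \bexp > 1$ and $\pexp - \sexp > 1/2$ then become
\begin{equation}
\rexp > 1 - \pexp
\qquad\text{and}\qquad
\rexp < 2\pexp - 1,
\end{equation}
which is precisely the admissible window $1 - \pexp < \rexp < 2\pexp - 1$ stated in the corollary.

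To close, I would note that this interval is nonempty exactly when $1 - \pexp < 2\pexp - 1$, that is, when $\pexp > 2/3$, and that this is the sole origin of the stronger step-size requirement in the value-based case. I do not anticipate a substantive obstacle: the corollary is a bookkeeping exercise matching the exponents of each model against the abstract conditions of \cref{thm:stable}, and the only place any genuine care is needed is the correct identification of $\bexp$ and $\sexp$ for the \reinforce estimator. Since those statistics are already recorded in \cref{mod:value}, no new estimate has to be produced, and the $\pexp > 2/3$ threshold emerges automatically from the unavoidable bias--variance tension governed by the exploration parameter $\curr[\mix]$.
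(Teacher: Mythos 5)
Your proposal is correct and follows essentially the same route as the paper: identify $\bexp=\infty$, $\sexp=0$ for \cref{mod:full,mod:stoch} and $\bexp=\rexp$, $\sexp=\rexp/2$ for \cref{mod:value}, then match these against the conditions $\pexp+\bexp>1$ and $\pexp-\sexp>1/2$ of \cref{thm:stable}, with the $\pexp>2/3$ threshold arising from the nonemptiness of the window $1-\pexp<\rexp<2\pexp-1$. The only (cosmetic) omission is that the paper also notes the standing requirement $\sum_{\run}\curr[\step]=\infty$ forces $\pexp\leq1$, which you leave implicit.
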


We note here that \cref{thm:stable}
provides a trajectory convergence guarantee which is otherwise quite difficult to obtain even in structured stochastic games.
For example, if we zoom in on the class of stochastic potential (or min-max) games, the existing guarantees in the literature concern the ``best iterate'' of the algorithm, \cf \cite{LOPP22,zhang2021gradient} and references therein.
Because of this, said guarantees do not apply to the actual trajectory of play generated by \eqref{eq:PG};
this makes them less suitable for agent-based learning where the players involved are learning ``as they go'', as opposed to \emph{simulating} the game in order to approximately compute an equilibrium policy offline.

We should also note that the convergence guarantees of \cref{thm:stable} hold locally with arbitrarily high probability.
Without further assumptions, it is not possible to obtain global trajectory convergence guarantees that hold with probability $1$, even in single-state games \textendash\ that is, the case of learning in finite normal form games.
\beginrev
The reason for this locality is twofold:
First, equilibrium policies are not unique in general, and gradient-based dynamics may also admit non-equilibrium attractors, such as limit cycles and the like \cite{HMC21,MPP18,MLZF+19,MHC22}.
As a result, in the presence of multiple equilibria/attractors, the best one can hope for is a local equilibrium convergence result, conditioned on the basin of attraction of said equilibrium (as per \cref{thm:stable}).

The second obstruction to a global, unconditional convergence result is probabilistic in nature, and has to do with the randomness that enters the learning process (\eg in the estimation of policy gradients via the \reinforce).
In this case, no matter how close one starts to an equilibrium policy, there is always a finite, non-zero probability that an unlucky realization of the noise can drive the process away from its basin, possibly never to return.
This issue can only be overcome in games where $\policies$ is partitioned (up to a set of measure zero) into basins of attraction of equilibrium policies.
However, this can only occur in games with a sufficiently strong global structure, like potential stochastic games, two-player zero-sum games and the like;
in complete generality, locality cannot be lifted, even in single-state problems \cite{FVGL+20,GVM21}.
\endedit

\subsection{Convergence to \acl{SOS} policies}

Albeit valuable as an asymptotic convergence guarantee, \cref{thm:stable} does not provide an indication of how long it will take players to actually converge to a Nash policy.
Of course, in full generality, it is not plausible to expect to be able to derive such a convergence rate because the stability requirement provides no indication on how fast the players' policy gradients stabilize near a solution.
This kind of estimate is provided by the second-order sufficient condition \eqref{eq:SOS}, which allows us to establish sufficient control over the sequence of play as indicated by the following theorem.

\begin{restatable}{theorem}{SOS}
\label{thm:SOS}
Let $\eq$ be a \acl{SOS} policy,
let $\basin$ be a neighborhood of $\eq$ such that \eqref{eq:strong} holds on $\basin$,
and
let $\curr$ be the sequence of play generated by \eqref{eq:PG} with step-size $\curr[\step] = \step/(\run + \runoff)^{\pexp}$, $\pexp \in (1/2,1]$, and policy gradient estimates such that $\pexp + \bexp > 1$ and $\pexp - \sexp > 1/2$ as per \eqref{eq:schedule}.
Then:
\begin{enumerate}
\item
There exists a neighborhood $\nhd$ of $\eq$ in $\policies$ such that, for any confidence level $\conf>0$, the event
\begin{equation}
\label{eq:good}
\event
	= \braces{\text{$\curr \in \basin$ for all $\run=\running$}}
\end{equation}
occurs with probability
\(
\probof{\event \given \init\in\nhd} \geq 1-\conf
\)
if $\runoff$ is large enough relative to $\conf$.
\item
The sequence $\curr$ converges to $\eq$ \acl{wp1} on $\event$;
in particular, we have
\begin{equation}
\label{eq:prob-converge}
\probof{\text{$\curr$ converges to $\eq$} \given \init\in\nhd}
	\geq 1-\conf
\end{equation}
if $\runoff$ is large relative to $\conf$.
Moreover, conditioned on $\event$ and taking $\qexp = \min\{\bexp,\pexp - 2\sexp\}$, we have
\begin{equation}
\label{eq:conv-rate}
\exof{\norm{\curr - \eq}^{2} \given \event}
	=
	\begin{cases}
	\bigoh(1/\run^{2\strong\step})
		&\quad
		\text{if $\pexp=1$ \, and \, $2\strong\step<\qexp$},
		\\
	\bigoh(1/\run^{\qexp})
		&\quad
		\text{otherwise}.
	\end{cases}
\end{equation}
\end{enumerate}
\end{restatable}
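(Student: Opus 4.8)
The plan is to read \eqref{eq:PG} as a stochastic-approximation scheme for the variational inequality \eqref{eq:FOS} and to run an energy-drift argument around the Lyapunov function $\energy_\run \defeq \norm{\curr - \eq}^{2}$. Since $\eq \in \policies$ is a fixed point of the nonexpansive operator $\proj_\policies$, applying $\proj_\policies$ to $\curr + \curr[\step]\curr[\signal]$, expanding the square, and substituting the decomposition \eqref{eq:signal} gives
\[ \energy_{\run+1} \leq \energy_\run + 2\curr[\step]\braket{\gvalue(\curr)}{\curr-\eq} + 2\curr[\step]\braket{\curr[\noise]+\curr[\bias]}{\curr-\eq} + \curr[\step]^{2}\norm{\curr[\signal]}^{2}. \]
Conditioning on $\curr[\filter]$ annihilates the zero-mean term $\braket{\curr[\noise]}{\curr-\eq}$; on the neighborhood $\basin$ I invoke the second-order sufficiency estimate \eqref{eq:strong} to get $\braket{\gvalue(\curr)}{\curr-\eq}\leq-\strong\energy_\run$, and Cauchy--Schwarz together with the bias/second-moment bounds \eqref{eq:errorbounds} (plus boundedness of $\gvalue$ on the compact set $\policies$) controls the rest. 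This yields the key drift inequality
\[ \exof{\energy_{\run+1}\given\curr[\filter]} \leq (1-2\strong\curr[\step])\,\energy_\run + \toterr_\run \qquad\text{on }\braces{\curr\in\basin}, \]
with deterministic remainder $\toterr_\run = \bigoh(\curr[\step]\curr[\bbound]) + \bigoh(\curr[\step]^{2}(1+\curr[\sdev]^{2}))$; crucially I bound $\sqrt{\energy_\run}\leq\diam\basin$ inside the bias term, which keeps its order at $\curr[\step]\curr[\bbound]$ (exponent $\bexp$) rather than $\curr[\step]\curr[\bbound]^{2}$. Under the standing hypotheses $\pexp+\bexp>1$ and $\pexp-\sexp>1/2$ both pieces are summable, since $\curr[\step]\curr[\bbound]=\bigoh(\run^{-(\pexp+\bexp)})$ and $\curr[\step]^{2}\curr[\sdev]^{2}=\bigoh(\run^{-2(\pexp-\sexp)})$, and the tail $\sum_{\runalt\geq\run}\toterr_\runalt$ can be driven to zero by enlarging $\runoff$.

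For the confinement claim \eqref{eq:good} I introduce the exit time $\tau\defeq\inf\braces{\run:\norm{\curr-\eq}\geq\radius}$, where $\radius$ is chosen so the ball of radius $\radius$ about $\eq$ lies in $\basin$, and I study the stopped process $\energy_{\run\wedge\tau}$. Dropping the favorable $-2\strong\curr[\step]\energy_\run$, the drift shows that $\energy_{\run\wedge\tau}$ is a nonnegative almost-supermartingale, $\exof{\energy_{(\run+1)\wedge\tau}\given\curr[\filter]}\leq\energy_{\run\wedge\tau}+\toterr_\run$; subtracting the compensator $\sum_{\runalt<\run}\toterr_\runalt$ produces a genuine supermartingale, and Ville's maximal inequality gives
\[ \probof*{\sup_{\run}\energy_{\run\wedge\tau}\geq\radius^{2}} \leq \frac{\energy_1 + \sum_{\runalt}\toterr_\runalt}{\radius^{2}}. \]
Since $\comp{\event}=\braces{\tau<\infty}\subseteq\braces{\sup_\run\energy_{\run\wedge\tau}\geq\radius^{2}}$, taking $\nhd$ small (so $\energy_1=\norm{\init-\eq}^{2}$ is small) and $\runoff$ large (so $\sum_\runalt\toterr_\runalt$ is small) relative to $\conf\radius^{2}$ forces the right-hand side below $\conf$, which is exactly $\probof{\event\given\init\in\nhd}\geq1-\conf$.

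On $\event$ the iterates never leave $\basin$, so the contracting drift holds for every $\run$; the Robbins--Siegmund theorem then gives that $\energy_\run$ converges \as and $\sum_\run\curr[\step]\energy_\run<\infty$ \as on $\event$, and since $\pexp\leq1$ makes $\sum_\run\curr[\step]=\infty$ the limit must be $0$, i.e. $\curr\to\eq$ \as on $\event$. Together with the previous step this proves \eqref{eq:prob-converge}. For the rate the obstacle is that $\event$ is not $\curr[\filter]$-measurable, so the recursion cannot be conditioned on it directly; I sidestep this by tracking $h_\run\defeq\exof{\energy_\run\one\braces{\tau>\run}}$. Because $\braces{\tau>\run+1}\subseteq\braces{\tau>\run}\in\curr[\filter]$ and $\energy_{\run+1}\geq0$, taking expectations of the drift yields the clean scalar recursion $h_{\run+1}\leq(1-2\strong\curr[\step])h_\run+\toterr_\run$, while $\exof{\energy_\run\given\event}\leq h_\run/\probof{\event}\leq h_\run/(1-\conf)$. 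It then remains to solve this recursion with a standard Chung-type lemma: the forcing balance is $\toterr_\run/\curr[\step]=\bigoh(\run^{-\qexp})$ with $\qexp=\min\braces{\bexp,\pexp-2\sexp}$ (the $\curr[\step]^{2}$ piece is lower order as $\qexp\leq\pexp$), so for $\pexp<1$ one gets $h_\run=\bigoh(\run^{-\qexp})$, whereas for $\pexp=1$ the homogeneous factor $\prod_{\runalt}(1-2\strong\step/\runalt)\asymp\run^{-2\strong\step}$ competes with the forcing and yields $\bigoh(\run^{-2\strong\step})$ when $2\strong\step<\qexp$ and $\bigoh(\run^{-\qexp})$ otherwise --- precisely \eqref{eq:conv-rate}.

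I expect the crux to be the confinement step: upgrading the purely local drift into a high-probability estimate while simultaneously absorbing the unbounded variance of the \reinforce estimator. This is exactly where the exponent conditions earn their keep --- $\pexp+\bexp>1$ and $\pexp-\sexp>1/2$ are precisely what render $\sum_\run\toterr_\run$ finite and tunable through $\runoff$, so that the confinement probability can be pushed past $1-\conf$. Once confinement is secured, the almost-sure convergence and the rate follow mechanically from the contracting drift, the only remaining care being the measurability workaround via $h_\run$.
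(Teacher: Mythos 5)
Your overall architecture \textendash\ energy drift with the \eqref{eq:strong} contraction, high\textendash probability confinement, Robbins\textendash Siegmund for almost\textendash sure convergence on $\event$, the indicator process $h_\run = \exof{\energy_\run\one\{\tau>\run\}}$ to dodge the non\textendash measurability of $\event$, and Chung's lemma for the rate \textendash\ is the same as the paper's, and the rate bookkeeping (the forcing exponent $\qexp=\min\{\bexp,\pexp-2\sexp\}$, the case split at $2\strong\step$ vs.\ $\qexp$, and the final division by $\probof{\event}\geq 1-\conf$) matches the paper's argument line for line. Your Robbins\textendash Siegmund route to almost\textendash sure convergence on $\event$ is a slightly cleaner packaging of what the paper does with Fatou's lemma plus Gladyshev's lemma; both are fine.

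The one place where you genuinely diverge is the confinement step, and as written it proves a weaker statement than the theorem claims. You apply Ville's maximal inequality to the stopped energy itself and obtain
\(
\probof{\sup_\run \energy_{\run\wedge\tau}\geq\radius^2}\leq(\energy_1+\sum_\runalt \toterr_\runalt)/\radius^2 .
\)
The initial energy $\energy_1$ sits in the numerator and is \emph{not} tunable by $\runoff$, so to push the bound below $\conf$ for arbitrary $\conf$ you must shrink $\nhd$ with $\conf$. The theorem's quantifier order is the reverse: a \emph{single} neighborhood $\nhd$ works for every $\conf$, with only $\runoff$ (or $\step$) adjusted. The paper achieves this by applying the maximal\textendash inequality argument not to the energy but to the pure error process $\toterr_\run = \mart_\run^2+\submart_\run$ (squared martingale plus submartingale compensator), whose expectation contains no initial\textendash condition term and is therefore driven to zero entirely by $\runoff$; the initial energy only enters additively in the deterministic bound $\energy_{\run+1}\leq\energy_1+\sqrt{\toterr_\run}+\toterr_\run\leq 2\thres+\sqrt{\thres}<\radius$, not in the probability estimate. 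Your argument is repaired by the same separation: bound the probability that the accumulated noise-plus-bias contribution exceeds $\radius^2-\thres$ (with $\thres$ a fixed bound on $\energy_1$ determined by $\nhd$), rather than the probability that the full stopped energy exceeds $\radius^2$. Everything downstream of confinement is unaffected.
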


\begin{restatable}{corollary}{corSOS}
\label{cor:SOS}
Suppose that \crefrange{mod:full}{mod:value} are run with a step-size of the form $\curr[\step] = \step/(\run+\runoff)^{\pexp}$, $\pexp>1/2$, and if applicable, an exploration parameter $\curr[\mix] = \mix/(\run+\runoff)^{\pexp/2}$.
Then:
\begin{itemize}
\item
For \cref{mod:full,mod:stoch}:
the conclusions of \cref{thm:SOS} hold with $\qexp=\pexp$;
in particular, \eqref{eq:conv-rate} gives an $\bigoh(1/\run)$ rate of convergence if $\pexp=1$ and $2\strong\step > \qexp$.
\item
For \cref{mod:value}:
the conclusions of \cref{thm:SOS} hold for $\pexp>2/3$ with $\qexp = \pexp/2$;
in particular, \eqref{eq:conv-rate} gives an $\bigoh(1/\sqrt{\run})$ rate of convergence if $\pexp=1$ and $2\strong\step > \qexp$.
\end{itemize}
\end{restatable}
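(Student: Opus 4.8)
The plan is to derive \cref{cor:SOS} directly from \cref{thm:SOS} by instantiating the bias and noise exponents $(\bexp, \sexp)$ of \eqref{eq:schedule} for each of the three feedback models, checking the admissibility conditions $\pexp + \bexp > 1$ and $\pexp - \sexp > 1/2$, and then reading off the rate through $\qexp = \min\{\bexp,\, \pexp - 2\sexp\}$ in \eqref{eq:conv-rate}. All the analytic content lives in \cref{thm:SOS}; what remains is bookkeeping to match each model's deterministic error bounds $\curr[\bbound] = \bigoh(1/\run^{\bexp})$ and $\curr[\sdev] = \bigoh(\run^{\sexp})$ against its hypotheses.

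For \cref{mod:full} the signal is exact, so $\curr[\noise] = \curr[\bias] = 0$ and I may take $\bexp = \infty$, $\sexp = 0$; for \cref{mod:stoch} the signal is unbiased with bounded variance, so again $\bexp = \infty$ and $\sexp = 0$. In both cases the admissibility conditions collapse to $\pexp > 1/2$, which holds by assumption, and $\qexp = \min\{\infty,\, \pexp\} = \pexp$. Substituting into \eqref{eq:conv-rate} yields the stated $\bigoh(1/\run^{\pexp})$ bound, which at $\pexp = 1$ with $2\strong\step > \qexp$ becomes $\bigoh(1/\run)$.

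The substantive case is \cref{mod:value}, where the exploration step of \greedy introduces a genuine bias-variance tradeoff. From the bounds established there — namely $\curr[\bbound] = \bigoh(\curr[\mix])$ and $\curr[\sdev]^{2} = \bigoh(1/\curr[\mix])$ — the schedule $\curr[\mix] = \mix/(\run + \runoff)^{\pexp/2}$ gives $\bexp = \pexp/2$ and $\sexp = \pexp/4$. I then verify the admissibility conditions: $\pexp + \bexp = 3\pexp/2 > 1$ and $\pexp - \sexp = 3\pexp/4 > 1/2$, both equivalent to $\pexp > 2/3$, which is precisely the restriction asserted in the corollary. Finally $\qexp = \min\{\pexp/2,\; \pexp - 2 \cdot \pexp/4\} = \min\{\pexp/2,\, \pexp/2\} = \pexp/2$, so \eqref{eq:conv-rate} gives the $\bigoh(1/\run^{\pexp/2})$ rate, specializing at $\pexp = 1$ to $\bigoh(1/\sqrt{\run})$.

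The crux — and the only step requiring thought rather than substitution — is the calibration of the exploration exponent in the value-based model. Writing $\curr[\mix] = \mix/(\run+\runoff)^{\rexp}$, one has $\bexp = \rexp$ and $\pexp - 2\sexp = \pexp - \rexp$, so $\qexp = \min\{\rexp,\, \pexp - \rexp\}$ is maximized exactly at $\rexp = \pexp/2$: any slower decay of $\curr[\mix]$ inflates the variance penalty (shrinking $\pexp - 2\sexp$), while any faster decay inflates the bias (shrinking $\bexp$), and in either direction $\qexp$ decreases. Checking that this balanced choice simultaneously clears both admissibility constraints — which is what pins down the threshold $\pexp > 2/3$ — is the heart of the argument; everything else is immediate from \cref{thm:SOS}.
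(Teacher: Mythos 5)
Your proposal is correct and follows essentially the same route as the paper's own proof: instantiate $(\bexp,\sexp)$ for each model ($\bexp=\infty$, $\sexp=0$ for Models~1--2; $\bexp=\pexp/2$, $\sexp=\pexp/4$ for Model~3), verify $\pexp+\bexp>1$ and $\pexp-\sexp>1/2$ to obtain the $\pexp>2/3$ threshold, and read off $\qexp=\min\{\bexp,\pexp-2\sexp\}$ in \eqref{eq:conv-rate}. Your added observation that $\rexp=\pexp/2$ is the $\qexp$-maximizing balance between bias and variance is a nice (and correct) motivation the paper leaves implicit.
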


\beginrev
\begin{remark}
Getting an explicit estimate for the constant in the $\bigoh(\cdot)$ guarantee of \cref{thm:SOS} is quite involved but, up to logarithmic and subleading factors, Chung's lemma \cite{Chu54,Pol87} can be used to show that
\begin{enumerate*}
[\itshape a\upshape)]
\item
if $2\strong\step > \qexp$, it scales as $(\Const_{\bias}+\Const_{\sigma})/[(2\strong\step - \qexp)(1-\conf)]$ where $\Const_{\bias} = \sup_{\run} \curr[\step]\curr[\bbound]$ and $\Const_{\sigma} = \sup_{\run} \curr[\step]^2\curr[\noisevar]$;
\item
if $2\strong\step = \qexp$, it scales as 
$(\Const_{\bias}+\Const_{\sigma})(1+ \max\{(2\strong\step)^2,4\strong\step\})/(1-\delta)$; 
and  
\item if $2\strong\step < \qexp$ as $(\Const_{\bias}+\Const_{\sigma})(1+ \max\{(2\strong\step)^2,4\strong\step\})/[(\qexp -2\strong\step)(1-\delta)]$.
\end{enumerate*}
\end{remark}
\endedit

Besides providing
a general framework for achieving trajectory convergence, \cref{thm:SOS} gives the rates of convergence of the sequence of play to the Nash policy in question.
In particular, with this result in hand, one can confidently argue about the distance of the iterates of \eqref{eq:PG} from equilibrium in a series of different environments.
More to the point, this convergence guarantee allows the algorithm designer to adapt the parameters of the learning process according to the complexity and limitations of the environment, a feature which further highlights the significance of this result.

We should also note the delicate interplay between the method's step-size and the achieved convergence rate.
In the case of \cref{mod:full}, \cref{cor:SOS} suggests a step-size of the form $\curr[\step] = \Theta(1/\run)$, leading to a $\bigoh(1/\run)$ convergence rate.
As we show in the appendix, this rate can be improved:
in the deterministic case with perfect gradient information, \eqref{eq:PG} with a suitably chosen constant step-size achieves a \emph{geometric} convergence rate, \ie $\norm{\curr - \eq} = \bigoh(\exp(-\rho\run))$ for some $\rho>0$ (\cf \cref{prop:geometric} in \cref{app:SOS}).
By contrast, in the case of \cref{mod:stoch}, the $\bigoh(1/\run)$ rate we provide cannot be improved, even if the quadratic minorant \eqref{eq:strong} that characterizes \ac{SOS} policies holds \emph{globally} \textendash\ and this because the learning process is running against standard lower bounds from convex optimization \cite{Nes04,Bub15}.

Perhaps the most significant guarantee from a practical point of view is the $\bigoh(1/\sqrt{\run})$ convergence rate attained in \cref{mod:value} (\cf \cref{alg:reinforce,alg:greedy}).
This guarantee amounts to a $\bigoh(1/\run^{1/4})$ convergence rate in terms of the (non-squared) distance to equilibrium which, mutatis mutandis, represents a notable improvement over the $\bigoh(1/\run^{1/6})$ guarantee of \citet{LOPP22} (expressed in norm values).
Of course, the latter guarantee is global \textendash\ because the focus of \cite{LOPP22} is stochastic \emph{potential} games \textendash\ but it also concerns the ``best iterate'' of the process (not its ``last iterate''), so the two results are not immediately comparable.
However, a useful ``best-of-both-worlds'' heuristic that can be inferred by the combination of these works is that, given a budget of training episodes, \cref{alg:greedy} can be run with a constant step-size as per \cite{LOPP22} for a sufficient fraction of this budget, and then with a $\bigoh(1/\run)$ ``cooldown'' schedule for the rest.
In this way, after an aggressive ``exploration'' phase, the algorithm's $\bigoh(1/\run^{1/4})$ rate would kick in and supply faster stabilization to an \ac{SOS} policy.

\subsection{Convergence to deterministic Nash policies}

Our last series of results concerns the rate of convergence to deterministic Nash policies in generic stochastic games.
As we discussed in \cref{sec:preliminaries},
deterministic Nash policies also satisfy  \eqref{eq:SOS}, so the rate of convergence of \eqref{eq:PG} to such policies can be harvested directly from \cref{thm:SOS}.
However, as we show below, a simple projection tweak in \eqref{eq:SOS} can improve this rate dramatically.

The tweak in question is inspired by the geometry of $\policies$ around a deterministic policy:
by definition, such policies are corner points of $\policies$, so any consistent drift towards them will cause $\curr$ to hit the boundary of $\policies$ in finite time.
Of course, under \eqref{eq:PG}, the process may rebound from the boundary and return to the interior of $\policies$ if the policy gradient estimate is not particularly good at a given iteration of the algorithm.
However, if we replace the projection step of \eqref{eq:PG} with a ``lazy projection'' in the spirit of \citet{Zin03}, the aggregation of gradient steps will eventually push the process far inside the normal cone of $\policies$ at $\eq$, so rebounds of this type can no longer occur.

Formally, we will consider the following \acdef{LPG} scheme:
\begin{equation}
\label{eq:LPG}
\tag{\ac{LPG}}
\next[\dstate]
	= \curr[\dstate] + \curr[\step]\curr[\signal]
	\qquad
\next[\policy]
	= \proj_{\policies}(\next[\dstate])
\end{equation}
where $\curr[\dstate] = (\dstate_{\play,\run})_{\play\in\players} \in \dspace$ is an auxiliary variable that maintains an aggregate of gradient steps \emph{before} projecting them back to $\policies$.
We then have the following convergence result:

\begin{restatable}{theorem}{strict}
\label{thm:strict}
Let $\curr$ be the sequence of play under \eqref{eq:LPG} with step-size and policy gradient estimates such that $\pexp + \bexp >1$ and $\pexp - \sexp > 1/2$ as per \eqref{eq:schedule}.
If $\eq$ is a deterministic Nash policy, there exists an unbounded open set $\dbasin\subseteq\dspace$ of initializations such that, for any $\conf>0$, we have
\begin{equation}
\label{eq:lazy}
\probof{\text{$\curr$ converges to $\eq$} \given \init[\dstate] \in \dbasin}
	\geq 1-\conf,
\end{equation}
provided that $\step>0$ is small enough.
Moreover, conditioned on this event, $\curr$ converges to $\eq$ at a finite number of iterations, \ie there exists some $\run_{0}$ such that $\curr = \eq$ for all $\run\geq\run_{0}$.
\end{restatable}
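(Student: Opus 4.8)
The plan is to exploit that a \emph{deterministic} Nash policy $\eq$ is, by definition, a vertex of the polytope $\policies = \prod_{\play}\simplex(\pures_{\play})^{\states}$, so its normal cone $\ncone_{\policies}(\eq)$ is a full-dimensional convex cone and the Euclidean projection obeys $\proj_{\policies}(\dstate) = \eq$ if and only if $\dstate \in \mathcal{L} \defeq \eq + \ncone_{\policies}(\eq)$. Writing $w \defeq \gvalue(\eq)$, the variational characterization of Nash policies (\cref{lem:fos2ne}, \ie \eqref{eq:FOS}) says exactly that $w \in \ncone_{\policies}(\eq)$. I would take the basin to be the unbounded open cone $\dbasin \defeq \eq + \intr\ncone_{\policies}(\eq) = \intr\mathcal{L}$. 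The reason the \emph{lazy} scheme is indispensable is that the dual variable $\curr[\dstate]$ is never projected, so it aggregates gradient steps freely and can drift arbitrarily deep into $\mathcal{L}$; by contrast, the vanilla iterate of \eqref{eq:PG} rebounds off $\bd\policies$ whenever a single estimate is inaccurate. The invariance engine of the argument is a one-line cone fact: since $\ncone_{\policies}(\eq)$ is closed under addition, if $\ball(p,r) \subseteq \mathcal{L}$ and $s \geq 0$, then $\ball(p + sw, r) \subseteq \eq + (\ncone_{\policies}(\eq) + \ncone_{\policies}(\eq)) \subseteq \mathcal{L}$, \ie displacing any point of ``depth'' $r$ by a nonnegative multiple of $w$ preserves depth at least $r$.

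With the geometry in place, I would unroll the lazy recursion using \eqref{eq:signal}--\eqref{eq:bias-noise} as $\curr[\dstate] = \init[\dstate] + \sum_{\runalt < \run}\iter[\step]\gvalue(\iter) + \sum_{\runalt<\run}\iter[\step]\iter[\noise] + \sum_{\runalt<\run}\iter[\step]\iter[\bias]$ and control the two error sums uniformly in $\run$. The noise partial sums $N_{\run} \defeq \sum_{\runalt<\run}\iter[\step]\iter[\noise]$ form a martingale with $\exof{\dnorm{N_\run}^2} \leq \sum_{\runalt}\iter[\step]^2\iter[\sdev]^2$, and the hypothesis $\pexp - \sexp > 1/2$ is precisely what makes this series converge, since $\iter[\step]^2\iter[\sdev]^2 = \bigoh(\runalt^{2\sexp - 2\pexp})$ with $2\pexp - 2\sexp > 1$; Doob's maximal inequality applied to the submartingale $\dnorm{N_\run}^2$ then gives $\probof{\sup_\run \dnorm{N_\run} > M} \leq \Const_\sigma \step^2/M^2$ with $\Const_\sigma$ independent of $\step$. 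Likewise, $\pexp + \bexp > 1$ forces $\sum_\runalt \iter[\step]\iter[\bbound] = \Const_{\bias}\step < \infty$, so Markov's inequality bounds the bias partial sums uniformly as well. A union bound then produces an event $\event$ with $\probof{\event} \geq 1-\conf$ on which the total perturbation $\sup_\run \dnorm{N_\run} + \sum_\runalt \iter[\step]\dnorm{\iter[\bias]}$ is at most $\step\,K(\conf)$ for some $K(\conf)$ that does not depend on $\step$. This is exactly where the unbounded variance of the \reinforce estimator is tamed: it enters only through the schedule $\iter[\sdev] = \bigoh(\runalt^{\sexp})$, and summability does the rest.

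The proof closes by a bootstrap induction on $\event$. Fix $\init[\dstate] \in \dbasin$, let $r>0$ be its depth in $\mathcal{L}$, and choose $\step$ small enough that $\step\,K(\conf) < r$. I claim $\curr[\dstate] \in \mathcal{L}$ — equivalently $\curr[\policy] = \eq$ — for every $\run$. This holds at $\run=1$ because $\init[\dstate] \in \dbasin \subseteq \mathcal{L}$; assuming it through step $\run$, every gradient observed so far equals $\gvalue(\iter) = \gvalue(\eq) = w$, so $\next[\dstate] = \init[\dstate] + \big(\sum_{\runalt\le\run}\iter[\step]\big)w + (\text{perturbation})$. The displacement $\init[\dstate] + \big(\sum_{\runalt\le\run}\iter[\step]\big)w$ retains depth at least $r$ by the cone-additivity fact, while on $\event$ the perturbation has norm $< r$, whence $\next[\dstate] \in \mathcal{L}$ and the induction continues. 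Thus, conditioned on $\event$, we get $\curr = \eq$ for \emph{all} $\run$ (so $\run_0 = 1$), which is a fortiori finite-time convergence, and $\probof{\text{$\curr$ converges to $\eq$} \given \init[\dstate] \in \dbasin} \geq \probof{\event} \geq 1-\conf$.

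The step I expect to be the main obstacle is the uniform-in-$\run$ control of the error sums with a constant $K(\conf)$ genuinely \emph{decoupled} from $\step$: the bias is only bounded in conditional $L^1$ and the \reinforce noise has variance that blows up near $\bd\policies$, so the maximal inequality must be weighted so that shrinking $\step$ actually shrinks the perturbation tube below the fixed depth $r$ of the initialization. Getting this decoupling right is what closes the bootstrap and, with it, delivers finite-time convergence despite the noise.
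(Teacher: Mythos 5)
Your proposal is sound in its core mechanism, but it proves the theorem by a genuinely different route than the paper, and the two arguments have different content. The paper works in the ``score--difference'' coordinates $\dstate_{\pure}-\dstate_{\pure^{*}}$: it uses strictness of $\eq$ (via \cref{lem:basin of drift}) to extract a uniform payoff gap $c>0$ on a whole \emph{primal} neighborhood $\nhd_{c}$ of $\eq$, shows via Doob's maximal inequality that the projected iterates remain in $\nhd_{c}$ with probability at least $1-\conf$, and then lets the accumulated drift $-c\sum_{\runalt\le\run}\iter[\step]\to-\infty$ push the score differences past the projection threshold of \cref{lem:KKT}; the finite hitting time $\run_{0}$ is essentially the first $\run$ with $\sum_{\runalt\le\run}\iter[\step]\gtrsim(1-M_{c})/c$, and the basin contains initializations whose induced policies are \emph{not yet} equal to $\eq$. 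You instead start inside the set $\eq+\ncone_{\policies}(\eq)$ of dual points that \emph{already} project to $\eq$ and prove absorption with $\run_{0}=1$: the deterministic part of the update is $\gvalue(\eq)\in\ncone_{\policies}(\eq)$ by \eqref{eq:FOS}, so it only pushes the dual variable deeper into the cone, and the uniformly controlled perturbation cannot undo a fixed initial depth. Your version is cleaner (the induction is decoupled from the definition of the good event) and needs only that $\eq$ is deterministic and Nash rather than strict; but it buys less, since it says nothing about attraction from policies near but distinct from $\eq$, which is the substantive content of the paper's argument and of the remark quantifying $\run_{0}$.

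One genuine (though repairable) gap: the set $\dbasin=\eq+\intr\ncone_{\policies}(\eq)$ is too large as stated, because it contains points of arbitrarily small depth $r$, while your admissible step-sizes must satisfy $\step<r/K(\conf)$. No single $\step$ then works uniformly over $\dbasin$, which the theorem's quantifier order (``there exists $\dbasin$ \dots\ for any $\conf$ \dots\ provided $\step$ is small enough'') requires; the paper avoids this by building a fixed margin $M_{1}$ into its basin. The fix is to replace your $\dbasin$ by the fixed-margin subcone $\setdef{\dstate}{\dist\parens{\dstate,\bd(\eq+\ncone_{\policies}(\eq))}>r_{0}}$ for some $r_{0}>0$, which is still open and unbounded. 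Everything else \textemdash\ the normal-cone characterization of the projection, closure of the cone under addition of $\gvalue(\eq)$, and the $\step$-proportional uniform bounds on the noise and bias sums obtained from $\pexp-\sexp>1/2$ and $\pexp+\bexp>1$ \textemdash\ checks out.
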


\begin{corollary}
\label{cor:strict}
Suppose that \crefrange{mod:full}{mod:value} are run with parameters $\curr[\step] = \step/\run^{\pexp}$, $\pexp\in(1/2,1]$, and if applicable, $\curr[\mix] = \mix/\run^{\rexp}$ with $1-\pexp < \rexp < 2\pexp-1$.
Then the conclusions of \cref{thm:strict} hold.
\end{corollary}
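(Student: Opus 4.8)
The plan is to treat \cref{cor:strict} purely as an instantiation of \cref{thm:strict}: the theorem already delivers finite-time convergence under the abstract hypotheses $\pexp+\bexp>1$ and $\pexp-\sexp>1/2$ (together with the standing assumption $\sum_\run\curr[\step]=\infty$), so all that remains is to read off the bias/noise exponents $\bexp,\sexp$ from \eqref{eq:schedule} for each of \cref{mod:full,mod:stoch,mod:value} under the prescribed schedules $\curr[\step]=\step/\run^\pexp$ and $\curr[\mix]=\mix/\run^\rexp$, and to check that the stated constraints on $\pexp,\rexp$ are precisely what makes those two inequalities hold. I would reprove none of the convergence dynamics here; the content is a bookkeeping verification of exponents feeding into \cref{thm:strict}.

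First, for the noiseless and unbiased models I would observe that the error bounds built into the models force the exponents to their extreme values. For \cref{mod:full} the signal is perfect ($\curr[\noise]=\curr[\bias]=0$), and for \cref{mod:stoch} it is unbiased with bounded variance; in both cases one may take $\bexp=+\infty$ and $\sexp=0$ in \eqref{eq:schedule}, exactly as the paper notes when introducing those models. Then $\pexp+\bexp=+\infty>1$ and $\pexp-\sexp=\pexp>1/2$ hold automatically for every $\pexp\in(1/2,1]$, while $\sum_\run\curr[\step]=\infty$ since $\pexp\le1$. The hypotheses of \cref{thm:strict} are thus met with no further restriction, and there is no ``if applicable'' exploration parameter to tune.

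Next, for the value-based model I would compute the exponents from the REINFORCE-with-exploration statistics derived in \cref{mod:value} via \cref{lem:reinforce}, namely $\curr[\bbound]=\bigoh(\curr[\mix])$ and $\curr[\sdev]^{2}=\bigoh(1/\curr[\mix])$. Substituting $\curr[\mix]=\mix/\run^\rexp$ gives $\curr[\bbound]=\bigoh(\run^{-\rexp})$ and $\curr[\sdev]=\bigoh(\run^{\rexp/2})$, i.e.\ $\bexp=\rexp$ and $\sexp=\rexp/2$. The two abstract inequalities then read $\pexp+\rexp>1$ and $\pexp-\rexp/2>1/2$, which rearrange to $\rexp>1-\pexp$ and $\rexp<2\pexp-1$ respectively — exactly the window $1-\pexp<\rexp<2\pexp-1$ imposed in the corollary. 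With these in hand, together with $\sum_\run\curr[\step]=\infty$, \cref{thm:strict} applies verbatim and yields the finite-time convergence claim for \cref{mod:value} as well.

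The calculation is routine, so the only genuine subtlety — and the point I would be careful to flag — is twofold. First, the variance bound enters \eqref{eq:schedule} through $\curr[\sdev]$, not $\curr[\sdev]^2$, so the correct exponent is $\sexp=\rexp/2$ rather than $\rexp$; getting this square-root right is what makes the upper constraint $2\pexp-1$ (instead of $\pexp-1/2$) come out correctly. Second, the admissible window $(1-\pexp,2\pexp-1)$ for the exploration exponent is nonempty only when $1-\pexp<2\pexp-1$, i.e.\ $\pexp>2/3$; thus, although the corollary states $\pexp\in(1/2,1]$ in full generality, the ``if applicable'' clause tacitly forces $\pexp>2/3$ in the \cref{mod:value} case, mirroring the analogous restriction appearing in \cref{cor:stable,cor:SOS}. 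I would make this dependence explicit so the reader sees why the blanket range $\pexp\in(1/2,1]$ is harmless for the first two models but effectively constrained for the third.
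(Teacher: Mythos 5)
Your proposal is correct and follows essentially the same route as the paper's own proof: for \cref{mod:full,mod:stoch} take $\bexp=\infty$, $\sexp=0$ so the hypotheses of \cref{thm:strict} hold for any $\pexp\in(1/2,1]$, and for \cref{mod:value} read off $\bexp=\rexp$, $\sexp=\rexp/2$ from $\curr[\bbound]=\bigoh(\curr[\mix])$ and $\curr[\sdev]=\bigoh(1/\sqrt{\curr[\mix]})$, whence $\pexp+\bexp>1$ and $\pexp-\sexp>1/2$ reduce exactly to $1-\pexp<\rexp<2\pexp-1$. Your observation that this window is nonempty only for $\pexp>2/3$ matches the paper's conclusion that $\pexp\in(2/3,1]$ in the value-based case, and your care about $\sexp=\rexp/2$ (the exponent of $\curr[\sdev]$, not $\curr[\sdev]^{2}$) is precisely the right bookkeeping.
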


\beginrev
\begin{remark}
Getting an explicit bound for $\run_{0}$ is quite complicated, but the last part of the proof of \cref{thm:strict} shows that $\run_{0}$ scales in terms of the parameters of the game and the algorithm as $n_0 = \bigoh\parens[\Big]{\parens[\big]{\frac{M\nStates\nPures}{\const\step}}^{1/(1-\pexp)}}$ where $\const>0$ measures the minimum payoff difference between equilibrium and non-equilibrium strategies at $\eq$, $M$ is a measure of the initial distance from $\eq$, and $\nStates$ and $\nPures$ is the number of states and pure strategies respectively.
\end{remark}
\endedit

\Cref{thm:strict} \textendash\ and, by extension, \cref{cor:strict} \textendash\ are fairly unique because they provide a guarantee for convergence to an \emph{exact} \acl{NE} in a \emph{finite} number of iterations.
To the best of our knowledge, the only comparable result in the literature is that of \cite{zhang2021gradient}, where the authors provide a finite-time convergence guarantee to strict equilibria with \emph{perfect} policy gradients (as per \cref{mod:full}).
The result of \citet{zhang2021gradient} echoes the convergence properties of deterministic first-order algorithms around sharp minima of convex functions \cite{Pol87}, but the fact that \cref{thm:strict} applies to models with \emph{stochastic} gradient feedback of \emph{unbounded} variance (\cref{mod:stoch,mod:value} respectively) is a major difference.
As far as we are aware, this is the first guarantee of its kind in the literature on learning in stochastic games.

\section{Concluding remarks}
\label{sec:conclusion}

A key roadblock encountered by practical applications of multi-agent reinforcement learning is the lack of universal equilibrium convergence guarantees.
While the impossibility results of \cite{HMC00,HMC03} imply that unconditional convergence is not a reasonable aspiration without further assumptions on the game, the existence of local convergence results mitigates this deficiency as it provides a range of theoretically grounded stability and runtime guarantees.
In this regard, deterministic policies acquire particular importance, as the convergence of policy gradient methods is especially rapid and robust and this case.
\beginrev
Of course, this leaves open the question of non-tabular settings and parametrically encoded policies, e.g., as in the case of deep reinforcement learning;
we defer these investigations to future work.

Another open issue of high practical relevance concerns policy gradient methods that do not rely on Euclidean projections to $\policies$.
In the single-state case (\ie learning in finite normal form games), the use of methods relying on softmax choice / exponential weights is very widely used because of its regret guarantees.
Whether the use of similar softmax techniques can lead to finer convergence guarantees in the context of general stochastic games is an important and intriguing question for future research.
\endedit

\numberwithin{lemma}{section}		
\numberwithin{corollary}{section}		
\numberwithin{proposition}{section}		
\numberwithin{equation}{section}		
\appendix


\section{Asymptotic convergence to stable Nash policies}
\label{app:asymptotic}

Our goal in this appendix is to prove \cref{thm:stable,cor:stable}, which we restate below for convenience:

\stable*

\corstable*

Our proof strategy will comprise the following basic steps:

\begin{enumerate}

\item
To begin with, we will show that the squared distance
\begin{equation}
\label{eq:energy}
\energy(\policy)
	= \frac{1}{2}\norm{\policy - \eq}^{2}
\end{equation}
can be seen as a ``local Lyapunov function'' for \eqref{eq:PG} in the sense that it is locally decreasing near $\eq$, up to a series of error terms \textendash\ both zero-mean and non-zero-mean.

\item
Due to these errors, the evolution of the iterates $\curr[\energy] \defeq \energy(\curr)$ of $\energy$ over time may exhibit \emph{significant} jumps:
in particular, a single ``bad'' realization of the noise could carry $\curr$ out of the basin of attraction of $\eq$, possibly never to return.
To exclude this event, our second step will be to show that the aggregation of these errors can be controlled with probability at least $1-\conf$.

\item
Conditioned on the above, we will show that, with probability at least $1-\conf$, the iterates $\curr[\energy]$ cannot grow more than a token value.
As a result, if \eqref{eq:PG} is initialized close to $\eq$, it will remain in a neighborhood thereof for all $\run$ (again, with probability at least $1-\conf$).

\item
Thanks to this ``stochastic Lyapunov stability'' result, we employ a series of martingale limit theory arguments to extract a subsequence converging to $\eq$.

\item
Finally, we show that the increments of $\curr[\energy]$ are summable;
hence, by invoking the Gladyshev's lemma \cite[p.~49]{Pol87}, we conclude that $\curr[\energy]$ converges to some (finite) random variable $\energy_{\infty}$.
Combining this fact with the existence of a convergent subsequence, we obtain the desired conclusion that $\curr$ converges to $\eq$ with probability at least $1-\conf$.
\end{enumerate}

In the sequel, we make the above precise in a series of intermediate results.

\subsection{Energy inequality}

We begin by establishing a ``quasi-Lyapunov'' inequality for the iterates $\curr[\energy] = \norm{\curr - \eq}^{2}/2$ of \eqref{eq:energy}.

\begin{lemma}
\label{lem:energy}
Let $\curr[\energy] \defeq \energy(\curr)$.
Then, for all $\run = \running$, we have
\begin{equation}
\label{eq:template}
\next[\energy]
	\leq \curr[\energy]
		+ \curr[\step] \braket{\gvalue(\curr)}{\curr - \eq}
		+ \curr[\step] \curr[\snoise]
		+ \curr[\step] \curr[\sbias]
		+ \curr[\step]^{2} \curr[\second]^{2},
\end{equation}
where the error terms $\curr[\snoise]$, $\curr[\sbias]$, and $\curr[\second]$ are given by
\begin{equation}
\label{eq:errors}
\curr[\snoise]
	= \braket{\curr[\noise]}{\curr - \eq},
	\quad
\curr[\sbias]
	= \poldiam \curr[\bbound]
	\quad
	\text{and}
	\quad
\curr[\second]^{2}
	= \tfrac{1}{2} \dnorm{\curr[\signal]}^{2}.
\end{equation}
with $\poldiam \defeq \max_{\policy,\policyalt\in\policies} \norm{\policy - \policyalt}$.
\end{lemma}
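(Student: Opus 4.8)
The plan is to view $\energy(\policy) = \tfrac12\norm{\policy - \eq}^2$ as a candidate Lyapunov function and to exploit that $\next$ is the Euclidean projection of $\curr + \curr[\step]\curr[\signal]$ onto the convex set $\policies$. The starting point is the non-expansiveness of projections onto closed convex sets: since $\policies = \prod_{\play}\simplex(\pures_{\play})^{\states}$ is closed and convex and $\eq \in \policies$, we have $\proj_{\policies}(\eq) = \eq$, so
\begin{equation*}
\norm{\next - \eq} = \norm{\proj_{\policies}(\curr + \curr[\step]\curr[\signal]) - \proj_{\policies}(\eq)} \le \norm{\curr + \curr[\step]\curr[\signal] - \eq}.
\end{equation*}
Squaring, dividing by two, and expanding the inner product then yields
\begin{equation*}
\next[\energy] \le \curr[\energy] + \curr[\step]\braket{\curr[\signal]}{\curr - \eq} + \tfrac12\curr[\step]^2\dnorm{\curr[\signal]}^2,
\end{equation*}
which already accounts for the quadratic term $\curr[\step]^2\curr[\second]^2$ with $\curr[\second]^2 = \tfrac12\dnorm{\curr[\signal]}^2$.

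The remaining work is to split the linear term along the decomposition $\curr[\signal] = \gvalue(\curr) + \curr[\noise] + \curr[\bias]$ of \eqref{eq:signal}. This produces the drift term $\braket{\gvalue(\curr)}{\curr - \eq}$, the zero-mean term $\curr[\snoise] = \braket{\curr[\noise]}{\curr - \eq}$, and a residual $\braket{\curr[\bias]}{\curr - \eq}$, which I would bound by Cauchy--Schwarz together with the diameter estimate $\norm{\curr - \eq} \le \poldiam$, valid since both $\curr$ and $\eq$ lie in $\policies$.

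The one step that needs care --- and the only non-routine point --- is that the residual must be dominated by the \emph{deterministic} quantity $\curr[\sbias] = \poldiam\,\curr[\bbound]$, whereas \eqref{eq:errorbounds} only furnishes a bound on the conditional expectation $\exof{\dnorm{\curr[\bias]} \given \curr[\filter]}$. The resolution is to observe that $\curr[\bias] = \exof{\curr[\signal] \given \curr[\filter]} - \vecfield(\curr)$ is $\curr[\filter]$-measurable by construction, so $\exof{\dnorm{\curr[\bias]} \given \curr[\filter]} = \dnorm{\curr[\bias]}$ and the bias bound is in fact pathwise, namely $\dnorm{\curr[\bias]} \le \curr[\bbound]$ almost surely. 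Consequently $\braket{\curr[\bias]}{\curr - \eq} \le \poldiam\,\curr[\bbound] = \curr[\sbias]$, and collecting the three contributions to the linear term gives exactly \eqref{eq:template}. No further estimates are needed, so this lemma is essentially a careful bookkeeping exercise whose only subtlety is the measurability of the bias term.
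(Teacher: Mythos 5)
Your proposal is correct and follows essentially the same route as the paper's proof: non-expansiveness of the Euclidean projection, expansion of the squared norm, decomposition of the signal via \eqref{eq:signal}, and Cauchy--Schwarz plus the diameter bound for the bias term. Your extra remark that $\curr[\bias]$ is $\curr[\filter]$-measurable, so that the conditional bound \eqref{eq:errorbounds} is in fact a pathwise bound $\dnorm{\curr[\bias]} \leq \curr[\bbound]$, is a correct justification of a step the paper uses implicitly.
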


\begin{proof}
By the definition of the iterates of \eqref{eq:PG}, we have
\begin{align}
\next[\energy]
	= \frac{1}{2} \norm{\next - \eq}^{2}
	&= \frac{1}{2} \norm{\proj_{\policies}(\curr + \curr[\step]\curr[\signal]) - \proj_{\policies}(\eq)}^{2}
	\notag\\
	&\leq \frac{1}{2} \norm{\curr + \curr[\step]\curr[\signal] - \eq}^{2}
	\notag\\
	&= \frac{1}{2} \norm{\curr - \eq}^{2}
		+ \curr[\step] \braket{\curr[\signal]}{\curr - \eq}
		+ \frac{1}{2} \curr[\step]^{2} \norm{\curr[\signal]}^{2}
	\notag\\
	&= \curr[\energy]
		+ \curr[\step] \braket{\gvalue(\curr) + \curr[\noise] + \curr[\bias]}{\curr - \eq}
		+ \frac{1}{2} \curr[\step]^{2} \norm{\curr[\signal]}^{2}
	\notag\\
	&\leq \curr[\energy]
		+ \curr[\step] \braket{\gvalue(\curr)}{\curr - \eq}
		+ \curr[\step] \curr[\snoise]
		+ \curr[\step] \curr[\sbias]
		+ \curr[\step]^{2} \curr[\second]^{2}
\end{align}
where we used the Cauchy-Schwarz inequality to bound the bias term as $\braket{\curr[\bias]}{\curr -\eq} \leq \dnorm{\curr[\bias]} \cdot \norm{\curr - \eq} \leq \poldiam \curr[\bbound] = \curr[\sbias]$.
\end{proof}

\subsection{Error control and stability}

The second major step in our proof (and the most challenging one from a technical standpoint) is to establish a suitable measure of control over the error increments in \eqref{eq:energy}, with the aim of showing that the process $\curr$ never leaves a neighborhood of $\eq$.

To make this idea precise, let $\basin = \setdef{\policy\in\policies}{\norm{\policy - \eq} \leq \radius}$ be a ball of radius $\radius$ based on $\eq$ in $\policies$ so that $\braket{\gvalue(\policy)}{\policy - \eq} < 0$ for all $\policy\in\basin\exclude{\eq}$ (without loss of generality, we can assume that $\basin$ is maximal in that regard).
We will then examine the event that the aggregation of the error terms in \eqref{eq:energy} is not sufficient to drive $\curr$ to escape from $\basin$.

To that end, we will begin by aggregating the errors in \eqref{eq:energy} as
\begin{equation}
\label{eq:err-agg}
\curr[\mart]
	= \sum_{\runalt=\start}^{\run} \iter[\step] \iter[\snoise]
	\quad
	\text{and}
	\quad
\curr[\submart]
	= \sum_{\runalt=\start}^{\run} \bracks{\iter[\step] \iter[\sbias] + \iter[\step]^{2} \iter[\second]^{2}}.
\end{equation}
Since $\exof{\curr[\snoise] \given \curr[\filter]} = 0$, we have $\exof{\curr[\mart] \given \curr[\filter]} = \prev[\mart]$, so $\curr[\mart]$ is a martingale;
likewise, $\exof{\curr[\submart] \given \curr[\filter]} \geq \prev[\submart]$, so $\curr[\submart]$ is a submartingale.
Then, using a technique of \citet{HIMM19} that builds on an earlier idea by \citet{MZ19}, we will also consider the ``mean square'' error process
\begin{align}
\label{eq:noise-tot}
\curr[\toterr]
	&= \curr[\mart]^{2}
		+ \curr[\submart],
\end{align}
and the associated indicator events
\begin{subequations}
\label{eq:events}
\begin{align}
\curr[\event]
	&= \braces*{\iter\in\basin \; \text{for all $\runalt=\running,\run$}}
	\quad
	\text{and}
	\quad
\curr[\eventalt]
	= \braces*{\iter[\toterr] \leq \thres \; \text{for all $\runalt = \running,\run$}},
\end{align}
\end{subequations}
where, with a fair amount of hindsight,
the error tolerance level $\thres>0$ is such that $2\thres + \sqrt{\thres} < \radius$, and we are employing the convention $\event_{0} = \eventalt_{0} = \samples$ (since every statement is true for the elements of the empty set).
We will then assume that $\init$ is initialized in a ball of radius $\sqrt{2\thres}$ centered at $\eq$, viz.
\begin{equation}
\label{eq:nhd-init}
\nhd
	= \setdef{\policy\in\policies}{\energy(\policy) \leq \thres}
	= \setdef{\policy\in\policies}{\norm{\policy-\eq}^{2}/2 \leq \thres}.
\end{equation}
With all this in hand, the key to showing that $\curr$ remains close to $\eq$ with high probability is the following conditional estimate:

\begin{lemma}
\label{lem:events}
Let $\curr$ be the sequence of play generated by \eqref{eq:PG} initialized at $\init\in\nhd$.
We then have:
\begin{enumerate}
\item
$\next[\event] \subseteq \curr[\event]$
and
$\next[\eventalt] \subseteq \curr[\eventalt]$
for all $\run=\running$
\item
$\prev[\eventalt] \subseteq \curr[\event]$
for all $\run=\running$
\item
Consider the ``bad realization'' event
\begin{align}
\label{eq:evt-bad}
\curr[\tilde\eventalt]
	\defeq \prev[\eventalt] \setminus \curr[\eventalt]
	= \braces*{\text{$\iter[\toterr] \leq \thres$ for $\runalt=\running,\run-1$ and $\curr[\toterr] > \thres$}},
\end{align}
and
let $\curr[\tilde\toterr] = \curr[\toterr] \one_{\prev[\eventalt]}$ be the cumulative error subject to the noise being ``small''.
Then we have:
\begin{equation}
\label{eq:noise-tot-cond}
\exof{\curr[\tilde\toterr]}
	\leq \exof{\prev[\tilde\toterr]}
		+ \curr[\step] \poldiam \curr[\bbound]
		+ \curr[\step]^{2} \poldiam^{2} \curr[\sdev]^{2}
		+ \tfrac{3}{2}  \curr[\step]^{2} (\vbound^{2} + \curr[\bbound]^{2} + \curr[\sdev]^{2})
		- \thres \probof{\prev[\tilde\eventalt]},
\end{equation}
where, by convention,
$\tilde\eventalt_{0} = \varnothing$ and $\beforeinit[\tilde\toterr] = 0$.
\end{enumerate}
\end{lemma}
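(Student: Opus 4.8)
The plan is to dispatch the three claims in increasing order of difficulty, the third being the real work.

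\textbf{Parts (1) and (2).} Part (1) is immediate from the definitions: both $\curr[\event]$ and $\curr[\eventalt]$ are ``for all $\runalt\le\run$'' events, so imposing the constraint at one more index can only shrink them, whence $\next[\event]\subseteq\curr[\event]$ and $\next[\eventalt]\subseteq\curr[\eventalt]$. For part (2) I would argue by induction on $\run$, using part (1) to nest the events. Assuming the claim at the previous index gives $\eventalt_{\run-2}\subseteq\prev[\event]$, so on $\prev[\eventalt]\subseteq\eventalt_{\run-2}$ every iterate up to $\run-1$ lies in $\basin$, where the drift $\braket{\gvalue(\iter)}{\iter-\eq}$ is nonpositive by the defining property of $\basin$. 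Telescoping the energy inequality of \cref{lem:energy} from $\start$ to $\run-1$ and discarding the nonpositive drift terms then yields $\curr[\energy]\le\init[\energy]+\prev[\mart]+\prev[\submart]$. Since $\init\in\nhd$ gives $\init[\energy]\le\thres$, and since $\prev[\eventalt]$ forces $\prev[\submart]\le\prev[\toterr]\le\thres$ and $\prev[\mart]^{2}\le\prev[\toterr]\le\thres$, I obtain $\curr[\energy]\le 2\thres+\sqrt{\thres}$; by the choice $2\thres+\sqrt{\thres}<\radius$ this keeps $\curr$ inside $\basin$, which together with $\prev[\event]$ gives $\curr[\event]$.

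\textbf{Part (3): the core estimate.} Here I would first record the one-step recursion for the ``mean-square'' process: writing $\curr[\mart]=\prev[\mart]+\curr[\step]\curr[\snoise]$ and $\curr[\submart]=\prev[\submart]+\curr[\step]\curr[\sbias]+\curr[\step]^{2}\curr[\second]^{2}$, squaring the martingale part gives
\[
\curr[\toterr]=\prev[\toterr]+2\prev[\mart]\curr[\step]\curr[\snoise]+\curr[\step]^{2}\curr[\snoise]^{2}+\curr[\step]\curr[\sbias]+\curr[\step]^{2}\curr[\second]^{2}.
\]
The crucial measurability observation is that $\prev[\mart]$, $\prev[\submart]$, $\prev[\toterr]$ and the indicator $\one_{\prev[\eventalt]}$ are all $\curr[\filter]$-measurable (they involve the noise only through index $\run-1$), whereas $\curr[\snoise]$ carries the fresh, zero-mean randomness of step $\run$. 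Multiplying by $\one_{\prev[\eventalt]}$ and conditioning on $\curr[\filter]$, the cross term vanishes because $\exof{\curr[\snoise]\given\curr[\filter]}=\braket{\exof{\curr[\noise]\given\curr[\filter]}}{\curr-\eq}=0$; the surviving stochastic terms are controlled by Cauchy--Schwarz together with the error bounds \eqref{eq:errorbounds}, namely $\exof{\curr[\snoise]^{2}\given\curr[\filter]}\le\poldiam^{2}\curr[\sdev]^{2}$ and, via $\curr[\signal]=\gvalue(\curr)+\curr[\noise]+\curr[\bias]$ and $\norm{a+b+c}^{2}\le 3(\norm{a}^{2}+\norm{b}^{2}+\norm{c}^{2})$, $\exof{\curr[\second]^{2}\given\curr[\filter]}\le\tfrac{3}{2}(\vbound^{2}+\curr[\sdev]^{2}+\curr[\bbound]^{2})$. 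Taking total expectations and using $\probof{\prev[\eventalt]}\le1$ to strip the indicator off the deterministic terms produces all the positive error terms of the claimed bound.

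\textbf{The decomposition and the main obstacle.} What remains is to convert $\exof{\prev[\toterr]\one_{\prev[\eventalt]}}$ into $\exof{\prev[\tilde\toterr]}$ while extracting the subtracted term $-\thres\probof{\prev[\tilde\eventalt]}$, and this bookkeeping is the delicate part of the argument (it is where the masking by $\one_{\prev[\eventalt]}$, following \citet{HIMM19,MZ19}, pays off). Since $\prev[\eventalt]\subseteq\eventalt_{\run-2}$, the bad-realization set satisfies $\prev[\eventalt]=\eventalt_{\run-2}\setminus\prev[\tilde\eventalt]$, so $\one_{\prev[\eventalt]}=\one_{\eventalt_{\run-2}}-\one_{\prev[\tilde\eventalt]}$ and hence $\prev[\toterr]\one_{\prev[\eventalt]}=\prev[\tilde\toterr]-\prev[\toterr]\one_{\prev[\tilde\eventalt]}$. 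On $\prev[\tilde\eventalt]$ one has $\prev[\toterr]>\thres$ by the definition of the first ``bad'' step, whence $\prev[\toterr]\one_{\prev[\tilde\eventalt]}\ge\thres\one_{\prev[\tilde\eventalt]}$ and therefore $\exof{\prev[\toterr]\one_{\prev[\eventalt]}}\le\exof{\prev[\tilde\toterr]}-\thres\probof{\prev[\tilde\eventalt]}$. Substituting this into the conditional estimate of the previous paragraph yields exactly \eqref{eq:noise-tot-cond}. I expect the main obstacle to be precisely this filtration-and-indicator accounting: getting the $\curr[\filter]$-measurability of $\one_{\prev[\eventalt]}$ right so that the conditioning is legitimate, and recognizing that the excess $\prev[\toterr]-\thres$ charged on the first bad step is what furnishes the negative term, so that $\curr[\tilde\toterr]$ behaves as a near-supermartingale suited to the high-probability confinement argument in the subsequent step.
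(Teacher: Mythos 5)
Your proposal is correct and follows essentially the same route as the paper's own proof: the nesting of events and the telescoped energy inequality for part (2), and for part (3) the one-step recursion for $\curr[\toterr]$, the vanishing cross term via $\curr[\filter]$-measurability of $\one_{\prev[\eventalt]}$ and $\prev[\mart]$, the Cauchy--Schwarz and second-moment bounds, and the decomposition $\one_{\prev[\eventalt]} = \one_{\beforeprev[\eventalt]} - \one_{\prev[\tilde\eventalt]}$ together with $\prev[\toterr] > \thres$ on $\prev[\tilde\eventalt]$ to extract the $-\thres\probof{\prev[\tilde\eventalt]}$ term. No gaps beyond those already present in the paper's own write-up.
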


\begin{remark*}
In the above (and what follows), the notation $\one_{A}$ is used to indicate the logical indicator of an event $A\subseteq\samples$, \ie $\one_{A}(\sample) = 1$ if $\sample\in A$ and $\one_{A}(\sample) = 0$ otherwise.
\end{remark*}

The proof of \cref{lem:events} is quite technical, so we first proceed to derive an important stability result based on this estimate.

\begin{proposition}
\label{prop:contain}
Fix some confidence threshold $\conf>0$ and let $\curr$ be the sequence of play generated by \eqref{eq:PG} with step-size and policy gradient estimates as per \cref{thm:stable}.
We then have:
\begin{equation}
\label{eq:contain}
\probof{\curr[\eventalt] \given \init\in\nhd}
	\geq 1-\conf
	\quad
	\text{for all $\run=\running$}
\end{equation}
provided that $\step$ is small enough \textpar{or $\runoff$ large enough} relative to $\conf$.
\end{proposition}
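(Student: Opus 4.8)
The plan is to bound the probability of the complementary ``escape'' event $\comp{\curr[\eventalt]}$ uniformly in $\run$ by telescoping the one-step estimate of \cref{lem:events}(3). Since the events $\curr[\eventalt]$ are nested decreasing by \cref{lem:events}(1) and $\eventalt_{0} = \samples$, the ``bad realization'' events $\curr[\tilde\eventalt] = \prev[\eventalt]\setminus\curr[\eventalt]$ are pairwise disjoint and exhaust the complement, namely $\comp{\curr[\eventalt]} = \bigsqcup_{\runalt=\start}^{\run}\iter[\tilde\eventalt]$, so that $\probof{\comp{\curr[\eventalt]}} = \sum_{\runalt=\start}^{\run}\probof{\iter[\tilde\eventalt]}$. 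It therefore suffices to show that $\sum_{\runalt}\probof{\iter[\tilde\eventalt]}$ is summable with a sum that can be driven below $\conf$ by tuning the method's parameters.

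To do so, first I would telescope \eqref{eq:noise-tot-cond} from $\runalt=\start$ to $\run$. Writing $\iter[\const]$ for the per-step error budget $\iter[\step]\poldiam\iter[\bbound] + \iter[\step]^{2}\poldiam^{2}\iter[\sdev]^{2} + \tfrac{3}{2}\iter[\step]^{2}(\vbound^{2}+\iter[\bbound]^{2}+\iter[\sdev]^{2})$ and using the conventions $\beforeinit[\tilde\toterr]=0$, $\tilde\eventalt_{0}=\varnothing$, this yields
\begin{equation*}
\exof{\curr[\tilde\toterr]}
	\leq \sum_{\runalt=\start}^{\run}\iter[\const]
		- \thres\sum_{\runalt=\start}^{\run}\probof{\beforeiter[\tilde\eventalt]}.
\end{equation*}
The crucial observation is that $\curr[\toterr]=\curr[\mart]^{2}+\curr[\submart]\geq 0$: the squared martingale is nonnegative, and the submartingale increments $\iter[\step]\iter[\sbias]+\iter[\step]^{2}\iter[\second]^{2}$ are nonnegative since $\iter[\sbias]=\poldiam\iter[\bbound]\geq0$. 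Hence $\curr[\tilde\toterr]=\curr[\toterr]\one_{\prev[\eventalt]}\geq0$, so discarding the left-hand side and reindexing the probability sum (recall $\probof{\tilde\eventalt_{0}}=0$) gives
\begin{equation*}
\probof{\comp{\prev[\eventalt]}}
	= \sum_{\runalt=\start}^{\run-1}\probof{\iter[\tilde\eventalt]}
	\leq \frac{1}{\thres}\sum_{\runalt=\start}^{\run}\iter[\const].
\end{equation*}

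It remains to show that the series $\sum_{\runalt}\iter[\const]$ converges and can be made smaller than $\thres\conf$. Under the schedule of \cref{thm:stable} ($\curr[\step]=\step/(\run+\runoff)^{\pexp}$, $\curr[\bbound]=\bigoh(1/\run^{\bexp})$, $\curr[\sdev]=\bigoh(\run^{\sexp})$), the summand behaves as $\iter[\const]=\bigoh(\runalt^{-(\pexp+\bexp)})+\bigoh(\runalt^{-(2\pexp-2\sexp)})+\bigoh(\runalt^{-2\pexp})$, so the three standing hypotheses $\pexp+\bexp>1$, $\pexp-\sexp>1/2$, and $\pexp>1/2$ are \emph{exactly} what guarantee absolute summability; let $\Const_{\infty}$ denote the resulting finite series value. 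As the partial sums are bounded by $\Const_{\infty}$ uniformly in $\run$, the two displays above give $\probof{\comp{\curr[\eventalt]}}\leq \Const_{\infty}/\thres$ for every $\run$. Finally, each term of $\Const_{\infty}$ carries a factor $\step$ or $\step^{2}$, and each underlying series is a tail of a convergent series in the offset $\runoff$; hence $\Const_{\infty}\to0$ as $\step\to0^{+}$ \textpar{or as $\runoff\to\infty$ with $\step$ fixed}, so $\Const_{\infty}\leq\thres\conf$ for $\step$ small enough \textpar{or $\runoff$ large enough}, which is precisely \eqref{eq:contain}.

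The main obstacle I anticipate is not the parameter tuning but the bookkeeping needed to reach it cleanly. One must be careful that the telescoped probability sum is off by one index relative to the target event, so that it controls $\comp{\prev[\eventalt]}$ rather than $\comp{\curr[\eventalt]}$, and that the resulting bound is genuinely \emph{uniform} in $\run$ \textendash\ this uniformity is what allows a single smallness condition on $\step$ to work simultaneously for all episodes. The substantive technical weight, of course, lies in \cref{lem:events} itself, in particular the per-step contraction \eqref{eq:noise-tot-cond} for the truncated error $\curr[\tilde\toterr]$, which this argument takes as given.
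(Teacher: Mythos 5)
Your proposal is correct and follows essentially the same route as the paper's proof: telescoping the one-step estimate \eqref{eq:noise-tot-cond} from \cref{lem:events}, exploiting the disjointness of the bad-realization events $\iter[\tilde\eventalt]$ to turn the resulting bound into a bound on $\probof{\comp{\curr[\eventalt]}}$, and then invoking summability of the error budget under $\pexp+\bexp>1$, $\pexp-\sexp>1/2$, $\pexp>1/2$ together with the $\step$/$\runoff$ tuning. The only (immaterial) difference is that you drop the nonnegative term $\exof{\curr[\tilde\toterr]}$ and absorb the resulting index shift via uniformity in $\run$, whereas the paper additionally applies a Markov-type bound $\probof{\curr[\tilde\eventalt]}\leq\exof{\curr[\tilde\toterr]}/\thres$ to capture the $\run$-th term directly.
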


\begin{proof}
We begin by bounding the probability of the ``bad realization'' event $\curr[\tilde\eventalt] = \prev[\eventalt] \setminus \curr[\eventalt]$.
Indeed, if $\init\in\nhd$, we have:
\begin{align}
\label{eq:prob-large1}
\probof{\curr[\tilde\eventalt]}
	&= \probof{\prev[\eventalt] \setminus \curr[\eventalt]}
	= \exof{\one_{\prev[\eventalt]} \times \oneof{\curr[\toterr] > \thres}}
	\leq \exof{\one_{\prev[\eventalt]} \times (\curr[\toterr] / \thres)}
	= \exof{\curr[\tilde\toterr]} / \thres
\end{align}
where, in the penultimate step, we used the fact that $\curr[\toterr] \geq 0$ (so $\oneof{\curr[\toterr]>\thres} \leq \curr[\toterr]/\thres$).
Telescoping \eqref{eq:noise-tot-cond} then yields
\begin{equation}
\label{eq:prob-large2}
\exof{\curr[\tilde\toterr]}
	\leq \exof{\beforeinit[\tilde\toterr]}
		+ \poldiam \sum_{\runalt=\start}^{\run} \iter[\step] \iter[\bbound]
		+ \sum_{\runalt=\start}^{\run} \iter[\step]^{2} \iter[\varrho]^{2}
		- \thres \sum_{\runalt=\start}^{\run} \probof{\beforeiter[\tilde\eventalt]}
\end{equation}
where we set
\begin{equation}
\curr[\varrho]^{2}
	= \poldiam^{2} \curr[\sdev]^{2} + \tfrac{3}{2}  (\vbound^{2} + \curr[\bbound]^{2} + \curr[\sdev]^{2}).
\end{equation}
Hence, combining \eqref{eq:prob-large1} and \eqref{eq:prob-large2} and invoking our stated assumptions for $\curr[\step]$, $\curr[\bbound]$ and $\curr[\sdev]$, we get
\begin{equation}
\sum_{\runalt=\start}^{\run} \probof{\iter[\tilde\eventalt]}
	\leq \frac{1}{\thres}
		\sum_{\runalt=\start}^{\run}
			\bracks{ \iter[\step] \iter[\bbound] \poldiam
				+ \iter[\step]^{2} \iter[\varrho]^{2}}
	\leq \frac{\Const}{\thres}
\end{equation}
for some $\Const \equiv \Const(\step,\runoff) > 0$ with $\lim_{\step\to0^{+}} \Const(\step,\runoff) = \lim_{\runoff\to\infty} \Const(\step,\runoff) = 0$ \revise{(since $\curr[\step] = \step / (\run+\runoff)^{\pexp}$ and $\pexp>0$)}.

Now, by choosing $\step$ sufficiently small (or $\runoff$ sufficiently large), we can ensure that $\Const/\thres < \conf$;
thus, given that the events $\iter[\tilde\eventalt]$ are disjoint for all $\runalt=\running$, we get
\(
\probof[\big]{\union_{\runalt=\start}^{\run} \iter[\tilde\eventalt]}
	= \sum_{\runalt=\start}^{\run} \probof{\iter[\tilde\eventalt]}
	\leq \conf.
\)
In turn, this implies that
\(
\probof{\curr[\eventalt]}
	= \probof[\big]{\comp{\init[\tilde\eventalt]} \cap \dotsi \cap \comp{\curr[\tilde\eventalt]}}
	\geq 1 - \conf,
\)
and our assertion follows.
\end{proof}

We conclude this appendix with the proof of our technical result on the events $\curr[\event]$ and $\curr[\eventalt]$:

\begin{proof}[Proof of \cref{lem:events}]
The first claim of the lemma is obvious.
For the second, we proceed inductively:

\begin{enumerate}[leftmargin=2.5em]
\item
For the base case $\run=\start$, we have $\init[\event] = \{\init \in \basin \} \supseteq \{\init \in \nhd \} = \samples$ (recall that $\init$ is initialized in $\nhd \subseteq \basin$).
Since $\eventalt_{0} = \samples$, our claim follows.

\item
Inductively, assume that $\prev[\eventalt] \subseteq \curr[\event]$ for some $\run\geq\start$.
To show that $\curr[\eventalt] \subseteq \next[\event]$,
suppose that $\iter[\toterr] \leq \thres$ for all $\runalt=\running,\run$.
Since $\curr[\eventalt] \subseteq \prev[\eventalt]$, this implies that $\curr[\event]$ also occurs, \ie $\iter\in\basin$ for all $\runalt=\running,\run$;
as such, it suffices to show that $\next\in\basin$.
To do so, given that $\iter\in\nhd\subseteq\basin$ for all $\runalt=\running\run$,
we readily obtain
\begin{equation}
\afteriter[\energy]
	\leq \iter[\energy]
		+ \iter[\step] \iter[\snoise]
		+ \iter[\step] \iter[\sbias]
		+ \iter[\step]^{2} \iter[\second]^{2},
	\quad
	\text{for all $\runalt = \running\run$},
\end{equation}
and hence, after telescoping over $\runalt = \running,\run$, we get
\begin{equation}
\next[\energy]
	\leq \init[\energy]
		+ \curr[\mart]
		+ \curr[\submart]
	\leq \init[\energy]
		+ \sqrt{\curr[\toterr]}
		+ \curr[\toterr]
	\leq \thres
		+ \sqrt{\thres}
		+ \thres
	= 2\thres + \sqrt{\thres}.
\end{equation}
We conclude that $\energy(\next) \leq 2\thres + \sqrt{\thres}$, \ie $\next\in\basin$, as required for the induction.
\end{enumerate}

For our third claim, note first that
\begin{align}
\label{eq:noise-tot-upd}
\curr[\toterr]
	&= (\prev[\mart] + \curr[\step]\curr[\snoise])^{2}
		+ \prev[\submart]
		+ \curr[\step] \curr[\sbias]
		+ \curr[\step]^{2} \curr[\second]^{2}
	\notag\\
	&= \prev[\toterr]
		+ 2 \curr[\step] \curr[\snoise] \prev[\mart]
		+ \curr[\step]^{2} \curr[\snoise]^{2}
		+ \curr[\step] \curr[\sbias]
		+ \curr[\step]^{2} \curr[\second]^{2},
\end{align}
so, after taking expectations, we get
\begin{align}
\exof{\curr[\toterr] \given \curr[\filter]}
	= \prev[\toterr]
		+ 2 \prev[\mart] \curr[\step] \exof{\curr[\snoise] \given \curr[\filter]}
		+ \exof{
			\curr[\step]^{2} \curr[\snoise]^{2}
			+ \curr[\step] \curr[\sbias]
			+ \curr[\step]^{2} \curr[\second]^{2}
			\given \curr[\filter] }
	\geq \prev[\toterr],
\end{align}
\ie $\curr[\toterr]$ is a submartingale.
To proceed, let $\curr[\tilde\toterr] = \curr[\toterr] \one_{\prev[\eventalt]}
$ so
\begin{align}
\label{eq:noise-tot-cond1}
\curr[\tilde\toterr]
	= \curr[\toterr] \one_{\prev[\eventalt]}
	&= \prev[\toterr] \one_{\prev[\eventalt]}
		+ (\curr[\toterr] - \prev[\toterr]) \one_{\prev[\eventalt]}
	\notag\\
	&= \prev[\toterr] \one_{\beforeprev[\eventalt]}
		- \prev[\toterr] \one_{\prev[\tilde\eventalt]}
		+ (\curr[\toterr] - \prev[\toterr]) \one_{\prev[\eventalt]},
	\notag\\
	&= \prev[\tilde\toterr]
		+ (\curr[\toterr] - \prev[\toterr]) \one_{\prev[\eventalt]}
		- \prev[\toterr] \one_{\prev[\tilde\eventalt]},
\end{align}
where we used the fact that $\prev[\eventalt] = \beforeprev[\eventalt] \setminus \prev[\tilde\eventalt]$ so $\one_{\prev[\eventalt]} = \one_{\beforeprev[\eventalt]} - \one_{\prev[\tilde\eventalt]}$ (since $\prev[\eventalt] \subseteq \beforeprev[\eventalt]$).
Then, \eqref{eq:noise-tot-upd} yields
\begin{align}
\curr[\toterr] - \prev[\toterr]
		= 2 \prev[\mart] \curr[\step] \curr[\snoise]
		+ \curr[\step]^{2} \curr[\snoise]^{2}
		+ \curr[\step] \curr[\sbias]
		+ \curr[\step]^{2} \curr[\second]^{2}
\end{align}
and hence, given that $\prev[\eventalt]$ is $\curr[\filter]$-measurable, we get:
\begin{subequations}
\begin{align}
\exof{(\curr[\toterr] - \prev[\toterr]) \one_{\prev[\eventalt]}}
	&\label{eq:noise-tot-zero}
		= 2 \exof{\curr[\step] \prev[\mart]\curr[\snoise] \one_{\prev[\eventalt]}}
	\\
	&\label{eq:noise-tot-noise}
		+ \exof{\curr[\step]^{2} \curr[\snoise]^{2} \one_{\prev[\eventalt]}}
	\\
	&\label{eq:noise-tot-signal}
		+ \exof{(\curr[\step]\curr[\sbias] + \curr[\step]^{2} \curr[\second]^{2}) \one_{\prev[\eventalt]}}.
\end{align}
\end{subequations}
However, since $\prev[\eventalt]$ and $\prev[\mart]$ are both $\curr[\filter]$-measurable, we have the following estimates:

\begin{enumerate}
\item
For the noise term in \eqref{eq:noise-tot-zero}, we have:
\begin{equation}
\exof{\prev[\mart] \curr[\snoise] \one_{\prev[\eventalt]}}
	= \exof{\prev[\mart] \one_{\prev[\eventalt]} \exof{\curr[\snoise] \given \curr[\filter]}}
	= 0.
\end{equation}

\item
The term \eqref{eq:noise-tot-noise} is where the reduction to $\prev[\eventalt]$ kicks in;
indeed, we have:
\begin{align}
\exof{\curr[\snoise]^{2} \one_{\prev[\eventalt]}}
	&= \exof{\one_{\prev[\eventalt]} \exof{ \abs{\braket{\curr - \eq}{\curr[\noise]}}^{2}
		\given \curr[\filter]} }
	\notag\\
	&\leq \exof{\one_{\prev[\eventalt]} \norm{\curr - \eq}^{2} \exof{\dnorm{\curr[\noise]}^{2}
		\given \curr[\filter]}}
	\explain{by Cauchy\textendash Schwarz}
	\\
	&\leq \poldiam^{2} \curr[\sdev]^{2}.
\end{align}

\item
Finally, for the term \eqref{eq:noise-tot-signal}, we have:
\begin{align}
\label{eq:noise-tot-term1}
\exof{\curr[\second]^{2} \one_{\prev[\eventalt]}}
	\leq \tfrac{3}{2}  \bracks{ \vbound^{2} + \curr[\bbound]^{2} + \curr[\sdev]^{2} }
\end{align}
where we used the bound $\dnorm{\gvalue(\policy)} \leq \vbound$.
Likewise, $\curr[\sbias] \one_{\prev[\eventalt]} \leq \poldiam \curr[\bbound]$, so
\begin{align}
\label{eq:noise-tot-term2}
\eqref{eq:noise-tot-signal}
	&\leq \curr[\step] \poldiam \curr[\bbound]
		+ \tfrac{3}{2}  \curr[\step]^{2} (\vbound^{2} + \curr[\bbound]^{2} + \curr[\sdev]^{2})
\end{align}
\end{enumerate}
Thus, putting together all of the above, we obtain:
\begin{equation}
\exof{(\curr[\toterr] - \prev[\toterr]) \one_{\prev[\eventalt]}}
	\leq \curr[\step] \poldiam \curr[\bbound]
		+ \curr[\step]^{2} \poldiam^{2} \curr[\sdev]^{2}
		+ \tfrac{3}{2}  \curr[\step]^{2} (\vbound^{2} + \curr[\bbound]^{2} + \curr[\sdev]^{2})
\end{equation}
Going back to \eqref{eq:noise-tot-cond1}, we have $\prev[\toterr] > \thres$ if $\prev[\tilde\eventalt]$ occurs, so the last term becomes
\begin{equation}
\label{eq:noise-tot-term3}
\exof{\prev[\toterr] \one_{\prev[\tilde\eventalt]}}
	\geq \thres \exof{\one_{\prev[\tilde\eventalt]}}
	= \thres \probof{\prev[\tilde\eventalt]}.
\end{equation}
Our claim then follows by combining \cref{eq:noise-tot-cond1,eq:noise-tot-term1,,eq:noise-tot-term2,eq:noise-tot-term3}.
\end{proof}

\subsection{Extraction of a convergent subsequence}

Our next step is to show that any realization $\curr$ of \eqref{eq:PG} that is contained in $\basin$ admits a subsequence $\state_{\run_{\runalt}}$ converging to $\eq$.

\begin{proposition}
\label{prop:subsequence}
Let $\eq$ be a stable Nash policy, and let $\curr$ be the sequence of play generated by \eqref{eq:PG} with step-size and policy gradient estimates such that $\pexp + \bexp >1$ and $\pexp - \sexp > 1/2$ as per \eqref{eq:schedule}.
Then $\curr$ admits a subsequence $\state_{\run_{\runalt}}$ that converges to $\eq$ \acl{wp1} on the event $\event = \intersect_{\run} \curr[\event] = \{\curr\in\basin \text{ for all } \run=\running\}$.
\end{proposition}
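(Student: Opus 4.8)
The plan is to argue by contradiction and reduce the existence of a convergent subsequence to the single statement that $\liminf_{\run\to\infty}\norm{\curr-\eq}=0$ holds \as on $\event$. Once this $\liminf$ vanishes, $\curr$ visits every neighborhood of $\eq$ infinitely often and a routine extraction yields a (random) subsequence $\state_{\run_{\runalt}}\to\eq$. So I would suppose, toward a contradiction, that on a sub-event of $\event$ of positive probability there exist $\delta>0$ and a finite (random) index $\run_{0}$ with $\norm{\curr-\eq}\geq\delta$ for all $\run\geq\run_{0}$, and I would feed this into the energy inequality \eqref{eq:template} of \cref{lem:energy}. Telescoping it from $\run_{0}$ gives
\begin{equation*}
\next[\energy]
	\leq \energy_{\run_{0}}
		+ \sum_{\runalt=\run_{0}}^{\run}\iter[\step]\braket{\gvalue(\iter)}{\iter-\eq}
		+ (\curr[\mart]-\mart_{\run_{0}-1})
		+ \sum_{\runalt=\run_{0}}^{\run}\iter[\step]\iter[\sbias]
		+ \sum_{\runalt=\run_{0}}^{\run}\iter[\step]^{2}\iter[\second]^{2},
\end{equation*}
where $\curr[\mart]=\sum_{\runalt=\start}^{\run}\iter[\step]\iter[\snoise]$ is the noise martingale of \eqref{eq:err-agg}.

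The technical heart of the argument is to show that the three error aggregates on the right-hand side converge \as By Cauchy--Schwarz and the diameter bound $\norm{\iter-\eq}\leq\poldiam$, the martingale increments obey $\exof{(\iter[\step]\iter[\snoise])^{2}\given\iter[\filter]}\leq\iter[\step]^{2}\poldiam^{2}\iter[\sdev]^{2}$; since $\pexp-\sexp>1/2$ makes $\sum_{\runalt}\iter[\step]^{2}\iter[\sdev]^{2}<\infty$, the martingale $\curr[\mart]$ is $L^{2}$-bounded and hence converges \as by Doob's martingale convergence theorem. For the bias aggregate, $\iter[\step]\iter[\sbias]=\poldiam\,\iter[\step]\iter[\bbound]=\bigoh(1/\runalt^{\pexp+\bexp})$ is summable because $\pexp+\bexp>1$. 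For the second-order aggregate, taking conditional expectations of $\iter[\second]^{2}=\tfrac12\dnorm{\iter[\signal]}^{2}$ and using $\dnorm{\gvalue}\leq\vbound$ gives $\exof{\iter[\second]^{2}\given\iter[\filter]}\leq\tfrac32(\vbound^{2}+\iter[\bbound]^{2}+\iter[\sdev]^{2})$, so $\sum_{\runalt}\iter[\step]^{2}\exof{\iter[\second]^{2}\given\iter[\filter]}<\infty$ by $\pexp>1/2$ and $\pexp-\sexp>1/2$; the nonnegative series $\sum_{\runalt}\iter[\step]^{2}\iter[\second]^{2}$ then has finite total expectation and converges \as. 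Crucially, all of these bounds are deterministic (they use only $\poldiam$, $\vbound$, and the schedules \eqref{eq:errorbounds} and \eqref{eq:schedule}), so the convergence holds on all of $\samples$, and in particular on $\event$.

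Finally, on the contradiction sub-event the iterates stay in the compact annulus $\setdef{\policy\in\basin}{\norm{\policy-\eq}\geq\delta}$ for $\run\geq\run_{0}$; since $\eq$ lies outside this set and the continuous map $\policy\mapsto\braket{\gvalue(\policy)}{\policy-\eq}$ is strictly negative on $\basin\exclude{\eq}$ by stability, it attains a maximum $-\const<0$ there, so $\braket{\gvalue(\iter)}{\iter-\eq}\leq-\const$ for all $\runalt\geq\run_{0}$. Substituting this into the telescoped inequality and using $\sum_{\runalt}\iter[\step]=\infty$, the drift term tends to $-\infty$ while the three error aggregates remain bounded \as, whence $\next[\energy]\to-\infty$, contradicting $\energy\geq0$. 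This forces $\liminf_{\run}\norm{\curr-\eq}=0$ \as on $\event$ and produces the desired subsequence. I expect the main obstacle to be reconciling two mismatched localizations: the negative-drift estimate is valid only while the iterates sit inside $\basin$ (\ie on $\event$), whereas the martingale and summability arguments are most naturally run on all of $\samples$. The resolution, already visible above, is that the bias and variance controls in \eqref{eq:errorbounds} are deterministic, so the error aggregates converge unconditionally and the drift argument can then be localized to $\event$ without any circular conditioning.
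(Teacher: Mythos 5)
Your proposal is correct and follows essentially the same route as the paper's proof: a contradiction argument on the event that the iterates remain in $\basin$ but stay bounded away from $\eq$, combined with the telescoped energy inequality and martingale/summability control of the aggregated error terms, leading to $\curr[\energy]\to-\infty$ and a contradiction with $\energy\geq 0$. The only cosmetic difference is that you obtain almost sure convergence of the noise martingale directly from $L^{2}$-boundedness and Doob's theorem, whereas the paper normalizes by $\curr[\tau]=\sum_{\runalt\leq\run}\iter[\step]$ and invokes the strong law of large numbers for martingale difference sequences to get $\curr[\mart]/\curr[\tau]\to 0$; both rest on the same condition $\sum_{\run}\curr[\step]^{2}\curr[\sdev]^{2}<\infty$ and yield the same conclusion.
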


\begin{proof}
Let $\mathcal{Q} = \{ \text{$\curr\in\basin$ for all $\run$} \} \cap \{ \liminf_{\run} \norm{\curr - \eq} > 0 \}$ denote the event that $\curr$ is contained in $\basin$ but the sequence $\curr$ does not admit a subsequence converging to $\eq$.
We will show that $\probof{\mathcal{Q}} = 0$.

Indeed, assume ad absurdum that $\probof{\mathcal{Q}} > 0$.
Hence, \acl{wp1} on $\mathcal{Q}$, there exists some positive constant $\const>0$ (again, possibly random) such that $\braket{\gvalue(\curr)}{\curr - \eq} \leq -\const < 0$ for all $\run$.
Thus, going back to \eqref{eq:energy}, we get
\begin{equation}
\next[\energy]
	\leq \curr[\energy]
		- \curr[\step] \const
		+ \curr[\step] \curr[\snoise]
		+ \curr[\step] \curr[\sbias]
		+ \curr[\step]^{2} \curr[\second]^{2},
\end{equation}
so if we let $\curr[\tau] = \sum_{\runalt=\start}^{\run} \iter[\step]$ and telescope the above, we obtain the bound
\begin{equation}
\label{eq:energy-bound1}
\next[\energy]
	\leq \init[\energy]
		- \curr[\tau] \bracks*{
			\const
			- \frac{\curr[\mart]}{\curr[\tau]}
			- \frac{\curr[\submart]}{\curr[\tau]}
		}	
\end{equation}
with $\curr[\snoise]$,
$\curr[\sbias]$
and
$\curr[\second]$ given by \eqref{eq:errors},
and
$\curr[\mart] = \sum_{\runalt=\start}^{\run} \iter[\step] \iter[\snoise]$,
$\curr[\submart] = \sum_{\runalt=\start}^{\run} \bracks{\iter[\step] \iter[\sbias] + \iter[\step]^{2} \iter[\second]^{2}}$
defined as in \eqref{eq:err-agg}.
Also, \eqref{eq:errorbounds} readily gives
\begin{equation}
\sum_{\run=\start}^{\infty} \exof{\curr[\step]^{2} \curr[\snoise]^{2} \given \curr[\filter]}
	\leq \sum_{\run=\start}^{\infty} \curr[\step]^{2} \exof{ \norm{\curr - \eq}^{2} \dnorm{\curr[\noise]}^{2} \given \curr[\filter]}
	\leq \poldiam^{2} \sum_{\run=\start}^{\infty} \curr[\step]^{2} \curr[\sdev]^{2}
	< \infty
\end{equation}
so, by the strong law of large numbers for martingale difference sequences \citep[Theorem 2.18]{HH80}, we conclude that $\curr[\mart] / \curr[\tau]$ converges to $0$ \acl{wp1}.
In a similar vein, for the submartingale $\curr[\submart]$ we have
\begin{align}
\exof{\curr[\submart]}
	&= \sum_{\runalt=\start}^{\run}
		\iter[\step] \iter[\sbias]
		+ \sum_{\runalt=\start}^{\run} \iter[\step]^{2} \exof{\iter[\second]^{2}}
	\leq \poldiam \sum_{\runalt=\start}^{\run} \iter[\step] \iter[\bbound]
		+ \frac{3}{2} \sum_{\runalt=\start}^{\run} \iter[\step]^{2}
			\bracks{\vbound^{2} + \iter[\bbound]^{2} + \iter[\sdev]^{2}},
\end{align}
so, by \eqref{eq:errorbounds} and the stated conditions for the method's step-size and bias/noise parameters, it follows that $\curr[\submart]$ is bounded in $L^{1}$.
Therefore, by Doob's submartingale convergence theorem \citep[Theorem~2.5]{HH80}, we further deduce that $\curr[\submart]$ converges \acl{wp1} to some (finite) random variable $\submart_{\infty}$.

Going back to \eqref{eq:energy-bound1} and letting $\run\to\infty$, the above shows that $\curr[\energy] \to -\infty$ \acl{wp1} on $\mathcal{Q}$.
Since $\energy$ is nonnegative by construction and $\probof{\mathcal{Q}} > 0$ by assumption, we obtain a contradiction and our proof is complete.
\end{proof}

\subsection{Convergence of the energy values}

Our last auxiliary result concerns the convergence of the values of the dual energy function $\energy$.
We encode this as follows.

\begin{proposition}
\label{prop:energy}
If \eqref{eq:PG} is run with assumptions as in \cref{prop:contain}, there exists a finite random variable $\energy_{\infty}$ such that
\begin{equation}
\label{eq:energy-conv}
\probof*{\text{$\curr[\energy] \to \energy_{\infty}$ as $\run\to\infty$} \given \text{$\curr\in\basin$ for all $\run$}}
	= 1.
\end{equation}
\end{proposition}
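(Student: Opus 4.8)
The plan is to recognize the energy inequality of \cref{lem:energy} as a Robbins--Siegmund (equivalently, Gladyshev) recursion once we restrict to $\basin$, where the drift term $\braket{\gvalue(\curr)}{\curr-\eq}$ has a favorable sign, and then to invoke the associated almost-sure convergence theorem.

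First, I would exploit the defining property of $\basin$, namely that $\braket{\gvalue(\policy)}{\policy-\eq}\leq 0$ for every $\policy\in\basin$. The obstacle here is that the conditioning event $\event=\intersect_{\run}\curr[\event]$ is a tail event and therefore not $\curr[\filter]$-measurable at any finite stage, so a supermartingale argument cannot be applied to $\curr[\energy]$ directly. To render the conditioning adapted, I would introduce the exit time $\proper=\inf\{\run:\curr\notin\basin\}$ and study the \emph{stopped} energy $\energy_{\run\wedge\proper}$, noting that $\{\proper>\run\}=\curr[\event]$ is $\curr[\filter]$-measurable while $\event=\{\proper=\infty\}$. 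Because every iterate lies in the compact set $\policies$, the stopped process is uniformly bounded by $\poldiam^{2}/2$ and hence integrable.

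Next, I would take conditional expectations of \eqref{eq:template} on $\{\proper>\run\}$. The martingale increment $\curr[\step]\curr[\snoise]$ vanishes since $\exof{\curr[\snoise]\given\curr[\filter]}=0$, the drift term is non-positive on $\basin$, and the two residual terms are controlled deterministically: $\curr[\sbias]=\poldiam\curr[\bbound]$, while $\exof{\curr[\second]^{2}\given\curr[\filter]}\leq\tfrac{3}{2}(\vbound^{2}+\curr[\bbound]^{2}+\curr[\sdev]^{2})$, using $\dnorm{\curr[\signal]}^{2}\leq 3(\dnorm{\gvalue(\curr)}^{2}+\dnorm{\curr[\noise]}^{2}+\dnorm{\curr[\bias]}^{2})$, the gradient bound $\dnorm{\gvalue}\leq\vbound$, the variance bound in \eqref{eq:errorbounds}, and the $\curr[\filter]$-measurability of $\curr[\bias]$ (which upgrades the first-moment bound to $\dnorm{\curr[\bias]}\leq\curr[\bbound]$). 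This yields
\[
\exof{\energy_{(\run+1)\wedge\proper}\given\curr[\filter]}
	\leq \energy_{\run\wedge\proper}
	+ \one_{\{\proper>\run\}} \bracks[\Big]{
		\curr[\step]\braket{\gvalue(\curr)}{\curr-\eq}
		+ \curr[\step]\poldiam\curr[\bbound]
		+ \tfrac{3}{2}\curr[\step]^{2}(\vbound^{2}+\curr[\bbound]^{2}+\curr[\sdev]^{2})
	},
\]
in which the first bracketed term is $\leq 0$ (the decreasing ``$-\beta_{\run}$'' of Gladyshev's lemma) and the remaining two form a positive remainder. Summability of the remainder follows from \eqref{eq:schedule}: $\sum_{\run}\curr[\step]\curr[\bbound]<\infty$ since $\pexp+\bexp>1$; $\sum_{\run}\curr[\step]^{2}\vbound^{2}$ and $\sum_{\run}\curr[\step]^{2}\curr[\bbound]^{2}$ converge since $\pexp>1/2$; and $\sum_{\run}\curr[\step]^{2}\curr[\sdev]^{2}<\infty$ since $\pexp-\sexp>1/2$. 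Gladyshev's lemma \cite[p.~49]{Pol87} then shows that $\energy_{\run\wedge\proper}$ converges almost surely to a finite limit $\energy_{\infty}$. Since $\energy_{\run\wedge\proper}=\curr[\energy]$ for all $\run$ on $\event=\{\proper=\infty\}$, the claim \eqref{eq:energy-conv} follows.

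I expect the main obstacle to be precisely the handling of the conditioning on $\event$: this tail event lies outside every finite-stage filtration $\curr[\filter]$, so the Gladyshev machinery cannot be applied to $\curr[\energy]$ itself. The stopping-time substitution that replaces $\event$ with the adapted events $\{\proper>\run\}=\curr[\event]$ resolves this and simultaneously guarantees the favorable sign of the drift. Everything else — annihilating the zero-mean noise through conditioning, and checking summability of the positive errors against the step-size/noise schedule — is routine.
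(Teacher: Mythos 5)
Your proof is correct and follows essentially the same route as the paper's: the paper localizes by multiplying $\curr[\energy]$ with the indicator $\one_{\curr[\event]}$ rather than stopping at the exit time $\proper$, but since $\{\proper>\run\}=\curr[\event]$ these are interchangeable localization devices, and both yield the same almost-supermartingale recursion with summable remainder followed by the same appeal to Gladyshev's lemma. No gaps.
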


\begin{proof}
Let $\curr[\event] = \braces*{\iter\in\basin \; \text{for all $\runalt=\running,\run$}}$ be defined as in \eqref{eq:events}, and let $\curr[\tilde\energy] = \one_{\curr[\event]} \curr[\energy]$.
Then, by the energy inequality \eqref{eq:template} and the fact that $\next[\event] \subseteq \curr[\event]$, we get
\begin{align}
\next[\tilde\energy]
	= \one_{\next[\event]} \next[\energy]
	&\leq \one_{\curr[\event]} \next[\energy]
	\notag\\
	&\leq \one_{\curr[\event]} \curr[\energy]
		+ \one_{\curr[\event]} \curr[\step] \braket{\gvalue(\curr)}{\curr - \eq}
		+ \parens[\big]{
			\curr[\step] \curr[\snoise]
			+ \curr[\step] \curr[\sbias]
			+ \curr[\step]^{2} \curr[\second]^{2}
			} 
			\one_{\curr[\event]}
	\notag\\
	&\leq \curr[\tilde\energy]
		+ \curr[\step] \one_{\curr[\event]} \curr[\snoise]
		+ \parens[\big]{
			\curr[\step] \curr[\sbias]
			+ \curr[\step]^{2} \curr[\second]^{2}
			} 
			\one_{\curr[\event]},
\end{align}
where we used the fact that that $\braket{\gvalue(\iter)}{\iter - \eq} \leq 0$ for all $\runalt = \running,\run$ if $\curr[\event]$ occurs.
Since $\curr[\event]$ is $\curr[\filter]$-measurable, conditioning on $\curr[\filter]$ and taking expectations yields
\begin{align}
\exof{\next[\tilde\energy] \given \curr[\filter]}
	&\leq \curr[\tilde\energy]
		+ \curr[\step] \one_{\curr[\event]} \exof{\curr[\snoise] \given \curr[\filter]}
		+ \one_{\curr[\event]} \curr[\step] \curr[\sbias]
		+ \one_{\curr[\event]} \exof{\curr[\step]^{2} \curr[\second]^{2} \given \curr[\filter]}
	\notag\\
	&\leq \curr[\tilde\energy]
		+ \curr[\step] \poldiam \curr[\bbound]
		+ \curr[\step] \curr[\sbias]
		+ \exof{\curr[\step]^{2} \curr[\second]^{2} \given \curr[\filter]}
	\notag\\
	&\leq \curr[\tilde\energy]
		+ \curr[\step] \poldiam \curr[\bbound]
		+ \frac{3}{2} \bracks[\big]{ \vbound^{2} + \curr[\bbound]^{2} + \curr[\sdev]^{2}}.
\end{align}
By our step-size assumptions, we have $\sum_{\run} \curr[\step]^{2} (1 + \curr[\bbound]^{2} + \curr[\sdev]^{2}) < \infty$ and $\sum_{\run} \curr[\step] \curr[\bbound] < \infty$, which means that $\curr[\tilde\energy]$ is an almost supermartingale with almost surely summable increments, \ie
\begin{equation}
\sum_{\run=\start}^{\infty} \bracks*{\exof{\next[\tilde\energy] \given \curr[\filter]} - \curr[\tilde\energy]}
	< \infty
	\quad
	\text{\acl{wp1}}
\end{equation}
Therefore, by Gladyshev's lemma \citep[p.~49]{Pol87}, we conclude that $\curr[\tilde\energy]$ converges almost surely to some (finite) random variable $\energy_{\infty}$.
Since $\one_{\curr[\event]} = 1$ for all $\run$ if and only if $\curr\in\basin$ for all $\run$, we conclude that $\probof*{\text{$\curr[\energy]$ converges} \given \text{$\curr\in\basin$ for all $\run$}} = \probof{\text{$\curr[\tilde\energy]$ converges}}= 1$, and our claim follows.
\end{proof}

\subsection{Putting everything together}
We are now in a position to prove \cref{thm:stable,cor:stable}.

\begin{proof}[Proof of \cref{thm:stable}]
Let $\event = \intersect_{\run} \curr[\event] = \{\text{$\curr\in\basin$ for all $\run$} \}$ denote the event that $\curr$ lies in $\basin$ for all $\run$.
By \cref{prop:contain}, if $\init$ is initialized within the neighborhood $\nhd$ defined in \eqref{eq:nhd-init}, we have $\probof{\event \given \init\in\nhd} \geq 1-\thres$, noting also that the neighborhood $\nhd$ is independent of the required confidence level $\thres$.
Then, by \cref{prop:subsequence,prop:energy}, it follows that
\begin{enumerate*}
[\itshape a\upshape)]
\item
$\liminf_{\run} \norm{\curr - \eq} = 0$;
and
\item
$\curr[\energy]$ converges,
\end{enumerate*}
both events occurring \acl{wp1} on the set $\event \cap \{ \init\in\nhd \}$.
We thus conclude that $\lim_{\run\to\infty} \curr[\energy] = 0$ and hence
\begin{align*}
\probof{ \curr\to\eq \given \init \in \nhd}
	&\geq \probof{ \event \cap \{ \curr\to\eq \} \given \init \in \nhd}
	\notag\\
	&= \probof{ \curr\to\eq \given \init \in \nhd, \event}
		\times \probof{\event \given \init \in \nhd}
	\geq 1 - \conf,
\end{align*}
and our proof is complete.
\end{proof}

\begin{proof}[Proof of \cref{cor:stable}]
For \cref{mod:full,mod:stoch}, taking $\bexp = \infty, \sexp = 0$, we obtain $\pexp > 1/2$. Since we have that $\sum_{\run=1}^{\infty}\step_\run = \infty$, we get that $\pexp \leq 1$, i.e., $\pexp \in (1/2,1]$.

For \cref{mod:value}, we have that $\curr[\bbound] = \bigoh(\curr[\expar])$ and $\curr[\sdev] = \bigoh(1/\sqrt{\curr[\expar]})$, i.e., $\bexp = \rexp$ and $\sexp = \rexp/2$. Now, since $\pexp \leq 1$, $\pexp + \bexp >1$ and $\pexp - \sexp > 1/2$, we obtain that $\pexp \in (2/3,1]$ and $(1-p)/2 < r/2 < p - 1/2$.
\end{proof}

\section{Rate of convergence to \acl{SOS} policies}
\label{app:SOS}

We now proceed with the proof of \cref{thm:SOS}, which we again restate below for convenience:

\SOS*

\begin{proof}
We will follow an approach similar to \cref{thm:stable} for the first part of the theorem.
More precisely, let $\basin = \setdef{\policy\in\policies}{\norm{\policy - \eq} \leq \radius}$ be a ball of radius $\radius$ centered at $\eq$ in $\policies$ such that \eqref{eq:SOS} holds for all $\policy\in\basin$.
Then, for all $\policy\in\basin\exclude{\eq}$, we have $\braket{\gvalue(\policy)}{\policy - \eq} \leq - \strong \norm{\policy - \eq} < 0$ by \cref{prop:var}.
Hence, defining the events $\curr[\event]$ and $\curr[\eventalt]$ as in \cref{eq:events}, and assuming that $\init$ is initialized in a ball of radius $\sqrt{2\thres}$ centered at $\eq$, viz.
\begin{equation}
\label{eq:nhd-init-1}
\nhd
	= \setdef{\policy\in\policies}{\energy(\policy) \leq \thres}
	= \setdef{\policy\in\policies}{\norm{\policy-\eq}^{2}/2 \leq \thres}.
\end{equation}
then, by \cref{lem:events} and \cref{prop:contain}, we readily obtain that
\begin{equation}
\label{eq:contain}
\probof{\curr[\eventalt] \given \init\in\nhd}
	\geq 1-\conf
	\quad
	\text{for all $\run=\running$}
\end{equation}
which implies that 
\begin{equation}
	\probof{\event \given \init\in\nhd} \geq 1-\conf
\end{equation}
if $\runoff$ is large enough relative to $\conf$.

For the second part, constraining \cref{eq:template} on the event $\event_{\run}$, we get:
\begin{align}\label{eq:basic energy}
	\next[\energy]\one_{\event_{\run}} &\leq \curr[\energy]\one_{\event_{\run}} + \step_{\run}\inner{\gvalue(\policy_{
	\run})}{\curr - \np}\one_{\event_{\run}} + \one_{\event_{\run}}\parens*{\step_{\run}\snoise_{\run} + \step_{\run}\sbias_{\run} + \step_{\run}^{2} \curr[\second]^{2}}
	\notag\\
	&\leq (1-2\mu\step_{\run})\curr[\energy]\one_{\event_{\run}} + \one_{\event_{\run}}\parens*{\step_{\run}\snoise_{\run} + \step_{\run}\sbias_{\run} + \step_{\run}^{2} \curr[\second]^{2}}
\end{align}
where the last inequality comes from \eqref{eq:SOS}.
Therefore, taking expectations, we obtain:
\begin{align}
\exof*{\next[\energy]\one_{\event_{\run}}}
	&\leq (1-2\mu\step_{\run}) \exof*{\curr[\energy]\one_{\event_{\run}}}
		+ \exof*{\one_{\event_{\run}}\parens*{\step_{\run}\snoise_{\run} + \step_{\run}\sbias_{\run} + \step_{\run}^{2} \curr[\second]^{2}}}
	\notag\\
	&\leq (1-2\mu\step_{\run}) \exof*{\curr[\energy]\one_{\event_{\run}}}
		+ \step_{\run} \exof*{\one_{\event_{\run}}\snoise_{\run}}
		+ \step_{\run} \exof{\one_{\event_{\run}}\sbias_{\run}}
		+ \step_{\run}^{2} \exof{\one_{\event_{\run}}\curr[\second]^{2}}
	\notag\\
	&= (1-2\mu\step_{\run}) \exof*{\curr[\energy]\one_{\event_{\run}}}
		+ \step_{\run}\exof{\one_{\event_{\run}}\sbias_{\run}}
		+ \step_{\run}^{2}\exof{\one_{\event_{\run}}\curr[\second]^{2}}
	\notag\\
	&\leq (1-2\mu\step_{\run}) \exof*{\curr[\energy]\one_{\event_{\run}}}
		+ \poldiam \probof{\event_{\run}} \step_{\run} \bbound_{\run}
		+ \probof{\event_{\run}}
			\parens*{\vbound\step_{\run}^{2} + 3\step_{\run}^{2} \sdev_{\run}^{2} + 3\step_{\run}^{2} \bbound_{\run}^{2}}
\end{align}
where the equality in the third line comes from the fact that
\begin{align}
\exof{\one_{\event_{\run}}\snoise_{\run}}
	= \exof*{\exof{\snoise_{\run} \one_{\event_{\run}} \given \filter_{\run}}}
	= \exof*{\one_{\event_{\run}} \exof{\snoise_{\run} \given \filter_{\run}}} = 0.
\end{align}
Now, since $\one_{\event_{\run+1}} \leq \one_{\event_{\run}}$, we further have
\begin{equation}
\exof*{\next[\energy]\one_{\event_{\run+1}}}
	\leq \exof*{\next[\energy]\one_{\event_{\run}}}
\end{equation}
and hence, setting $\bD_{\run} \defeq \exof*{\curr[\energy]\one_{\event_{\run}}}$, we get
\begin{align}
\next[\bD]
	&\leq (1-2\mu\step_{\run}) \curr[\bD]
		+ \poldiam \probof{\event_{\run}} \step_{\run} \bbound_{\run}
		+ \probof{\event_{\run}}
			\parens*{\vbound \step_{\run}^{2} + 3\step_{\run}^{2} \sdev_{\run}^{2} + 3 \step_{\run}^{2}\bbound_{\run}^{2}}
	\notag\\
	&\leq (1-2\mu\step_{\run}) \curr[\bD]
		+ \poldiam\step_{\run}\bbound_{\run}
		+ \vbound \step_{\run}^{2}
		+ 3\step_{\run}^{2} \sdev_{\run}^{2}
		+ 3 \step_{\run}^{2}\bbound_{\run}^{2}.
\end{align}
Therefore, taking $\step_{\run}, \bbound_{\run},\sdev_{\run}$ as per the statement of the theorem and noting that the terms $\step_{\run}^{2}$ and $\step_{\run}^{2} \bbound_{\run}^{2}$ are respectively dominated by the terms $\step_{\run}^{2}\sdev_{\run}^{2}$ and $\step_{\run} \bbound_{\run}$, we obtain
\begin{align}
	\next[\bD]
	&\leq \parens*{1-\frac{2\mu\step}{(\run + \runoff)^\pexp}} \curr[\bD]
		+ \frac{\Const_{1}}{(\run+\runoff)^{\pexp+\bexp}}
		+ \frac{\Const_{2}}{(\run+\runoff)^{2\pexp-2\sexp}}
	\notag\\
	&\leq \parens*{1-\frac{2\mu\step}{(\run + \runoff)^\pexp}}\curr[\bD] + \frac{\Const_{1} + \Const_{2}}{(\run+\runoff)^{\pexp+\qexp}}
\end{align}
for some $\Const_{1}, \Const_{2} >0$, where $\qexp = \min\{\bexp,\pexp - 2\sexp\}$, as per the theorem's statement.
Therefore, by a straightforward modification of Chung's lemma \cite[Lemmas 2\&3]{Chu54}, \cite[p.~45]{Pol87}, we get
\begin{equation}
	\bD_{\run}
	=
	\begin{cases}
	\bigoh(1/\run^{2\strong\step})
		&\quad
		\text{if $\pexp=1$ \, and \, $2\strong\step<\qexp$},
		\\
	\bigoh(1/\run^{\qexp})
		&\quad
		\text{otherwise}.
	\end{cases}
\end{equation}
Accordingly, letting $\run \to \infty$ and recalling that $\exof{\curr[\energy]\one_{\event}} \leq \exof{\curr[\energy]\one_{\event_{\run}}} = \curr[\bD]$ 
\begin{align}
	\lim_{\run \to \infty} \exof{\curr[\energy]\one_{\event}}
	= 0.
\end{align}
Then, by Fatou's lemma \citep{Fol99}, we obtain
\begin{align}
0
	\leq \exof{\liminf_{\run \to \infty} \curr[\energy]\one_{\event}}
	\leq \liminf_{\run \to \infty} \exof{\curr[\energy]\one_{\event}}
	= 0,
\end{align}
which readily shows that $\exof{\liminf_{\run \to \infty}\curr[\energy]\one_{\event}} = 0$.
Finally, since $\liminf_{\run \to \infty}\curr[\energy]\one_{\event} \geq 0$ \as and $\exof{\liminf_{\run \to \infty}\curr[\energy]\one_{\event}} = 0$, we get that
\begin{equation}
\liminf_{\run \to \infty} \curr[\energy]\one_{\event}
	= 0
	\quad
	\text{\acl{wp1}}.
\end{equation}
Therefore, there exists a subsequence $\energy_{\run_{\runalt}}$ that converges to $0$ \acl{wp1} on the event $\event$, \ie $\policy_{\run_{\runalt}}$ converges to $\eq$. 
Hence, invoking \cref{prop:energy}, we further deduce that $\energy_{\run}$ converges to some $\energy_\infty$ \acl{wp1} on $\event$, and thus, we obtain that $\lim_{\run \to \infty}\energy_{\run} = 0$ on $\event$.
We thus get
\begin{align}
\probof{ \curr\to\eq \given \init \in \nhd}
	&\geq \probof{ \event \cap \{ \curr\to\eq \} \given \init \in \nhd}
	\notag\\
	&= \probof{ \curr\to\eq \given \init \in \nhd, \event}
		\times \probof{\event \given \init \in \nhd}
	\geq 1 - \conf,
\end{align}
as claimed.

For the last part of the theorem, note that
\begin{align}
\curr[\bD]
	= \ex[D_{\run} \one_{\event_{\run}}]
	\geq \exof*{D_{\run} \one_{\event}}
	&= \exof*{\exof{D_{\run} \given \sigma(\event)} \one_{\event}}
	\notag\\
	&= \exof*{\exof{D_{\run} \given \event} \one_{\event}} 
	\notag\\
	&= \exof{D_{\run} \given \event} \exof*{\one_{\event}} 
	\notag\\
	&= \exof{D_{\run} \given \event} \prob\parens{\event} 
\end{align}
where we used the fact that $\exof{D_{\run} \given \sigma(\event)} \one_{\event} = \exof{D_{\run} \given \event} \one_{\event}$.
We thus conclude that
\begin{align}
\exof*{\norm{\policy_{\run} - \np}^{2} \given \event}
	= 2\exof{D_{\run} \given \event} \leq \frac{2}{\prob(\event)}\curr[\bD]
	\leq \frac{2}{1-\conf}\curr[\bD]
	\notag\\
\end{align}
and hence
\begin{equation*}
\begin{aligned}
\exof*{\norm{\policy_{\run} - \np}^{2} \given \event} =
	\begin{cases}
	\bigoh(1/\run^{2\strong\step})
		&\quad
		\text{if $\pexp=1$ \, and \, $2\strong\step<\qexp$},
		\\
	\bigoh(1/\run^{\qexp})
		&\quad
		\text{otherwise}.
	\end{cases}
\end{aligned}
\qedhere
\end{equation*}
\end{proof}

\begin{proof}[Proof of \cref{cor:SOS}]
For \cref{mod:full,mod:stoch}, taking $\bexp = \infty, \sexp = 0$ we readily get that $\qexp = \pexp$ and $\pexp > 1/2$. Since we require that $\sum_{\run=1}^{\infty}\step_{\run} = \infty$, we obtain that $\pexp \in (1/2,1]$. Hence, for $\pexp = 1$ and $2\strong\step > 1$ we obtain $\bigoh(1/\run)$ rate of convergence.

For \cref{mod:value}, we have that $\curr[\bbound] = \bigoh(\curr[\expar])$ and $\curr[\sdev] = \bigoh(1/\sqrt{\curr[\expar]})$, \ie $\bexp = \pexp/2$ and $\sexp = \pexp/4$, and, hence, we readily get that $\qexp = \pexp/2$. Now, since $\pexp \leq 1$, $\pexp + \bexp >1$ and $\pexp - \sexp > 1/2$, we obtain that $\pexp \in (2/3,1]$. Hence, for $\pexp = 1$ and $\strong\step > 1$, we obtain $\bigoh(1/\sqrt{\run})$ rate of convergence.
\end{proof}

We conclude this appendix with a detailed statement and proof of the fact that, when run with perfect policy gradients (\ie as per \cref{mod:full}), the sequence of play generated by \eqref{eq:PG} achieves a geometric convergence rate to Nash policies satisfying \eqref{eq:SOS}.
The precise result is as follows:

\begin{proposition}
\label{prop:geometric}
Let $\eq$ be a \acl{SOS} policy,
let $\basin$ be a neighborhood of $\eq$ such that \eqref{eq:strong} holds on $\basin$,
and
let $\curr$ be the sequence of play generated by \eqref{eq:PG} with a sufficiently small constant step-size $\step > 0$ and perfect policy gradients as per \cref{mod:full}.
Then, there exists a neighborhood $\nhd$ of $\eq$ in $\policies$ and some $\rho>0$ such that
\begin{equation}
\label{eq:geometric}
\norm{\curr - \eq}
	= \bigoh(\exp(-\rho\run))
	\quad
	\text{whenever $\init\in\nhd$}.
\end{equation}
\end{proposition}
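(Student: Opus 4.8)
The plan is to show that, for \cref{mod:full}, the update \eqref{eq:PG} is a local contraction for the Euclidean distance to $\eq$, and then to iterate. Concretely, I aim to establish a recursion $\norm{\next - \eq}^{2} \leq q\,\norm{\curr - \eq}^{2}$ with contraction factor $q = 1 - 2\tilde\strong\step + \lips^{2}\step^{2}$ valid on a neighborhood of $\eq$, and then to choose $\step < 2\tilde\strong/\lips^{2}$ so that $q \in (0,1)$; telescoping gives $\norm{\curr-\eq} \leq q^{(\run-1)/2}\norm{\init-\eq} = \bigoh(\exp(-\rho\run))$ with $\rho = -\tfrac{1}{2}\ln q > 0$.

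First I would exploit the variational characterization of $\eq$. Since $\eq$ is Nash, \cref{lem:fos2ne} supplies the \eqref{eq:FOS} inequality $\braket{\gvalue(\eq)}{\policy-\eq} \leq 0$ for all $\policy\in\policies$, which is exactly the statement that $\eq$ is a fixed point of the projected-gradient map, $\eq = \proj_{\policies}(\eq + \step\gvalue(\eq))$ for every $\step>0$. Writing $\next = \proj_{\policies}(\curr + \step\gvalue(\curr))$ and using non-expansiveness of $\proj_{\policies}$, I would then bound
\[
\norm{\next-\eq}^{2} \leq \norm{(\curr-\eq) + \step(\gvalue(\curr)-\gvalue(\eq))}^{2} = \norm{\curr-\eq}^{2} + 2\step\braket{\gvalue(\curr)-\gvalue(\eq)}{\curr-\eq} + \step^{2}\norm{\gvalue(\curr)-\gvalue(\eq)}^{2}.
\]
The purpose of subtracting $\gvalue(\eq)$ inside the projection (rather than estimating $\norm{\gvalue(\curr)}^{2}$ directly as in \cref{lem:energy}) is that the normal component of the gradient at a boundary equilibrium is absorbed; consequently the last term is controlled by Lipschitz continuity of $\gvalue$ (\cref{lem:smoothness}) as $\norm{\gvalue(\curr)-\gvalue(\eq)}^{2} \leq \lips^{2}\norm{\curr-\eq}^{2}$.

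The step I expect to be the main obstacle is the cross term $\braket{\gvalue(\curr)-\gvalue(\eq)}{\curr-\eq}$. Note that \cref{prop:var}\,(a) controls $\braket{\gvalue(\curr)}{\curr-\eq}$, not this difference; the two differ by $-\braket{\gvalue(\eq)}{\curr-\eq}$, which \eqref{eq:FOS} only guarantees to be nonnegative, and at a boundary equilibrium this is precisely the uncontrolled $\bigoh(\norm{\curr-\eq})$ term that would obstruct a naive estimate. To get around this I would return to the defining condition \eqref{eq:SOS}: a first-order Taylor expansion gives $\braket{\gvalue(\curr)-\gvalue(\eq)}{\curr-\eq} = (\curr-\eq)^{\top}\Jac_{\gvalue}(\eq)(\curr-\eq) + o(\norm{\curr-\eq}^{2})$, and since $\curr\in\policies$ the direction $\curr-\eq$ lies in the closed convex cone $\tcone_{\policies}(\eq)$ of feasible directions, on which \eqref{eq:SOS} renders the quadratic form $w\mapsto w^{\top}\Jac_{\gvalue}(\eq)w$ negative. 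A compactness argument on the intersection of this cone with the unit sphere then upgrades negativity to a uniform bound $w^{\top}\Jac_{\gvalue}(\eq)w \leq -2\tilde\strong\norm{w}^{2}$, and absorbing the Taylor remainder for $\curr$ close to $\eq$ yields $\braket{\gvalue(\curr)-\gvalue(\eq)}{\curr-\eq} \leq -\tilde\strong\norm{\curr-\eq}^{2}$ on a ball $\basin$ around $\eq$.

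Finally I would combine the three estimates to obtain $\norm{\next-\eq}^{2} \leq (1 - 2\tilde\strong\step + \lips^{2}\step^{2})\norm{\curr-\eq}^{2}$ on $\basin$, take $\step < 2\tilde\strong/\lips^{2}$ so that $q := 1 - 2\tilde\strong\step + \lips^{2}\step^{2} \in (0,1)$, and set $\nhd = \basin$. Since $q<1$ the map is distance-decreasing, so $\init\in\nhd$ forces $\curr\in\nhd$ for all $\run$ by induction (forward invariance of the neighborhood), the contraction applies at every step, and telescoping delivers the claimed geometric decay. In summary, the difficulty is concentrated in the cross-term analysis: converting the cone-restricted, Jacobian-level condition \eqref{eq:SOS} into a genuine local strong-monotonicity estimate for the difference $\gvalue(\curr)-\gvalue(\eq)$, which is what the non-expansive projection bound requires and which \cref{prop:var}\,(a) alone does not furnish.
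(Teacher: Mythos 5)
Your proposal is correct and follows essentially the same route as the paper's proof: both reduce to the recursion $\norm{\next-\eq}^{2}\leq(1-2\strong\step+\step^{2}\lips^{2})\norm{\curr-\eq}^{2}$ by (i) using \eqref{eq:FOS} to recenter the gradient at $\eq$ so that the quadratic term involves $\gvalue(\curr)-\gvalue(\eq)$ and is controlled by Lipschitz continuity, and (ii) extracting the local strong-monotonicity estimate $\braket{\gvalue(\curr)-\gvalue(\eq)}{\curr-\eq}\leq-\strong\norm{\curr-\eq}^{2}$ from \eqref{eq:SOS} via a Taylor expansion and a compactness argument on the tangent cone, exactly as in the proof of \cref{prop:var}. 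The only (immaterial) difference is that you handle the projection by non-expansiveness together with the fixed-point property $\eq=\proj_{\policies}(\eq+\step\gvalue(\eq))$, whereas the paper uses the three-point identity for the projection plus Young's inequality; both yield the identical contraction factor.
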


\begin{proof} 
The crucial part of the proof is the observation that, in the case of \cref{mod:full}, the energy inequality \eqref{eq:energy} of \cref{lem:energy} may be written in the sharper form:
\begin{align}
\label{eq:energy-strong}
\next[\energy]
	&\leq \curr[\energy]
		+ \step \braket{\gvalue(\curr) - \gvalue(\eq)}{\curr - \eq}
		+ \tfrac{1}{2} \step^{2} \norm{\gvalue(\curr) - \gvalue(\eq)}^{2}.
\end{align}
To see this, consider the development
\begin{align}
\label{eq:energy-dev1}
\norm{\next - \eq}^{2}
	&= \norm{\next - \curr + \curr - \eq}^{2}
	\notag\\
	&= \norm{\curr - \eq}^{2}
		+ 2 \braket{\next - \curr}{\curr - \eq}
		+ \norm{\next - \curr}^{2}
	\notag\\
	&= \norm{\curr - \eq}^{2}
		+ 2 \braket{\next - \curr}{\next - \eq}
		- \norm{\next - \curr}^{2}
	\notag\\
	&\leq \norm{\curr - \eq}^{2}
		+ 2\step\braket{\gvalue(\curr)}{\next - \eq}
		- \norm{\next - \curr}^{2},
	\notag\\
	&\leq \norm{\curr - \eq}^{2}
		+ 2\step\braket{\gvalue(\curr) - \gvalue(\eq)}{\next - \eq}
		- \norm{\next - \curr}^{2},
\end{align}
where the last line follows from \eqref{eq:FOS} and, in the penultimate step, we used the fact that $\eq\in\policies$ so $\braket{\next - (\curr + \step\gvalue(\curr))}{\next - \eq} \leq 0$.
In addition, by Young's inequality, we have
\begin{align}
\label{eq:energy-dev2}
\braket{\gvalue(\curr) - \gvalue(\eq)}{\next - \eq}
	&= \braket{\gvalue(\curr) - \gvalue(\eq)}{\curr - \eq}
		+ \braket{\gvalue(\curr) - \gvalue(\eq)}{\next - \curr}
	\notag\\
	&\leq \braket{\gvalue(\curr) - \gvalue(\eq)}{\curr - \eq}
		+ \frac{\step}{2} \norm{\gvalue(\curr) - \gvalue(\eq)}^{2}	
		+ \frac{1}{2\step} \norm{\next - \curr}^{2}
\end{align}
so \eqref{eq:energy-strong} follows by substituting \eqref{eq:energy-dev2} in \eqref{eq:energy-dev1} and simplifying.

Now, since $\gvalue$ is $G$-Lipschitz (\cf \cref{lem:smoothness} in \cref{app:marl}), we have $\norm{\gvalue(\curr) - \gvalue(\eq)} \leq G \norm{\curr - \eq}$, so the energy inequality \eqref{eq:energy-strong} becomes
\begin{equation}
\next[\energy]
	\leq \curr[\energy]
		+ \step \braket{\gvalue(\curr) - \gvalue(\eq)}{\curr - \eq}
		+ \step^{2} G^{2} \curr[\energy].
\end{equation}
However, if $\curr \in \basin$, \cref{prop:var} further yields
\begin{equation}
\braket{\gvalue(\curr) - \gvalue(\eq)}{\curr - \eq}
	\leq - \strong \norm{\curr - \eq}^{2}
\end{equation}
so
\begin{equation}
\next[\energy]
	\leq \curr[\energy]
		- \strong\step \norm{\curr - \eq}^{2}
		+ \step^{2} G^{2} \curr[\energy]
	= (1 - 2\strong\step + \step^{2}G^{2}) \curr[\energy].
\end{equation}
Thus, if $\step < 2\strong/G^{2}$ and $\init$ is initialized in a ball centered at $\eq$ and contained within $\basin$, our assertion follows from a straightforward induction argument.
\end{proof}

\section{Rate of convergence to strict Nash policies}
\label{app:deterministic}




\subsection{Structural preliminaries}

To prove \cref{thm:strict}, we will first require some notions describing the geometry of $\policies$ near $\eq$.
Referring to \cite{RW98} for a full treatment, we have:

\begin{definition}
Let $\cvx$ be a convex set and let $x\in \cvx$.
Then the tangent cone $\tcone_{\cvx}(x)$ is defined as the set of all rays emanating from $x$ and intersecting $\cvx$ to at least one other point different from $x$. The \textit{polar cone} $\pcone_{\cvx}(x)$ to $\cvx$ at $\point$ is then defined $\pcone_{\cvx}(x) = \braces{y: \braket{\dpoint}{\tvec} \leq 0 \text{ for all } \tvec\in \tcone_{\cvx}(x)}$, where $\dpoint$ belong in the dual space of the vector space in which $\cvx$ is defined.
\end{definition}

With these general definitions in hand, we proceed to characterize some further projections of Euclidean projections on $\policies$ that will play an important role in the sequel.
For notational simplicity, we suppress the player and state indices in the statement and proof of the next lemma.

\begin{lemma}
\label{lem:KKT}
$\point=\proj(\dpoint)$ if and only if there exist $\mu\in \R$ and $\nu_\pure\in\R_{+}$ such that, for all $\pure\in \pures$, we have
$\dpoint_\pure= \point_\pure+\mu-\nu_\pure$
with $\nu_\pure \geq 0$ and $\point_\pure \nu_\pure=0$.
\end{lemma}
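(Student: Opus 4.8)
The plan is to reduce to the Euclidean projection onto a single probability simplex and then to recognize the claimed identities as the optimality (equivalently, variational) conditions for that projection. Since $\proj_{\policies}$ acts block-wise over the product $\policies = \prod_{\play\in\players} \simplex(\pures_{\play})^{\states}$ with respect to the standard inner product, it decomposes into independent projections onto the simplex factors, so it suffices to treat one factor $\simplex(\pures) = \setdef{\point \in \R^{\pures}}{\point_\pure \geq 0, \ \sum_\pure \point_\pure = 1}$, exactly as in the statement. The starting point is the standard obtuse-angle characterization: $\point = \proj(\dpoint)$ if and only if $\point \in \simplex(\pures)$ and $\braket{\dpoint - \point}{\pointalt - \point} \leq 0$ for every $\pointalt \in \simplex(\pures)$; equivalently, $\dpoint - \point \in \pcone_{\simplex(\pures)}(\point)$. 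Thus the lemma is really an explicit description of this polar cone, and the two directions amount to extracting, respectively verifying, the multipliers $(\mu, \nu)$.

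For the forward implication, I would split the coordinates into the support $I = \setdef{\pure}{\point_\pure > 0}$ (nonempty, since $\sum_\pure \point_\pure = 1$) and its complement $A = \setdef{\pure}{\point_\pure = 0}$. For any $\pure, \pure' \in I$, the two-sided perturbations $\pointalt = \point \pm \eps(e_{\pure'} - e_\pure)$ remain in $\simplex(\pures)$ for all small $\eps > 0$, so the obtuse-angle inequality, applied with both signs, forces $(\dpoint - \point)_\pure = (\dpoint - \point)_{\pure'}$; I denote this common value by $\mu$. For $\pure \in A$ and any $\pure' \in I$, only the one-sided injection $\pointalt = \point + \eps(e_\pure - e_{\pure'})$ is feasible, and it yields $(\dpoint - \point)_\pure \leq \mu$. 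Setting $\nu_\pure \defeq \mu - (\dpoint_\pure - \point_\pure)$ then gives $\dpoint_\pure = \point_\pure + \mu - \nu_\pure$ for every $\pure$, with $\nu_\pure = 0$ on $I$ and $\nu_\pure \geq 0$ on $A$; in both cases $\nu_\pure \geq 0$ and $\nu_\pure \point_\pure = 0$.

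Conversely, given $\point \in \simplex(\pures)$ and scalars $\mu \in \R$, $\nu_\pure \in \R_{+}$ with $\dpoint_\pure = \point_\pure + \mu - \nu_\pure$ and $\nu_\pure \point_\pure = 0$, the obtuse-angle inequality can be checked directly: for any $\pointalt \in \simplex(\pures)$,
\[
\braket{\dpoint - \point}{\pointalt - \point}
	= \mu \sum_\pure (\pointalt_\pure - \point_\pure) - \sum_\pure \nu_\pure(\pointalt_\pure - \point_\pure)
	= - \sum_\pure \nu_\pure \pointalt_\pure
	\leq 0,
\]
since the $\mu$-term vanishes because both points sum to $1$, the term $\sum_\pure \nu_\pure \point_\pure$ vanishes by complementary slackness, and the last inequality uses $\nu_\pure, \pointalt_\pure \geq 0$. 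Hence $\point = \proj(\dpoint)$, completing the equivalence.

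The one genuinely delicate step is the extraction of $\mu$ in the forward direction: one must show that $\dpoint - \point$ is constant along the support of $\point$ and no larger off it, which hinges on selecting the correct feasible perturbation directions — two-sided exchanges within $I$ versus one-sided injections into $A$ — and on $I$ being nonempty. Everything else (existence and uniqueness of the projection, its equivalence with the obtuse-angle inequality, and the converse verification above) is routine convex-analytic bookkeeping.
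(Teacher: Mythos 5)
Your proof is correct, but it takes a genuinely different route from the paper's. The paper simply writes down the Lagrangian of the problem $\min_{\point\in\simplex(\pures)}\norm{\dpoint-\point}^{2}$ and invokes the KKT conditions, noting that affine constraints make them sufficient as well as necessary; the stated identities are then read off as stationarity plus complementary slackness. You instead work from the obtuse-angle (polar-cone) characterization of the Euclidean projection and construct the multipliers by hand: two-sided exchange directions within the support force $(\dpoint-\point)_{\pure}$ to be constant there (this constant is $\mu$), one-sided injections into the zero set give the inequality that makes $\nu_{\pure}\geq 0$, and the converse is a direct verification that $\braket{\dpoint-\point}{\pointalt-\point}=-\sum_{\pure}\nu_{\pure}\pointalt_{\pure}\leq 0$. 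Your argument is longer but entirely self-contained — it never appeals to KKT sufficiency or constraint qualifications — and it makes both directions of the equivalence explicit, whereas the paper's proof is terse about the converse. The paper's approach buys brevity by outsourcing the work to standard convex-optimization theory. Both are valid; no gaps.
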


\begin{proof}
Recall that 
\(
\proj(\dpoint)
    = \argmin_{\point\in\simplex(\pures)} \norm{\dpoint-\point}^{2}.
\)
Our result then follows by applying the KKT conditions to this optimization problem and noting that, since the constraints are affine, the KKT conditions are sufficient for optimality.
Our Langragian is
\begin{equation}
\mathcal{L}(\point,\mu,\nu)= \sum_{\pure\in\pures}\frac{1}{2}(\dpoint_\pure - \point_\pure )^2-{\mu(\sum_{\pure\in\pures} \point_\pure-1)} + {\sum_{\pure\in\pures} \nu_\pure \point_\pure}
\end{equation}
where the set of constraints (i) of the statement of the lemma  are the stationarity constraints, 
which in our case are $\nabla\mathcal{L}(\point,\mu,\nu)=0\Leftrightarrow
\nabla( \sum_{\pure\in\pures}\frac{1}{2}(\dpoint_\pure - \point_\pure )^2 )=\mu\nabla(\sum_{\pure\in\pures} \point_\pure-1) - \sum_{\pure\in\pures} \nu_\pure\nabla \point_\pure$
, while the set of constraints (ii) of the statement of the lemmas are the complementary slackness constraints.
Note that complementary slackness implies $\nu_\pure>0$ whenever $\pure\notin \supp(\point)$, so our proof is complete.
\end{proof}

Our next result is a concrete consequence of \cref{prop:var} which will be very useful in establishing the stability estimates required for the proof of \cref{thm:strict}.

\begin{lemma}\label{lem:basin of drift}
Let $\np = (\pure^*_{\play,\rstate})_{\play\in\players, \rstate\in\states}$ be a strict Nash policy.
Then there exists a neighborhood $\nhd$ of $\np$ and constants $c_{\play,\rstate}$ such that for each player $\play\in\players$ and state $\rstate\in\states$, we have:
\begin{equation}\label{eq:variational-inequality}
\text{$\payv_{\play\pure_{\play,\rstate}^*}\parens{\policy} - \payv_{\play\pure_{\play,\rstate}}\parens{\policy} \geq c_{\play,\rstate}\;$ for all $\policy\in\nhd$ and $\pure_\play\neq\pure_\play^*$, $\pure_\play\in\pures_\play$.}
\end{equation} 
\end{lemma}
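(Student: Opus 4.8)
The plan is to reduce the claimed uniform gap to a strict \emph{pointwise} gap at $\np$ and then propagate it to a neighborhood by continuity of the policy gradients. The first thing to pin down is the meaning of $\payv_{\play\pure_{\play,\rstate}}(\policy)$: it is simply the $(\pure_{\play},\rstate)$ coordinate of the individual gradient $\gvalue_{\play}(\policy)=\nabla_{\play}\rvalue_{\play,\stateDist}(\policy)$, \ie the marginal value to player $\play$ of shifting an infinitesimal amount of probability mass onto action $\pure_{\play}$ at state $\rstate$. Since $\np$ is a \emph{strict} Nash policy, \cref{def:refine} tells us that the \eqref{eq:FOS} inequality $\braket{\gvalue(\np)}{\policy-\np}\leq0$ is strict for every $\policy\neq\np$, and I would exploit this by testing it against the finitely many vertex deviations of $\np$.

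Concretely, fix a player $\play$, a state $\rstate$, and an action $\pure_{\play}\neq\pure^{*}_{\play,\rstate}$, and let $\policy$ be the policy profile that coincides with $\np$ everywhere except that player $\play$ plays the pure action $\pure_{\play}$ at state $\rstate$ (so $\policy\in\policies$ and $\policy\neq\np$). Because $\np$ is deterministic, $\np_{\play}(\cdot\mid\rstate)$ is the vertex $\mathbf{e}_{\pure^{*}_{\play,\rstate}}$ of $\simplex(\pures_{\play})$, so $\policy-\np$ is supported entirely on the $(\play,\rstate)$ block, where it equals $\mathbf{e}_{\pure_{\play}}-\mathbf{e}_{\pure^{*}_{\play,\rstate}}$. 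Substituting into the strict \eqref{eq:FOS} condition collapses the inner product to this single block and yields
\[
\payv_{\play\pure_{\play,\rstate}}(\np) - \payv_{\play\pure^{*}_{\play,\rstate}}(\np)
	= \braket{\gvalue(\np)}{\policy-\np} < 0,
\]
\ie a strictly positive gap $\payv_{\play\pure^{*}_{\play,\rstate}}(\np)-\payv_{\play\pure_{\play,\rstate}}(\np)>0$ at $\np$. As there are only finitely many triples $(\play,\rstate,\pure_{\play})$, I would then set
\[
2c_{\play,\rstate}
	\defeq \min_{\pure_{\play}\neq\pure^{*}_{\play,\rstate}}
		\braces[\big]{\payv_{\play\pure^{*}_{\play,\rstate}}(\np) - \payv_{\play\pure_{\play,\rstate}}(\np)}
	> 0,
\]
which is a genuine positive constant by strictness and finiteness.

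It remains to extend this gap to a neighborhood $\nhd$ of $\np$. Here I would invoke the smoothness of the value functions (the Lipschitz-gradient estimate of \cref{lem:smoothness} in \cref{app:marl}), which guarantees that each coordinate $\payv_{\play\pure_{\play,\rstate}}(\policy)$ is continuous in $\policy$; hence $\nhd$ can be shrunk so that $\abs{\payv_{\play\pure_{\play,\rstate}}(\policy)-\payv_{\play\pure_{\play,\rstate}}(\np)}<c_{\play,\rstate}/2$ uniformly over the finitely many relevant coordinates and all $\policy\in\nhd$. A triangle-inequality estimate then gives $\payv_{\play\pure^{*}_{\play,\rstate}}(\policy)-\payv_{\play\pure_{\play,\rstate}}(\policy)\geq 2c_{\play,\rstate}-c_{\play,\rstate}=c_{\play,\rstate}$ for all such $\policy$, which is exactly \eqref{eq:variational-inequality}. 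The bulk of this argument is bookkeeping; the only step requiring genuine care is the first one, namely verifying that the \emph{global} strictness of \eqref{eq:FOS} really does apply to the admissible vertex deviations $\policy$ and that the inner product genuinely reduces to the single $(\play,\rstate)$ block. Thus the main ``obstacle'' here is conceptual rather than computational \textendash\ it amounts to correctly translating the global strictness of the equilibrium into the coordinatewise language of \eqref{eq:variational-inequality} \textendash\ after which continuity of $\gvalue$ does the rest.
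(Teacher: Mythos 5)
Your proposal is correct and follows essentially the same route as the paper's own proof: both test the strictness of the equilibrium against the single-block tangent direction $e_{\play,\pure_{\play,\rstate}}-e_{\play,\pure^*_{\play,\rstate}}$ to obtain a strictly positive coordinatewise gap at $\np$, and then propagate it to a neighborhood by continuity of $\gvalue$. Your version is, if anything, slightly more explicit about the finite minimum over $(\play,\rstate,\pure_{\play})$ and the $c_{\play,\rstate}/2$ continuity argument, which the paper compresses into ``by continuity.''
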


\begin{proof}
Our claim  is a consequence of the definition of strict Nash policies. Specifically, from \cref{prop:var} we have
\begin{equation}
    \inner{\gvalue(\np)}{z} <0 \quad\text{ for all }\quad z\in \tcone(\np), z\neq 0
\end{equation}
Let $z = e_{\play,\pure_{\play,\rstate}} -e_{\play,\pure^*_{\play,\rstate}}$, then we get that 
\begin{equation}
    \payv_{\play\pure_{\play,\rstate}^*}\parens{\np} - \payv_{\play\pure_{\play,\rstate}}\parens{\np} >0
\end{equation}
where $e_{\play,\pure_{\play,\rstate}}$ is the vector that has one only in the index and zero anywhere else.
By continuity there exists a neighborhood $\nhd\subseteq\pspace$ and $c_{\play,\rstate} > 0$ for each player $\play\in\players$ such that 
\begin{equation*}
 \payv_{\play\pure_{\play,\rstate}^*}\parens{\policy} - \payv_{\play\pure_{\play,\rstate}}\parens{\policy} \geq c_{\play,\rstate} \quad\text{ for all }\quad\policy\in\nhd
 \qedhere
\end{equation*}
\end{proof}

Our final result is intimately tied to the lazy projection step in \eqref{eq:LPG}, and quantifies the relation between initializations in $\dspace$ and $\policies$. 

\begin{lemma}\label{lem:equivalence} Let $\np = (\pure^*_{\play,\rstate})_{\play\in\players, \rstate\in\states}$, be a  deterministic policy. For each agent $\play\in\players$ and each state $\rstate\in\states$, let $\dstate_{\play,\pure_{\play,\rstate}} -\dstate_{\play,\pure^*_{\play,\rstate}}$ be the difference of the aggregated gradients between the strategy of the equilibrium and any other strategy $\pure_\play^*\neq \pure_{\play}\in\pures_\play$. Then for any $\eps >0 $ such that $\nhd_{\eps} = \braces{\policy: \policy_{\play,\pure_{\play,\rstate}^*}\geq 1-\eps \text{ for all } \play\in\players \text{ and } \rstate\in\states}$, there exist $M_{\play,\eps,\rstate}$ such that if $\dbasin_{\play,\rstate} = \braces{\dstate\in\R^{\pures_\play}: \dstate_{\play,\pure_{\play,\rstate}} -\dstate_{\play,\pure^*_{\play,\rstate}} < - M_{\play,\eps,\rstate}}$  then $\prod_{\play\in\players,\rstate\in\states}\proj_{\policies_{\play}}\parens{\dbasin_{\play,\rstate}}\subseteq \nhd_{\eps}$.
\end{lemma}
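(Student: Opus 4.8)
The plan is to reduce the claim to a one-block thresholding statement on a single simplex and then read it off directly from the KKT characterization of \cref{lem:KKT}; strictness of $\np$ plays no role here, only that $\np$ is deterministic (one distinguished action $\pure^*_{\play,\rstate}$ per block). First I would exploit the product structure $\policies_\play = \prod_{\rstate\in\states}\simplex(\pures_\play)$ and $\policies = \prod_{\play\in\players}\policies_\play$: since the Euclidean norm splits as a sum over the per-player, per-state blocks, the projection $\proj_{\policies}$ factorizes into the independent block projections $\proj_{\simplex(\pures_\play)}$. Hence it suffices to fix a single player $\play$ and state $\rstate$ and prove the corresponding one-block assertion, namely that $\point = \proj_{\simplex(\pures_\play)}(\dstate)$ satisfies $\point_{\pure^*}\geq 1-\eps$ for every $\dstate\in\dbasin_{\play,\rstate}$, where $\pure^* = \pure^*_{\play,\rstate}$; assembling the blocks then gives $\prod_{\play,\rstate}\proj_{\policies_\play}(\dbasin_{\play,\rstate})\subseteq\nhd_\eps$.

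Next I would rewrite the block projection as a thresholding rule via \cref{lem:KKT}. The conditions there furnish a single scalar $\mu$ and multipliers $\nu_\pure\geq 0$ with $\point_\pure\nu_\pure = 0$ and $\dstate_\pure = \point_\pure + \mu - \nu_\pure$; eliminating $\nu_\pure$ through complementary slackness yields $\point_\pure = \max\{\dstate_\pure - \mu,\,0\}$, with $\mu$ fixed implicitly by the normalization $\sum_{\pure}\max\{\dstate_\pure - \mu,\,0\} = 1$. The key preliminary fact is that the equilibrium action lies in the support, i.e.\ $\point_{\pure^*} > 0$: if instead $\point_{\pure^*} = 0$ then $\dstate_{\pure^*}\leq\mu$, and since every other coordinate obeys $\dstate_\pure < \dstate_{\pure^*} - M_{\play,\eps,\rstate} < \mu$ (using $M_{\play,\eps,\rstate}>0$) we would get $\point_\pure = 0$ for all $\pure$, contradicting $\sum_\pure\point_\pure = 1$. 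Thus $\nu_{\pure^*} = 0$ and $\point_{\pure^*} = \dstate_{\pure^*} - \mu \in (0,1]$.

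With the support claim in hand I would close the estimate by comparing each non-equilibrium coordinate to $\point_{\pure^*}$. For $\pure\neq\pure^*$ we have $\dstate_\pure - \mu = (\dstate_\pure - \dstate_{\pure^*}) + \point_{\pure^*} < -M_{\play,\eps,\rstate} + 1$, where I use $\point_{\pure^*}\leq 1$ together with the defining inequality of $\dbasin_{\play,\rstate}$. Choosing $M_{\play,\eps,\rstate}\geq 1$ therefore forces $\dstate_\pure - \mu < 0$, so $\point_\pure = 0$ for every $\pure\neq\pure^*$ and hence $\point_{\pure^*} = 1 \geq 1-\eps$. In fact the constant $M_{\play,\eps,\rstate} = 1$ already suffices uniformly in $\eps$, but I would retain the $\eps$-indexing in the notation for bookkeeping, since \cref{thm:strict} invokes this lemma with a shrinking neighborhood.

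The only delicate point in the whole argument is the support step of the second paragraph, where one must rule out the threshold $\mu$ cutting off the equilibrium coordinate; everything else is an elementary computation with the thresholded form of the simplex projection. I therefore expect the support argument to be the main obstacle to phrase cleanly, and the factorization and final comparison to be routine.
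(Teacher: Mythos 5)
Your proof is correct and follows essentially the same route as the paper's: both arguments reduce to a single simplex block and read the conclusion off the KKT characterization of \cref{lem:KKT}. The only (harmless) difference is bookkeeping — the paper takes $M_{\play,\eps} > 1-\eps/\abs{\pures_\play}$ and bounds each non-equilibrium coordinate by $\eps/\abs{\pures_\play}$ before summing, whereas your soft-threshold reformulation $\point_\pure = \max\{\dstate_\pure-\mu,0\}$ together with the support step gives the slightly stronger conclusion that the projection is exactly the vertex once $M_{\play,\eps,\rstate}\geq 1$.
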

\begin{proof}

Consider an arbitrary player $\play\in\players$, a state $\rstate\in\states$, and let $\dbasin_\play\parens{M_{\play,\eps,\rstate}}$ be an open set as defined in the statement of the lemma. For notational simplicity, we will drop the index $\rstate$.
We will show that any $M_{\play,\eps}> 1-\frac{\eps}{|\pures_\play|}>0 $ satisfies our claim.
By  using \cref{lem:KKT} for a $\dstate_\play\in \dbasin_\play\parens{M_{\play,\eps}}$ with $\policy_\play=\proj\parens{\dstate_\play}$ we have that 
\begin{equation}
\begin{aligned}
    &\dstate_{\play\pure_\play^*}-\dstate_{\play\pure_\play}>M_{\play,\eps}\\
    &  {\policy_{\play\pure_\play^*}}-{\policy_{\play\pure_\play}} - (\nu_{\pure^*_\play}-\nu_{\pure_\play}) > M_{\play,\eps}\label{eq:contradiction}
\end{aligned}
\end{equation}
with $\nu_{\pure_\play} \geq 0$ and $\policy_{\play\pure_\play} =0$ whenever $\nu_{\pure_\play} >0$.
Notice that since $M_{\play,\eps} >1-\frac{\eps}{\nPures_{\play}}$  we  have that ${\policy_{\play\pure_\play^*}}>{\policy_{\play\pure_\play}} + 1 - \frac{\eps}{\nPures_{\play}} 
+ (\nu_{\pure^*_\play}-\nu_{\pure_\play})$ or 
\begin{equation}
  {\policy_{\play\pure_\play}}<   {\policy_{\play\pure_\play^*}} - 1 + \frac{\eps}{\nPures_{\play}} 
- (\nu_{\pure^*_\play}-\nu_{\pure_\play}) <\frac{\eps}{\nPures_{\play}} 
\end{equation}
Hence, by summing over all strategies of player $\play$ we get the desired result.
\end{proof}

\subsection{Proof of the main theorem}

We are now in a position to prove our main result on the rate of convergence towards strict Nash policies.
For ease of reference, we restate \cref{thm:strict} below.

\strict*

\begin{proof}[Proof of \cref{thm:strict}]

We start by fixing a confidence level $\conf>0$ and all the parameters of the algorithm, such that all the assumptions stated in the theorem are satisfied and. We will prove that for each agent $\play\in\players$, $\rstate\in\states$ there exist $M_{1,\play,\rstate}>0$, $\dbasin_{1,\play,\rstate} = \braces{\dstate\in \R^{\pures_\play}: \dstate_{\play,\pure_\play} -\dstate_{\play,\pure^{*}_{\play}}<-M_{1,\play,\rstate} \text{ for all } \pure_\play\in\pures_\play, \pure_\play\neq\pure_{\play}^*}$,  such that if $\dstate_{\start}\in\dbasin_1 \defeq \prod_{\play\in\players,\rstate\in\states}\dbasin_{1,\play,\rstate}$  then the agents' sequence of play, converge to the deterministic Nash policy, in finite number of iterations. 

To simplify the notation, we will drop the indices $\rstate$ and $\play$ referring to the states and agents, accordingly, and we will focus on a specific agent and a specific state.
From \cref{lem:equivalence}, \cref{lem:basin of drift} we have that there exist constants $c,M$, neighborhood $\nhd_c =\braces{\policy\in\policies: \norm{\policy-\np}\leq \rad}$ and open set $\dbasin_{M}$ such that \begin{equation}
\begin{alignedat}{3}
      \payv_{\pure^*}\parens{\policy} -\payv_{\pure}\parens{\policy} &\geq c &\hspace{0.5em}&\text{ for all } \pure\neq\pure^*, \pure\in\pures \text{ and }\policy\in\nhd_c\\
      \dstate_{\pure^*}-\dstate_\pure &> M_c&\hspace{0.5em} &\text{ for all } \pure\neq\pure^*, \pure\in\pures  \text{ and } \policy = \proj\parens{\dstate}\in\nhd_c
\end{alignedat}
\end{equation}
The first step is to prove that for an appropriate initialization for $\dstate_\start$, we have $\dstate_\run\in \dbasin(M_c)$ for all $\run = \running$, with probability at least $1-\conf$. Assume that $\dstate_\runalt\in\dbasin(M_c)$ for all $\runalt=1,\hdots,n$; then 
for   the differences of the scores at a round $\run+1$ between any $\pure\in\pures$ and the equilibrium strategy $\pure^*$, we have
\begin{align}\label{eq: basic equality}
\dstate_{\pure,\run +1} -\dstate_{\pure^*,\run+1}
    &= \dstate_{\pure,\run} -\dstate_{\pure^*,\run} + (\signal_{\pure,\run} -\signal_{\pure^*,\run})
    \notag\\
    &=\dstate_{\pure,\start} -\dstate_{\pure^*,\start} + \sum_{\runalt = \start}^\run \step_\runalt [(\gvalue_{\pure,\runalt} -\gvalue_{\pure^*,\runalt})  + (\noise_{\pure,\runalt} -\noise_{\pure^*,\runalt}) +( \bias_{\pure,\runalt} -\bias_{\pure^*,\runalt})]
    \notag\\
    &\leq -M_1 + \sum_{\runalt = \start}^\run \step_\runalt [(\gvalue_{\pure,\runalt} -\gvalue_{\pure^*,\runalt})  + (\noise_{\pure,\runalt} -\noise_{\pure^*,\runalt}) +( \bias_{\pure,\runalt} -\bias_{\pure^*,\runalt})]
    \notag\\
    &\leq -M_1 -c\sum_{\runalt=1}^{\run}\step_{\runalt} + \sum_{\runalt=1}^{\run}\step_{\runalt}[ (\noise_{\pure,\runalt} -\noise_{\pure^*,\runalt}) +( \bias_{\pure,\runalt} -\bias_{\pure^*,\runalt})]
    \notag\\
    &\leq -M_1 -c\sum_{\runalt=1}^{\run}\step_{\runalt} + \sum_{\runalt=1}^{\run}\step_{\runalt}[ \snoise_{\runalt} + \sbias_{\runalt}]
\end{align}
where $\snoise_{\runalt} =  (\noise_{\pure,\runalt} -\noise_{\pure^*,\runalt})$ and $\sbias_{\runalt} =  2\norm{\bias_{\runalt}}$. 
Now, similarly to the proofs of \cref{thm:stable,thm:SOS} we will proceed to control the aggregate error terms 
\begin{equation}
\label{eq:err-agg}
\curr[R]
	= \sum_{\runalt=\start}^{\run} \iter[\step] \iter[\snoise]
	\quad
	\text{and}
	\quad
\curr[\submart]
	= \sum_{\runalt=\start}^{\run} {\iter[\step] \iter[\sbias]}.
\end{equation}
Since $\exof{\curr[\snoise] \given \curr[\filter]} = 0$, we have $\exof{\curr[R] \given \curr[\filter]} = \prev[R]$, so $\curr[R]$ is a martingale;
likewise, $\exof{\curr[\submart] \given \curr[\filter]} \geq \prev[\submart]$, so $\curr[\submart]$ is a sub-martingale. Furthermore from \eqref{eq:errorbounds}  we have:
\begin{enumerate}[I.]
     \item \label{eq:I}     $\exof{\snoise^{2}_\run} \leq  \exof{\norm{\noise_{\run}}^{2}} \leq \exof{\exof{\dnorm{\noise_{\run}}^{2} \given \filter_{\run}}} \leq \sdev^{2}_\run$\vspace{3pt}
     \item \label{eq:II}
     $\exof{\sbias_\run} = 2 \ex\bracks*{\dnorm{\bias_{\run}}}\leq \exof{\exof{\dnorm{\bias_{\run}} \given \filter_{\run}}} \leq \bbound_\run$
 \end{enumerate}

 Moreover, for any $\eta_{1} > 0$, we get by Doob's Maximal Inequality:

 \begin{align}
     \prob\parens*{\sup_{1 \leq \runalt \leq \run}R_{\runalt} \geq \eta_{1}} \leq \frac{\exof{R_{\run}^{2}}}{\eta_1^{2}} \stackrel{(a)}{=} \frac{\sum_{\runalt = 1}^{\run} \step_{\runalt}^{2}\exof{\snoise_{\runalt}^{2}}}{\eta_1^{2}} & \stackrel{(\ref{eq:I})}{\leq}
      \frac{\sum_{\runalt = 1}^{\run} \step_{\runalt}^{2}\sdev^{2}_{\runalt}}{\eta_1^{2}}   
 \end{align}
 where $(a)$ comes from the fact that $\exof{\snoise_i\snoise_j} = 0$ for $i \neq j$.  Since $\step_\run =\gamma/\run^{\pexp}$, $\sdev_\run = \bigoh(\run^{\sexp})$ and  $\pexp-\sexp >1/2$, 
 there exists $\gamma_1$ sufficiently small such that if $\gamma \leq \gamma_1$ then
 \begin{align}\label{eq: condition 1}
    \sum_{\runalt = 1}^{\infty}\step_{\runalt}^{2}\sdev^{2}_{\runalt} < \frac{\delta \eta_1^{2}}{2}
 \end{align}
 and so we automatically get that 
 \begin{equation}\label{eq:mart-prob}
     \prob\parens*{\sup_{1 \leq \runalt \leq \run}R_{\runalt} \geq \eta_{1}} \leq \frac{\conf}{2}
 \end{equation}

Furthermore, notice that  the term $\{S_{\run}\}_{\run \in \N}$ is a sub-martingale, since $\exof{\abs{S_{\run}} \given \curr[\filter]} < \infty$ and $\exof{S_{\run+1} \given \curr[\filter]} > S_{\run}$, for all $\run$. As before, using Doob's Maximal Inequality, we get for any $\eta_2 > 0$:
 \begin{align}
     \prob\parens*{\sup_{1 \leq \runalt \leq \run}S_{\runalt} \geq \eta_{2}} \leq \frac{\exof{S_{\run}}}{\eta_{2}} 
     & = \frac{\sum_{\runalt = 1}^{\run} \step_{\runalt}\exof{\sbias_{\runalt}} }{\eta_2}
     {\leq}\frac{2\sum_{\runalt = 1}^{\run} \step_{\runalt}\bbound_{\runalt} }{\eta_2}
 \end{align}
So, since $\pexp+\bexp > 1$  there exists $\gamma_2$ sufficiently small such that if $\gamma \leq \gamma_2$ then
\(
\sum_{\runalt = 1}^{\run} \step_{\runalt}\bbound_{\runalt} \leq \frac{\eta_2\conf}{4}
\)
which immediately implies that 
\begin{equation}\label{eq:prob-bias}
    \prob\parens*{\sup_{1 \leq \runalt \leq \run}S_{\runalt} \geq \eta_{2}} \leq\dfrac{\conf}{2}
\end{equation}
By choosing $\gamma \leq \min\braces{\gamma_1,\gamma_2}$ we get that 
 \begin{equation}
     \prob\parens*{\sup_{1 \leq \runalt \leq \run} R_\run + S_\run \leq M_c} \geq 1-\conf.
 \end{equation}
 Notice now that by choosing $M_1 > M_c + \eta_1 +\eta_2$, from \eqref{eq: basic equality} we have that with probability at least $1-\conf$, $\dstate_{\pure,\run+1} -\dstate_{\pure^*,\run+1} < -M_c$, which implies that $\policy_{\run+1}\in \nhd_c$.
 
 Defining the sequences of ``good'' events $\{\event_\run\}_{\run \in \N}$ and $\{\event'_\run\}_{\run \in \N}$ as $\event_\run \defeq \braces{\policy_\runalt \in \nhd_c \text{ for all }  \runalt=\running,\run}$ and $\event'_\run \defeq \braces*{\sup_{1 \leq \runalt \leq \run} R_{\runalt}+S_{\runalt} \leq \eta_1 + \eta_2}$, accordingly, we get that $\event'_\run \subseteq \event_\run$ for all $\run$. Because $\prob\parens*{\event'_\run} \geq 1-\conf$, we get that
\(
\prob\parens*{\event_{\run}} \geq 1-\conf
\)
and since $\{\event_\run\}_{\run \in \N}$ is a decreasing sequence converging to $\event \defeq \braces*{\policy_\run \in \nhd_c, \forall \run \in \N}$, we obtain
\(
\prob\parens*{\event} \geq 1-\conf.
\)
\ie
\begin{equation}
\prob\parens*{\curr \in \nhd_c, \; \forall \run \mid \dstate_{1} \in \dbasin_{1}} \geq 1-\conf
\end{equation}
Notice that the above conclusions immediately imply convergence in finite time. More specifically, constrained to the event $\event$ with probability at least $1-\conf$, from \cref{eq: basic equality} we have 
\begin{align}
    \dstate_{\pure,\run +1} -\dstate_{\pure^*,\run+1} \leq 
-M_c -c\sum_{\runalt=1}^{\run}\step_{\runalt} 
\end{align}
for all $\run=\running$.  Assume ad absurdum  that there exists at least one strategy $\pure\neq\pure^*,\pure\in\pures$ such that $\limsup_{\run\to\infty}\policy_{\pure,\run} \geq \eps >0$.
for all sufficiently large $\run$. Recall also that for $\policy\in \nhd_c$, it holds that $\policy_{\pure^*}>0$ by construction.
Using \cref{lem:KKT} we get
\begin{align}
   \dstate_{\pure,\run +1} -\dstate_{\pure^*,\run+1}=  \policy_{\pure,\run+1} - \policy_{\pure^*,\run+1}  &\leq -M_c -c\sum_{\runalt=1}^{\run}\step_{\runalt}
\end{align}
Notice that the \acs{LHS} of this inequality is bounded, while the \acs{RHS} goes to $-\infty$, which is a contradiction. Thus, with probability at least $1-\conf$, $\policy_{\run}\to\policy^*$ as $\run\to\infty$.

We can rewrite the previous inequality as
\begin{equation}
    \policy_{\pure,\run+1} \leq 1 -M_c -c\sum_{\runalt=1}^{\run}\step_{\runalt} \quad\text{ for all }\quad\pure^*\neq\pure\in\pures
\end{equation}
Now aggregating over all strategies, on the previous inequality,  we get that 
\begin{align}
    \onenorm{\policy_{\run+1} -\np} = 2(1-\policy_{\pure^*,\run+1}) \leq2 \sum_{\pure^*\neq\pure\in\pures}(1- M_c -c\sum_{\runalt=1}^{\run}\step_{\runalt} )
\end{align}
Thus, once $\sum_{\runalt=1}^{\run}\step_{\runalt}$  becomes at least $(1-M_c)/c$, which occurs in finite time, the convergence is implied.
\end{proof}

\begin{proof}[Proof of \cref{cor:strict}]
For \cref{mod:full,mod:stoch}, taking $\bexp = \infty, \sexp = 0$ we readily get that $\pexp > 1/2$. Since we require that $\sum_{\run=1}^{\infty}\step_\run = \infty$, we obtain that $\pexp \in (1/2,1]$. 

For \cref{mod:value}, we have that $\curr[\bbound] = \bigoh(\curr[\expar])$ and $\curr[\sdev] = \bigoh(1/\sqrt{\curr[\expar]})$, \ie $\bexp = r$ and $\sexp = r/2$. Now, since $\pexp \leq 1$, $\pexp + \bexp >1$ and $\pexp - \sexp > 1/2$, we obtain that $\pexp \in (2/3,1]$. 
\end{proof}

\section{Structural properties of policy gradient methods}
\label{app:marl}

In this appendix we will establish the required properties for the gradient of the players' value function.
More precisely, we prove the following intermediate results:
\begin{itemize}[topsep=0pt,leftmargin=2ex]
\item In \cref{lem:well-defined-visitation-distribution} we prove that in the random stopping episodic framework visitation the notion of discounted state visitation distribution is well-defined.
\item In \cref{lem:conversion}, we prove the conversion lemma, a standard lemma that connects a sample by visitation distribution and a random trajectory.
\item In \cref{lem:GradientViaQ}, we establish different versions of Policy Gradient theorem via $Q$-value function for the random stopping episodic framework.
\item In \cref{lem:boundness,lem:smoothness}, we establish the boundedness and the Lipschitz smoothness of policy gradient vector field, i.e., 
$\gvalue(\policy)=(\gvalue_{\play}(\policy))_{\play\in\players} $
where
$\gvalue_{\play}(\policy) = \nabla_{\policy_{\play}} \rvalue_{\play,\rstate}(\policy)$
\end{itemize}

For a policy profile $\policy \in \policies$ and an arbitrary initial state distribution $\rstate_\tstart\sim\stateDist$, let's recall the definition of  discounted state visitation measure/distribution as
\begin{equation*}
		\visitMeas{\stateDist}{\policy}\parens{\rstate} =\ex_{\traj \sim \MDP}\bracks*{\insum_{\time =\tstart}^{\stopTime\parens{\traj}}\one\braces{\rstate_{\time}=\rstate} \Big| \rstate_\tstart \sim \stateDist},\quad	\visitDistr{\stateDist}{\policy}\parens{\rstate} \defeq \visitMeas{\stateDist}{\policy}\parens{\rstate}/\normConst{\stateDist}{\policy}
\end{equation*}
To begin with, we prove formally that the above definition is well-posed for the random stopping episodic framework described above, \ie
$\visitMeas{\stateDist}{\policy}\parens{\rstate}<\infty$,
so 
$\normConst{\stateDist}{\policy}\defeq\sum_{\rstate\in\states }\visitMeas{\stateDist}{\policy}\parens{\rstate}$
is well-defined.

\begin{lemma}\label{lem:well-defined-visitation-distribution}
For any $\rstate \in \states$, $\visitMeas{\stateDist}{\policy}\parens{\rstate} < \infty$ and $\normConst{\stateDist}{\policy}\leq \tfrac{1}{\stopPr}$.
\end{lemma}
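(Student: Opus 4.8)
The plan is to reduce both assertions to a single bound on the expected length of an episode. The key observation is that summing the visitation measure over all states collapses the state indicators: at every time step the trajectory occupies exactly one state, so $\sum_{\rstate\in\states}\one\braces{\rstate_{\time}=\rstate}=1$, and since every summand is non-negative, Tonelli's theorem lets me interchange the sum and the expectation to obtain
\[
\normConst{\stateDist}{\policy}
	= \sum_{\rstate\in\states}\visitMeas{\stateDist}{\policy}\parens{\rstate}
	= \ex_{\traj\sim\MDP}\bracks*{\insum_{\time=\tstart}^{\stopTime\parens{\traj}} 1}
	= \ex_{\traj\sim\MDP}\bracks{\stopTime\parens{\traj}+1}.
\]
Thus it suffices to show that the expected stopping time satisfies $\ex\bracks{\stopTime\parens{\traj}+1}\leq 1/\stopPr$.

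First I would control the survival probability of the episode. The event $\braces{\stopTime\parens{\traj}\geq j}$ is exactly the event that the process fails to terminate at each of the time steps $\time=\tstart,\dotsc,j-1$; since termination occurs at step $\time$ with probability $\stopPr_{\rstate_\time,\pure_\time}\geq\stopPr$ given the current state-action pair, peeling off these survival factors one at a time via the tower property and bounding each conditional non-termination probability by $1-\stopPr$ yields
\[
\prob\parens{\stopTime\parens{\traj}\geq j}
	= \ex\bracks*{\inprod_{\time=\tstart}^{j-1}\parens{1-\stopPr_{\rstate_\time,\pure_\time}}}
	\leq \parens{1-\stopPr}^{j}.
\]

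With this tail bound in hand, I would assemble the expectation through the layer-cake identity for the non-negative integer-valued random variable $\stopTime\parens{\traj}+1$, namely $\ex\bracks{\stopTime\parens{\traj}+1}=\sum_{j=0}^{\infty}\prob\parens{\stopTime\parens{\traj}\geq j}$, and then sum the resulting geometric series:
\[
\normConst{\stateDist}{\policy}
	= \ex\bracks{\stopTime\parens{\traj}+1}
	\leq \sum_{j=0}^{\infty}\parens{1-\stopPr}^{j}
	= \frac{1}{\stopPr}.
\]
Finiteness of each individual visitation measure is then immediate, since $\visitMeas{\stateDist}{\policy}\parens{\rstate}\leq\sum_{\rstatealt\in\states}\visitMeas{\stateDist}{\policy}\parens{\rstatealt}=\normConst{\stateDist}{\policy}\leq 1/\stopPr<\infty$.

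The only genuinely delicate point — and the step I would write out most carefully — is the conditioning argument behind the uniform survival bound $\parens{1-\stopPr}^{j}$: one must strip the product of survival factors from the last step inward using the tower property, dominating each term $1-\stopPr_{\rstate_\time,\pure_\time}$ by $1-\stopPr$ before taking the outer expectation, precisely because the per-step stopping probabilities are state- and action-dependent rather than a fixed constant. The remaining care is purely bookkeeping of the summation limits — that the inner sum runs over $\time=\tstart=0$ up to and including $\stopTime\parens{\traj}$, hence contributes $\stopTime\parens{\traj}+1$ terms — which is exactly what makes the constant come out to $1/\stopPr$.
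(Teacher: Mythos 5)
Your proposal is correct and follows essentially the same route as the paper's proof: both reduce the claim to bounding the expected episode length, establish the uniform survival bound $\prob(\stopTime(\traj)\geq j)\leq(1-\stopPr)^{j}$ by peeling off conditional non-termination probabilities one step at a time, and sum the resulting geometric series to obtain $1/\stopPr$. The paper phrases the survival event via an auxiliary absorbing state $\rstate_f$ and the chain rule rather than your layer-cake identity for $\stopTime(\traj)+1$, but this is purely a difference in bookkeeping.
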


\begin{proof}
For the sake of the proof, we define a new state $\rstate_f$, indicating that the game has stopped. In other words, we have that $\probof{\rstate_f \given \rstate, \pure} = \stopPr_{\rstate,\pure} \geq \stopPr >0$ for all $\pure \in \pures, \rstate \in \states$. Hence, for $\rstate \in \states$ we obtain:
\begin{align*}
\visitMeas{\stateDist}{\policy}\parens{\rstate} &=\ex_{\traj \sim \MDP}\bracks*{\insum_{\time =\tstart}^{\stopTime\parens{\traj}}\one\braces{\rstate_{\time}=\rstate} \Big| \rstate_\tstart \sim \stateDist}
	\notag\\
	&=\ex_{\traj \sim \MDP}\left[\sum_{\time=\tstart}^{\infty}\one\{\rstate_\time = \rstate, \rstate_{i} \neq \rstate_f,  1 \leq i \leq t\} \vert \rstate_\tstart \sim \stateDist\right]
	\notag\\
	&\leq \sum_{\rstate\in\states} \visitMeas{\stateDist}{\policy}\parens{\rstate}
	=\ex_{\traj \sim \MDP}\left[\sum_{\time=\tstart}^{\infty}\one\{\rstate_{i} \neq \rstate_f,  1 \leq i \leq \time\} \vert \rstate_\tstart \sim \stateDist\right]
	\notag\\
	&= \sum_{\time=\tstart}^{\infty}\probof{\rstate_{i} \neq \rstate_f,  1 \leq i \leq t \given \rstate_\tstart \sim \stateDist}
	\notag\\
	&= \sum_{\time=\tstart}^{\infty} \prod_{i = 1}^{\time} \probof{\rstate_{i} \neq \rstate_f \given \rstate_\tstart \sim \stateDist, \rstate_{j} \neq \rstate_f, 1 \leq j \leq i-1}
	\notag\\
	&\leq \sum_{\time=\tstart}^{\infty} (1-\stopPr)^{\time}\leq \frac{1}{\stopPr}
	< \infty.
	\qedhere
\end{align*}
\end{proof}

\ConversionLemma*

\begin{proof}
\begin{align}
	 \ex_{\traj \sim \MDP}\bracks*{\insum_{\time =\tstart}^{\stopTime\parens{\traj}}f( \rstate_\time, \pure_\time)}
 	&= \sum_{\time = \tstart}^{\infty}\sum_{\rstate \in \states}\sum_{\pure \in \pures} \ex_{\traj\sim\MDP}\left[\one\{\time\leq T(\traj), \rstate_\time = \rstate, \pure_\time = \pure\} f(\rstate, \pure) \right] \notag\\
	&= \sum_{\rstate \in \states}\sum_{\time = \tstart}^{\infty}\sum_{\pure \in \pures} \prob^{\policy}(\rstate = \rstate_\time \vert \rstate_{\tstart}\sim \stateDist) \policy(\pure \vert \rstate)f(\rstate, \pure) \notag\\
	&= \sum_{\rstate \in \states}\sum_{\time = \tstart}^{\infty}\prob^{\policy}(\rstate = \rstate_\time \vert \rstate_{\tstart}\sim \stateDist)\sum_{\pure \in \pures}  \policy(\pure \vert \rstate)f(\rstate, \pure) \notag\\
	&= \sum_{\rstate \in \states}\dpol_{\stateDist}(\rstate) \ex_{\pure \sim \policy(\cdot \vert \rstate)}\left[f(\rstate, \pure)\right]\notag\\
 	&= \normConst{\stateDist}{\policy}
  \ex_{\rstate\sim\visitDistr{\stateDist}{\policy}}\ex_{\pure\sim\policy(\cdot \vert \rstate)}\bracks*{f(\rstate, \pure)}
\end{align}
where $\normConst{\stateDist}{\policy}\defeq\ex_{\rstate\sim \unif(\states) }\bracks[\big]{\visitMeas{\stateDist}{\policy}\parens{\rstate}}\cdot\abs*{\states}$ is well-defined by \cref{lem:well-defined-visitation-distribution}.
\end{proof}

A compact reformulation the aforementioned lemma is via the matrix representation of the discounted visitation distribution:

\begin{lemma}[Conversion Lemma (Matrix form)]
\label{lem:conversion-matrix}
For an arbitrary state-action function $f\from\states\times \pures \to \R$ and a policy profile $\policy$, we have
\begin{equation}
\ex_{\traj \sim \MDP}\bracks*{\insum_{\time =\tstart}^{\stopTime\parens{\traj}}f(\rstate_\time, \pure_\time )\vert \pure_\tstart=\pure,\rstate_\tstart=\rstate}
	=e_{\rstate,\pure}^\top\visitMatrix(\policy)f
\end{equation}
where $\visitMatrix$ is a  discounted visitation distribution (action-state)-matrix under poliy profile $\policy$ \ie $[\visitMatrix(\policy)]_{\underbrace{(\pure,\rstate)}_{\text{Row Index}}\to\underbrace{(\pure',\rstate')}_{\text{Column Index}}}= \sum_{\time = \tstart}^{\infty}\prob^{\policy}(\rstate_\time=\rstate' , \pure_\time=\pure'\vert \rstate_{\tstart}=\rstate, \pure_\tstart=\pure ) $
\end{lemma}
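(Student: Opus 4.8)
The plan is to reproduce the unfolding in the proof of \cref{lem:conversion} almost verbatim, changing only two things: I will condition on the initial state--action pair $(\rstate,\pure)$ rather than on $\rstate_\tstart\sim\stateDist$, and I will keep the full running index $(\rstate',\pure')$ instead of marginalizing over the action coordinate, so that the resulting coefficients are precisely the matrix entries $[\visitMatrix(\policy)]_{(\pure,\rstate)\to(\pure',\rstate')}$.

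Concretely, I would first rewrite the truncated sum up to the random stopping time as an infinite series gated by a survival indicator, and then expand each summand over the finite state--action space:
\begin{equation*}
\ex_{\traj\sim\MDP}\bracks*{\insum_{\time=\tstart}^{\stopTime(\traj)} f(\rstate_\time,\pure_\time) \mid \rstate_\tstart=\rstate,\pure_\tstart=\pure}
= \sum_{\time=\tstart}^{\infty}\sum_{\rstate'\in\states}\sum_{\pure'\in\pures} f(\rstate',\pure')\,\prob^{\policy}\parens*{\time\le\stopTime(\traj),\,\rstate_\time=\rstate',\,\pure_\time=\pure' \mid \rstate_\tstart=\rstate,\pure_\tstart=\pure}.
\end{equation*}
The crux is then the identification $\prob^{\policy}(\time\le\stopTime(\traj),\rstate_\time=\rstate',\pure_\time=\pure'\mid\cdots)=\prob^{\policy}(\rstate_\time=\rstate',\pure_\time=\pure'\mid\cdots)$ for $\rstate'\in\states$: in the random-stopping construction (\cf the absorbing state $\rstate_f$ introduced in \cref{lem:well-defined-visitation-distribution}), the event $\{\rstate_\time=\rstate'\}$ with $\rstate'$ a genuine state already forces the trajectory to have survived through stage $\time$, so the survival indicator is redundant and can be dropped with no double counting. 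Swapping the order of the sums over $\time$ and $(\rstate',\pure')$ then leaves the inner $\time$-sum equal to $[\visitMatrix(\policy)]_{(\pure,\rstate)\to(\pure',\rstate')}$ by definition, and reassembling gives
\begin{equation*}
\sum_{\rstate'\in\states}\sum_{\pure'\in\pures} f(\rstate',\pure')\,[\visitMatrix(\policy)]_{(\pure,\rstate)\to(\pure',\rstate')} = e_{\rstate,\pure}^{\top}\visitMatrix(\policy)\,f,
\end{equation*}
which is the asserted identity.

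The only genuinely non-formal point --- and the one I would treat as the main obstacle --- is the justification of interchanging expectation with the infinite summation over $\time$ (Tonelli/Fubini). For this I would invoke \cref{lem:well-defined-visitation-distribution}: since $f$ is bounded on the finite set $\states\times\pures$ and the total visitation mass is finite (bounded by $1/\stopPr$, uniformly over the initial condition), the double series converges absolutely, so all reorderings are legitimate. Everything else is routine bookkeeping that parallels \cref{lem:conversion}.
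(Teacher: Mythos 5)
Your proposal is correct and follows essentially the same route as the paper's proof, which runs the identical chain of identities in the reverse direction (starting from $e_{\rstate,\pure}^{\top}\visitMatrix(\policy)f$, expanding the matrix entries into time-sums of occupation probabilities, identifying these with expectations of survival-gated indicators, and swapping sums). Your explicit treatment of the Tonelli interchange via the $1/\stopPr$ bound from \cref{lem:well-defined-visitation-distribution} is a point the paper leaves implicit, but it is the same argument in substance.
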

\begin{proof}
By definition we have
\begin{align*}
e_{\rstate,\pure}^\top\visitMatrix(\policy)f
	&= \braket{e_{\rstate,\pure}^\top\visitMatrix(\policy)}{f}
	\notag\\
	&=\sum_{\rstate' \in \states}\sum_{\pure' \in \pures}
		\left(e_{\rstate,\pure}^\top\visitMatrix(\policy)\right)_{(\rstate',\pure')} \cdot f(\rstate',\pure')
	\notag\\
	&=\sum_{\rstate' \in \states}\sum_{\pure' \in \pures} 
		e_{\rstate,\pure}^\top\visitMatrix(\policy)e_{\rstate',\pure'} \cdot f(\rstate',\pure')
	\notag\\
	&=\sum_{\rstate' \in \states}\sum_{\pure' \in \pures} 
		\sum_{\time = \tstart}^{\infty}\prob^{\policy}(\rstate_\time=\rstate' , \pure_\time=\pure'\vert \rstate_{\tstart}=\rstate, \pure_\tstart=\pure )\cdot f(\rstate',\pure')
	\notag\\
	&= \sum_{\time = \tstart}^{\infty}\sum_{\rstate' \in \states}\sum_{\pure' \in \pures} \ex_{\traj\sim\MDP}\left[\one\{\time\leq T(\traj), \rstate_\time' = \rstate, \pure_\time' = \pure,
  \} f(\rstate, \pure) \vert  \rstate_{\tstart}=\rstate, \pure_\tstart=\pure 
   \right]
   \notag\\
	&=	 \ex_{\traj \sim \MDP}\bracks*{\insum_{\time =\tstart}^{\stopTime\parens{\traj}}f(\rstate_\time, \pure_\time )\vert \pure_\tstart=\pure,\rstate_\tstart=\rstate}.
\qedhere
\end{align*}
\end{proof}
\begin{remark}\label{rem:matrix-definitions}
Notice that $\visitMatrix$ is a well-defined matrix.
Indeed, let's us define $\transMatrix(\policy)$ as the state-action one step transition matrix: \begin{equation}[\transMatrix(\policy)]_{\underbrace{(\pure,\rstate)}_{\text{Row Index}}\to\underbrace{(\pure',\rstate')}_{\text{Column Index}}}= \prob^{\policy}(\rstate_\tafterstart=\rstate' , \pure_\tafterstart=\pure'\vert \rstate_{\tstart}=\rstate, \pure_\tstart=\pure )=\policy(\pure'\vert \rstate')P(\rstate'|\rstate,\pure).\end{equation}
Notice that $\transMatrix(\policy)$ is a substochastic matrix and therefore $\mathrm{spectral}(\transMatrix(\policy))<1$ or equivalently $
(I-\transMatrix(\policy))^{-1}$ is invertible. Thus using Neumann series we have that $(I-\transMatrix(\policy))^{-1}=\sum_{\time=\tstart}^{\infty} \transMatrix(\policy)^{\time}$. By induction, a folklore probabilistic-graph theoretic fact, we can show that $\sum_{\time=\tstart}^{\infty} \transMatrix(\policy)^{\time}=\visitMatrix(\policy)$.
\end{remark}
In order to analyze the gradient of MARL policy gradient methods, we will introduce the notions $Q,A$ and their per-player averages that are useful in the MDP analysis. 
\begin{definition}\label{def:Q&A}
For a state $\rstate \in \states$, a policy $\policy$ and $\pure = (\pure_{1},\dots,\pure_{\nPlayers}) \in \pures$, we define:
\begin{enumerate}[(i)]
	\item The $Q$-value function of player $\play$ as:
	\begin{equation}
		Q_{\play}^{\policy}(\rstate,\pure):= \ex_{\traj \sim \MDP(\policy|\rstate)}\left[ \sum_{\time = \tstart}^{T(\traj)} \rewards_{\play}(\rstate_{\time}(\traj), \pure_{\time}(\traj)) \vert \rstate_{\tstart} = \rstate, \pure_{\tstart} = \pure \right]
	\end{equation}
	\item The \emph{advantage function} of player $\play$ as:
	\begin{equation}
		\adv_{\play}^{\policy}(\rstate,\pure):= Q_{\play}^{\policy}(\rstate,\pure) - \rvalue_{\play,\rstate}(\policy)
	\end{equation}
\end{enumerate}
We also define $\bar{Q}_{\play}^{\policy}, \avgadv_{\play}^{\policy}$ to be the averaged for $\play$-th player single MDP $Q$-value and advantage functions:
\begin{enumerate}[(i)]
	\item The averaged $\bar{Q}_{\play}^{\policy}$-value function of player $\play$ as:
	\begin{equation}
		\bar{Q}_{\play}^{\policy}(\rstate,\pure_{\play}) := 
		 \ex_{\pure_{-\play} \sim \policy_{-\play}(\cdot \vert \rstate)}\left[ Q_{\play}^{\policy}\left(\rstate,(\pure_{\play};\pure_{-\play})\right) \right]
	\end{equation}
	\item The \emph{averaged advantage function} $\avgadv_{\play}^{\policy}$ of player $\play$ as:
	\begin{equation}
		\avgadv_{\play}^{\policy}(\rstate,\pure_{\play}) := \ex_{\pure_{-\play} \sim \policy_{-\play}(\cdot \vert \rstate)}\left[ \adv_{\play}^{\policy}\left(\rstate,(\pure_{\play};\pure_{-\play})\right) \right],
	\end{equation}
\end{enumerate}
\end{definition}
By \cref{rem:matrix-definitions}, we can rewrite the above notations using $\visitMatrix,\transMatrix$.

\begin{lemma}\label{lem:matrix-rewritting}
For a policy profile $\policy$, we have:
\begin{enumerate}
	\item $
			Q_{\play}^{\policy}(\rstate,\pure)=e_{\rstate,\pure}^\top\visitMatrix(\policy)\reward_\play
	$
	\item $\visitMeas{\stateDist}{\policy}\parens{\rstate} =\bracks*{\sum_{\rstate'\in\states}\stateDist(\rstate')\sum_{\pure'\in\pures}\policy(\pure'\vert\rstate')
		e_{\rstate',\pure'}}^\top\visitMatrix(\policy)\sum_{\pure\in\pures} e_{\rstate,\pure}$
\end{enumerate}
\end{lemma}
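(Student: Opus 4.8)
The plan is to obtain both identities as immediate corollaries of the matrix form of the Conversion Lemma (\cref{lem:conversion-matrix}), which already repackages the discounted sum over a random trajectory into the single bilinear expression $e_{\rstate,\pure}^\top\visitMatrix(\policy)f$. No new machinery is needed: the entire argument consists of selecting the correct test function $f$ and, for the second identity, averaging against the correct initial distribution over state-action pairs. Throughout, the well-definedness of the matrix $\visitMatrix(\policy)$ is guaranteed by \cref{rem:matrix-definitions} (via the Neumann-series identity $\visitMatrix(\policy)=\sum_{\time=\tstart}^{\infty}\transMatrix(\policy)^{\time}$).

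First I would dispatch the $Q$-value identity. Taking $f=\reward_\play$ — the reward function of player $\play$ read as a vector indexed by state-action pairs — the \acs{LHS} of \cref{lem:conversion-matrix}, conditioned on the initial pair $(\rstate,\pure)$, is by definition exactly the $Q$-value function $Q_\play^{\policy}(\rstate,\pure)$ of \cref{def:Q&A}. Hence $Q_\play^{\policy}(\rstate,\pure)=e_{\rstate,\pure}^\top\visitMatrix(\policy)\reward_\play$ follows with no further computation.

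For the visitation identity, I would choose the test function $f(\rstatealt,\pure')=\one\braces{\rstatealt=\rstate}$, whose vector representation is precisely $\sum_{\pure\in\pures}e_{\rstate,\pure}$ (it carries a $1$ on every state-action pair whose state component equals the target state $\rstate$, and $0$ elsewhere). Plugging this into \cref{lem:conversion-matrix} gives, for each fixed initial pair $(\rstatealt,\pure')$, the conditional discounted occupation of $\rstate$, namely $e_{\rstatealt,\pure'}^\top\visitMatrix(\policy)\sum_{\pure\in\pures}e_{\rstate,\pure}$. It then remains to average this per-pair quantity against the law of the initial pair and to use linearity of the inner product to pull the averaging inside the transpose, which produces the bracketed vector in the statement.

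The one point demanding care — and the only genuine obstacle — is matching the initial conditions of the two objects. The measure $\visitMeas{\stateDist}{\policy}$ is initialized only through $\rstate_\tstart\sim\stateDist$, whereas \cref{lem:conversion-matrix} conditions on a \emph{fixed} state-action pair $(\rstate_\tstart,\pure_\tstart)$; since the first action is itself drawn from the policy, the correct law of the initial pair $(\rstatealt,\pure')$ is the product $\stateDist(\rstatealt)\,\policy(\pure'\vert\rstatealt)$. Averaging the per-pair identity against exactly this weighting is what yields the factor $\sum_{\rstatealt\in\states}\stateDist(\rstatealt)\sum_{\pure'\in\pures}\policy(\pure'\vert\rstatealt)\,e_{\rstatealt,\pure'}$ sitting on the left of the bilinear form. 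Once this bookkeeping over the initial action is handled correctly, both claims follow directly.
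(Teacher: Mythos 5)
Your proposal is correct and follows essentially the same route as the paper: both parts are read off from the matrix form of the Conversion Lemma (\cref{lem:conversion-matrix}) by choosing $f=\reward_\play$ for the $Q$-value identity and $f=\sum_{\pure}e_{\rstate,\pure}$ for the visitation identity, and then averaging the conditioned expression over the initial pair with law $\stateDist(\rstate')\policy(\pure'\vert\rstate')$, which is exactly the tower-property step the paper carries out. The point you flag about matching the initial conditions is precisely the only nontrivial bookkeeping in the paper's argument as well.
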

\begin{proof}
We separately have using \cref{lem:matrix-rewritting} and \cref{rem:matrix-definitions}.
\begin{enumerate}
\item
For our first claim, a straightforward calculation gives:
\begin{equation}
Q_{\play}^{\policy}(\rstate,\pure)= \ex_{\traj \sim \MDP(\policy|\rstate)}\left[ \sum_{\time = \tstart}^{T(\traj)} \rewards_{\play}(\rstate_{\time}(\traj), \pure_{\time}(\traj)) \vert \rstate_{\tstart} = \rstate, \pure_{\tstart} = \pure \right]=e_{\rstate,\pure}^\top\visitMatrix(\policy)\rewards_\play
\end{equation}
\item
As for the second part of the lemma, we have:
\begin{align*}
\visitMeas{\stateDist}{\policy}\parens{\rstate} 
	&=\ex_{\traj \sim \MDP}\bracks*{\insum_{\time =\tstart}^{\stopTime\parens{\traj}}
		\oneof{\rstate_{\time}=\rstate} \Big| \rstate_\tstart \sim \stateDist}
	\notag\\
	&=\ex_{\rstate' \sim \stateDist}\ex_{\traj \sim \MDP}
		\bracks*{\sum_{\time =\tstart}^{\stopTime\parens{\traj}}\sum_{\pure\in\pures}\one\braces{\rstate_{\time}=\rstate,\pure_{\time}=\pure} \Big| \rstate_\tstart=\rstate'}
	\notag\\
	&=\ex_{\rstate' \sim \stateDist}\ex_{\pure' \sim \policy(\cdot\vert\rstate)}\ex_{\traj \sim \MDP}
		\bracks*{\sum_{\time =\tstart}^{\stopTime\parens{\traj}}\sum_{\pure\in\pures}\one\braces{\rstate_{\time}=\rstate,\pure_{\time}=\pure} \Big| \rstate_\tstart=\rstate',\pure_\tstart=\pure'}
	\notag\\
	&=\ex_{\rstate' \sim \stateDist}\ex_{\pure' \sim \policy(\cdot\vert\rstate)}\bracks*{
		e_{\rstate',\pure'}^\top\visitMatrix(\policy)\sum_{\pure\in\pures} e_{\rstate,\pure}}
	\notag\\
	&=\bracks*{\sum_{\rstate'\in\states}\stateDist(\rstate')\sum_{\pure'\in\pures}\policy(\pure'\vert\rstate')
		e_{\rstate',\pure'}}^\top\visitMatrix(\policy)\sum_{\pure\in\pures} e_{\rstate,\pure}.
	\qedhere
\end{align*}
\end{enumerate}
\end{proof}

Having defined the above notions, we are ready to provide equivalent forms of $\gvalue(\policy)$ that will permit us to prove its boundedness and  smoothness.
We start with the following versions of the policy gradient theorem for random stopping setting:
\begin{lemma}\label{lem:GradientViaQ}
For the  independent gradient operator $\gvalue(\policy)$ per player the following expressions are equal to $\gvalue_{\play}(\policy)$:
\begin{enumerate}
	\item $\gvalue_{\play}(\policy) = \ex_{\traj \sim \MDP}\left[\sum_{t = 0}^{T(\traj)}  \nabla_{\play}\left( \log\policy_\play(\pure_{\play,\time}(\traj) \vert \rstate_\time(\traj))\right)\bar{Q}_{\play}^{\policy}(\rstate_\time(\traj),\pure_{\play,\time}(\traj))\right]$
	\item $\gvalue_{\play}(\policy) =\normConst{\stateDist}{\policy}
  \ex_{\rstate\sim\visitDistr{\stateDist}{\policy}}\ex_{\pure_\play\sim\policy_\play(\cdot \vert \rstate)}\bracks*{
  \nabla_{\play}\left( \log\policy_\play(\pure_{\play} \vert \rstate)\right)\bar{Q}_{\play}^{\policy}(\rstate,\pure_{\play})}
  $
  \item  $(\gvalue_{\play}(\policy))_{\pure_\play^\circ,\rstate^\circ}=\frac{\pd \rvalue_{\play,\stateDist}(\policy)}{\pd \policy_\play(\pure_\play^\circ \vert \rstate^\circ)} =\dpol_{\stateDist}(\rstate^\circ)
 \bar{Q}_{\play}^{\policy}(\rstate^\circ,\pure_{\play}^\circ)=\normConst{\stateDist}{\policy}\visitDistr{\stateDist}{\policy}(\rstate^\circ)
 \bar{Q}_{\play}^{\policy}(\rstate^\circ,\pure_{\play}^\circ)$
\end{enumerate}
\end{lemma}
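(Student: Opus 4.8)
The plan is to establish the three expressions in the order (iii) $\Rightarrow$ (ii) $\Rightarrow$ (i): I would derive the explicit coordinate formula first and then repackage it successively as a visitation‑distribution expectation and as a trajectory expectation. The heart of the argument is the third expression, namely the closed form $(\gvalue_\play(\policy))_{\pure_\play^\circ,\rstate^\circ} = \visitMeas{\stateDist}{\policy}(\rstate^\circ)\,\bar{Q}_\play^\policy(\rstate^\circ,\pure_\play^\circ)$; once this is in hand, the normalized rewriting $\visitMeas{\stateDist}{\policy} = \normConst{\stateDist}{\policy}\visitDistr{\stateDist}{\policy}$ is immediate from the definition of the discounted visitation measure, and the other two forms are essentially book‑keeping.

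For the third expression, I would start from the one‑step decompositions $\rvalue_{\play,\rstate}(\policy) = \sum_{\pure\in\pures}\policy(\pure\mid\rstate)\,Q_\play^\policy(\rstate,\pure)$ and $Q_\play^\policy(\rstate,\pure) = \rewards_\play(\rstate,\pure) + \sum_{\rstate'\in\states}\tprob(\rstate'\mid\rstate,\pure)\,\rvalue_{\play,\rstate'}(\policy)$, both obtained by conditioning on the first transition in \cref{def:Q&A}. Differentiating the first identity with respect to the single coordinate $\policy_\play(\pure_\play^\circ\mid\rstate^\circ)$ and factoring $\policy(\pure\mid\rstate)=\policy_\play(\pure_\play\mid\rstate)\,\policy_{-\play}(\pure_{-\play}\mid\rstate)$, only player $\play$'s factor carries the derivative, and the residual sum over $\pure_{-\play}$ collapses $Q_\play^\policy$ into the averaged $\bar{Q}_\play^\policy$ of \cref{def:Q&A}. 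Since $\rewards_\play$ and $\tprob$ are $\policy$‑independent, differentiating the Bellman relation yields the recursion
\[
\nabla_\play \rvalue_{\play,\rstate}(\policy) = \sum_{\pure_\play}\nabla_\play\bracks{\policy_\play(\pure_\play\mid\rstate)}\bar{Q}_\play^\policy(\rstate,\pure_\play) + \sum_{\pure}\policy(\pure\mid\rstate)\sum_{\rstate'}\tprob(\rstate'\mid\rstate,\pure)\,\nabla_\play\rvalue_{\play,\rstate'}(\policy),
\]
which I would unroll into $\nabla_\play\rvalue_{\play,\rstate}(\policy) = \sum_{\rstate'}\bracks{\sum_{t\geq0}\prob^\policy(\rstate_t=\rstate'\mid\rstate_\tstart=\rstate)}\sum_{\pure_\play}\nabla_\play[\policy_\play(\pure_\play\mid\rstate')]\,\bar{Q}_\play^\policy(\rstate',\pure_\play)$. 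Averaging over $\rstate\sim\stateDist$ identifies the bracketed sum with $\visitMeas{\stateDist}{\policy}(\rstate')$, and reading off the $(\pure_\play^\circ,\rstate^\circ)$ coordinate — where $\nabla_\play\policy_\play(\pure_\play\mid\rstate')$ is the indicator of $\{\pure_\play=\pure_\play^\circ,\ \rstate'=\rstate^\circ\}$ — gives the third expression.

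The remaining two forms follow by rewriting. For the second, I would substitute the log‑trick $\nabla_\play[\policy_\play(\pure_\play\mid\rstate)] = \policy_\play(\pure_\play\mid\rstate)\,\nabla_\play\log\policy_\play(\pure_\play\mid\rstate)$ into the full‑vector version of the coordinate formula, so that summing over $\pure_\play$ against $\policy_\play(\cdot\mid\rstate)$ produces the inner expectation $\ex_{\pure_\play\sim\policy_\play(\cdot\mid\rstate)}$, while summing over $\rstate'$ against $\visitMeas{\stateDist}{\policy}$ produces $\normConst{\stateDist}{\policy}\,\ex_{\rstate\sim\visitDistr{\stateDist}{\policy}}$. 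For the first, I would apply the Conversion Lemma (\cref{lem:conversion}) componentwise to the vector‑valued state–action function $f(\rstate,\pure) = \nabla_\play(\log\policy_\play(\pure_\play\mid\rstate))\,\bar{Q}_\play^\policy(\rstate,\pure_\play)$; since $f$ depends on $\pure$ only through $\pure_\play$, we have $\ex_{\pure\sim\policy(\cdot\mid\rstate)}[f] = \ex_{\pure_\play\sim\policy_\play(\cdot\mid\rstate)}[f]$, and the conversion identity turns the visitation‑distribution expectation of the second form into the trajectory expectation of the first.

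The main obstacle is making the unrolling of the recursion rigorous in the random‑stopping regime. I would justify it through \cref{rem:matrix-definitions}: the one‑step state–action matrix $\transMatrix(\policy)$ is substochastic with spectral radius strictly below $1$, so $\visitMatrix(\policy)=\sum_{t\geq0}\transMatrix(\policy)^t=(\eye-\transMatrix(\policy))^{-1}$ converges geometrically and each $\rvalue_{\play,\rstate}(\policy)$ is a rational, hence smooth, function of the finitely many policy entries. This simultaneously guarantees existence of the partial derivatives and licenses interchanging differentiation with the infinite sum, while \cref{lem:well-defined-visitation-distribution} ensures $\visitMeas{\stateDist}{\policy}(\rstate')<\infty$ so that every series above is finite. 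A secondary point requiring care is that only $\policy_\play$ is perturbed while $\pure$ ranges over the full joint action set; this is exactly what converts $Q_\play^\policy$ into its $\pure_{-\play}$‑averaged counterpart $\bar{Q}_\play^\policy$ at each occurrence, and must be tracked consistently through all three reformulations.
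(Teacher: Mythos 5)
Your proposal is correct and follows essentially the same route as the paper's proof: both differentiate the one-step decomposition $\rvalue_{\play,\rstate} = \sum_{\pure_\play}\policy_\play(\pure_\play\mid\rstate)\bar{Q}_\play^\policy(\rstate,\pure_\play)$ via the product rule, exploit the policy-independence of rewards and transitions to obtain the same recursion, unroll it, and link the three forms through the Conversion Lemma. The only difference is presentational — you derive the coordinate formula (iii) first and work back to the trajectory form (i), whereas the paper unrolls directly into (i) and then specializes — and your appeal to the Neumann-series/substochasticity argument to justify the unrolling is, if anything, slightly more careful than the paper's treatment of the infinite-horizon remainder.
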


\begin{proof}
Recall first that the independent gradient operator $\gvalue(\policy)$ is given by
\(
\gvalue_{\play}(\policy) = \nabla_{\play} \rvalue_{\play,\stateDist}(\policy)
\)
Accordingly, we will begin by showing that:
\begin{equation}
\nabla_{\play} \left(\rvalue_{\play,\stateDist}(\policy) \right)
	= \ex_{\traj \sim \MDP}\left[\sum_{t = 0}^{T(\traj)}  \nabla_{\play}\left( \log\policy_\play(\pure_{\play,\time}(\traj) \vert \rstate_\time(\traj))\right)\bar{Q}_{\play}^{\policy}(\rstate_\time(\traj),\pure_{\play,\time}(\traj))\right]
\end{equation}
To that end, we will start with an arbitrary $\rstate_{\tstart}$, and by linearity of $\nabla_{\policy_{\play}} (\cdot)$ and $\ex_{\rstate_{\tstart} \sim \stateDist}[\cdot]$, we will obtain the result.
Indeed, we have:
\begin{align}
\nabla_{\play} \left(\rvalue_{\play,\rstate_{\tstart}}(\policy) \right) &= \nabla_{\play}\left(\ex_{\traj}\left[\rewards_{\play}(\traj)\right]\right)
	\notag\\
	&= \nabla_{\play}\left(\ex_{\pure_{\play}\sim\policy_{\play}(\cdot \vert \rstate_{\tstart})}\left[\bar{Q}_{\play}^{\policy}(\rstate_{\tstart},\pure_\play) \right]\right)
	\notag\\
	&= \nabla_{\play}\left(\sum_{\pure_\play \in \pures_\play} \policy_\play(\pure_\play \vert \rstate_{\tstart})\bar{Q}_{\play}^{\policy}(\rstate_{\tstart},\pure_\play)\right)
	\notag\\
	&= \sum_{\pure_\play \in \pures_\play}\nabla_{\play}\left( \policy_\play(\pure_\play \vert \rstate_{\tstart})\right)\bar{Q}_{\play}^{\policy}(\rstate_{\tstart},\pure_\play) + \policy_\play(\pure_\play \vert \rstate_{\tstart})\nabla_{\play}\left(\bar{Q}_{\play}^{\policy}(\rstate_{\tstart},\pure_\play)\right)
	\notag\\
	&= \sum_{\pure_\play \in \pures_\play}\nabla_{\play}\left( \log\policy_\play(\pure_\play \vert \rstate_{\tstart})\right)\policy_\play(\pure_\play \vert \rstate_{\tstart})\bar{Q}_{\play}^{\policy}(\rstate_{\tstart},\pure_\play) + \policy_\play(\pure_\play \vert \rstate_{\tstart})\nabla_{\play}\left(\bar{Q}_{\play}^{\policy}(\rstate_{\tstart},\pure_\play)\right)
	\notag\\
	&= \ex_{\pure_{\play}\sim\policy_{\play}(\cdot \vert \rstate_{\tstart})}\left[\nabla_{\play}\left( \log\policy_\play(\pure_\play \vert \rstate_{\tstart})\right)\bar{Q}_{\play}^{\policy}(\rstate_{\tstart},\pure_\play) \right]
	\notag\\
		&\qquad + \sum_{\pure_\play \in \pures_\play}\policy_\play(\pure_\play \vert \rstate_{\tstart})\nabla_{\play}\left(\ex_{\pure_{-\play}\sim\policy_{-\play}(\cdot \vert \rstate_{\tstart})}\left[\rewards_{\play}(\rstate_{\tstart},\pure) + \sum_{\rstate_1 \in \states}P(\rstate_1 \vert \rstate_{\tstart},\pure)\rvalue_{\play,\rstate_1}(\policy) \right] \right)
	\notag\\
	&= \ex_{\pure_{\play}\sim\policy_{\play}(\cdot \vert \rstate_{\tstart})}\left[\nabla_{\play}\left( \log\policy_\play(\pure_\play \vert \rstate_{\tstart})\right)\bar{Q}_{\play}^{\policy}(\rstate_{\tstart},\pure_\play) \right]
	\notag\\
		&\qquad +\sum_{\pure_\play \in \pures_\play}\policy_\play(\pure_\play \vert \rstate_{\tstart})\ex_{\pure_{-\play}\sim\policy_{-\play}(\cdot \vert \rstate_{\tstart})}\left[\sum_{\rstate_1 \in \states}P(\rstate_1 \vert \rstate_{\tstart},\pure)\nabla_{\play}\left(\rvalue_{\play,\rstate_1}(\policy)\right)\right] \notag\\
	&= \ex_{\pure_{\play}\sim\policy_{\play}(\cdot \vert \rstate_{\tstart})}\left[\nabla_{\play}\left( \log\policy_\play(\pure_\play \vert \rstate_{\tstart})\right)\bar{Q}_{\play}^{\policy}(\rstate_{\tstart},\pure_\play) \right]
	\notag\\
		&\qquad + \ex_{\pure \sim\policy(\cdot \vert \rstate_{\tstart})}\left[\sum_{\rstate_1 \in \states}P(\rstate_1 \vert \rstate_{\tstart},\pure)\nabla_{\play}\left(\rvalue_{\play,\rstate_1}(\policy)\right)\right] \end{align}
Thus, we can rewrite it as:
\begin{align}
\nabla_{\play} \left(\rvalue_{\play,\rstate_{\tstart}}(\policy) \right)
	&= \ex_{\pure_{\play}\sim\policy_{\play}(\cdot \vert \rstate_{\tstart})}\left[\nabla_{\play}\left( \log\policy_\play(\pure_\play \vert \rstate_{\tstart})\right)\bar{Q}_{\play}^{\policy}(\rstate_{\tstart},\pure_\play) \right] \notag\\
		&\qquad + \ex_{\pure \sim\policy(\cdot \vert \rstate_{\tstart})}\left[\sum_{\rstate_1 \in \states}P(\rstate_1 \vert \rstate_{\tstart},\pure)\nabla_{\play}\left(\rvalue_{\play,\rstate_1}(\policy)\right)\right]
	\notag\\
	&= \ex_{\traj \sim \MDP(\policy|\rstate_{\tstart})} \left[\nabla_{\play}\left( \log\policy_\play(\pure_{\play,0}(\traj) \vert \rstate_{\tstart})\right)\bar{Q}_{\play}^{\policy}(\rstate_{\tstart},\pure_{\play,0}(\traj)) \right]
	\notag\\
		&\qquad + \ex_{\traj \sim \MDP(\policy|\rstate_{\tstart})}\left[\one\left\{T(\traj) \geq 1 \right\}\nabla_{\play}\left(\rvalue_{\play,\rstate_1(\traj)}(\policy)\right)\right]
	\notag\\
	&= \sum_{t = 0}^{\infty} \ex_{\traj \sim \MDP(\policy|\rstate_{\tstart})}\left[\one\{\time \leq T(\traj)\} \nabla_{\play}\left( \log\policy_\play(\pure_{\play,\time}(\traj) \vert \rstate_\time(\traj))\right)\bar{Q}_{\play}^{\policy}(\rstate_\time(\traj),\pure_{\play,\time}(\traj))\right]
	\notag\\
		&\qquad + \ex_{\traj \sim \MDP(\policy|\rstate_{\tstart})}\left[\one\{T(\traj) = \infty\} A_{\infty} \right]
	\notag\\
	&\stackrel{(a)}{=} \ex_{\traj \sim \MDP(\policy|\rstate_{\tstart})}\left[\sum_{t = 0}^{T(\traj)}  \nabla_{\play}\left( \log\policy_\play(\pure_{\play,\time}(\traj) \vert \rstate_\time(\traj))\right)\bar{Q}_{\play}^{\policy}(\rstate_\time(\traj),\pure_{\play,\time}(\traj))\right]
\end{align}
where $(a)$ holds because $\probof{T(\traj) = \infty} = 0$, and $A_{\infty}$ is some limiting quantity. 

Hence,we readily obtain:
\begin{equation}
	\nabla_{\play} \left(\rvalue_{\play,\stateDist}(\policy) \right) = \ex_{\rstate_{\tstart} \sim \stateDist}\left[\nabla_{\play} \left(\rvalue_{\play,\rstate_{\tstart}}(\policy) \right) \right]
\end{equation}
Now by \cref{lem:conversion}, we further have
\begin{align}
\nabla_{\play} \left(\rvalue_{\play,\stateDist}(\policy) \right)
	&=  \normConst{\stateDist}{\policy}
		\ex_{\rstate\sim\visitDistr{\stateDist}{\policy}}\ex_{\pure\sim\policy(\cdot \vert \rstate)}
			\bracks*{\nabla_{\play}\left( \log\policy_\play(\pure_{\play} \vert \rstate)\right)\bar{Q}_{\play}^{\policy}(\rstate,\pure_{\play})}
	\notag\\
 	&=\normConst{\stateDist}{\policy}
	\ex_{\rstate\sim\visitDistr{\stateDist}{\policy}}\ex_{\pure_\play\sim\policy_\play(\cdot \vert \rstate)}
		\bracks*{\nabla_{\play}\left( \log\policy_\play(\pure_{\play} \vert \rstate)\right)\bar{Q}_{\play}^{\policy}(\rstate,\pure_{\play})}
\end{align}
Decoupling $\nabla_\play$ per a state $\rstate^\circ$ and action $\pure_\play^\circ$, we get
\begin{align*}
\frac{\pd \rvalue_{\play,\stateDist}(\policy)}{\pd \policy_\play(\pure_\play^\circ \vert \rstate^\circ)}
	&= \normConst{\stateDist}{\policy}
		\ex_{\rstate\sim\visitDistr{\stateDist}{\policy}}\ex_{\pure_\play\sim\policy_\play(\cdot \vert \rstate)}
		\bracks*{\frac{\pd  \left( \log\policy_\play(\pure_{\play} \vert \rstate)\right)}{\pd \policy_\play(\pure_\play^\circ \vert \rstate^\circ)} \bar{Q}_{\play}^{\policy}(\rstate,\pure_{\play})}
	\notag\\
	&= \normConst{\stateDist}{\policy}
	\ex_{\rstate\sim\visitDistr{\stateDist}{\policy}}\ex_{\pure_\play\sim\policy_\play(\cdot \vert \rstate)}
		\bracks*{\one\{\pure_\play^\circ=\pure_\play, \rstate^\circ=\rstate\} \frac{1}{\policy_\play(\pure_\play^\circ \vert \rstate^\circ)} \bar{Q}_{\play}^{\policy}(\rstate^\circ,\pure_{\play}^\circ)}
	\notag\\
	&= \sum_{\rstate \in \states}\dpol_{\stateDist}(\rstate)
		\sum_{\pure_\play \in \pures_\play} \policy_\play(\pure_{\play} \vert \rstate)
			\one\{\pure_\play^\circ=\pure_\play, \rstate^\circ=\rstate\}
			\frac{1}{\policy_\play(\pure_\play^\circ \vert \rstate^\circ)} \bar{Q}_{\play}^{\policy}(\rstate^\circ,\pure_{\play}^\circ)
	\notag\\
	&=\dpol_{\stateDist}(\rstate^\circ) \bar{Q}_{\play}^{\policy}(\rstate^\circ,\pure_{\play}^\circ)
		= \normConst{\stateDist}{\policy}\visitDistr{\stateDist}{\policy}(\rstate^\circ)
 \bar{Q}_{\play}^{\policy}(\rstate^\circ,\pure_{\play}^\circ).
\qedhere
\end{align*}
\end{proof}

We are ready to bound the amplitude of the independent player gradient operator:
\begin{lemma}\label{lem:boundness}
For a given initial state distribution $\stateDist$, the independent player policy gradient operator $\gvalue(\policy)$ is bounded. More precisely, \begin{equation}\norm{\gvalue_\play(\policy)}\leq \frac{\sqrt{\nPures_{\play}}}{\stopPr^2} \quad \&\quad \norm{\gvalue(\policy)}\leq \frac{\sum_{\play\in\players}\sqrt{\nPures_{\play}}}{\stopPr^2}\end{equation}
\end{lemma}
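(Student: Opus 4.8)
The plan is to compute $\gvalue_{\play}(\policy)$ entrywise via the closed form derived in \cref{lem:GradientViaQ}, and then to bound each entry by controlling two independent factors. Concretely, part (3) of \cref{lem:GradientViaQ} gives, for every player $\play$, state $\rstate$, and action $\pure_{\play}$, the explicit expression $(\gvalue_{\play}(\policy))_{\pure_{\play},\rstate} = \visitMeas{\stateDist}{\policy}(\rstate)\,\bar{Q}_{\play}^{\policy}(\rstate,\pure_{\play})$. Since $\norm{\gvalue_{\play}(\policy)}^{2}$ is just the sum of squares of these entries over all $(\pure_{\play},\rstate)$, it suffices to bound the averaged $Q$-value $\bar{Q}_{\play}^{\policy}$ and the unnormalized visitation mass $\visitMeas{\stateDist}{\policy}$ separately.

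First I would bound the averaged $Q$-value uniformly. Because $\rewards_{\play}$ takes values in $[-1,1]$, \cref{def:Q&A} yields $\abs{Q_{\play}^{\policy}(\rstate,\pure)} \leq \ex_{\traj}[\stopTime(\traj)+1]$, i.e.\ the expected number of stages in an episode. Reusing the geometric tail estimate already established in the proof of \cref{lem:well-defined-visitation-distribution} — each stage terminates with probability at least $\stopPr$, so $\prob(\stopTime(\traj)\geq t)\leq(1-\stopPr)^{t}$ — gives $\ex_{\traj}[\stopTime(\traj)+1] \leq \sum_{t=0}^{\infty}(1-\stopPr)^{t} = 1/\stopPr$. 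As $\bar{Q}_{\play}^{\policy}(\rstate,\pure_{\play})$ is an average of the quantities $Q_{\play}^{\policy}(\rstate,(\pure_{\play};\pure_{-\play}))$ over $\pure_{-\play}$, the same bound $\abs{\bar{Q}_{\play}^{\policy}(\rstate,\pure_{\play})}\leq 1/\stopPr$ carries over. For the visitation factor I would invoke \cref{lem:well-defined-visitation-distribution} directly: the entries $\visitMeas{\stateDist}{\policy}(\rstate)$ are nonnegative and $\sum_{\rstate\in\states}\visitMeas{\stateDist}{\policy}(\rstate)=\normConst{\stateDist}{\policy}\leq 1/\stopPr$, whence, using nonnegativity, $\sum_{\rstate}\visitMeas{\stateDist}{\policy}(\rstate)^{2}\leq\bigl(\sum_{\rstate}\visitMeas{\stateDist}{\policy}(\rstate)\bigr)^{2}\leq 1/\stopPr^{2}$.

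Combining the two estimates produces the per-player bound:
\[
\norm{\gvalue_{\play}(\policy)}^{2} = \sum_{\rstate\in\states}\sum_{\pure_{\play}\in\pures_{\play}} \visitMeas{\stateDist}{\policy}(\rstate)^{2}\,\bar{Q}_{\play}^{\policy}(\rstate,\pure_{\play})^{2} \leq \frac{\nPures_{\play}}{\stopPr^{2}}\sum_{\rstate\in\states}\visitMeas{\stateDist}{\policy}(\rstate)^{2} \leq \frac{\nPures_{\play}}{\stopPr^{4}},
\]
which is exactly $\norm{\gvalue_{\play}(\policy)}\leq\sqrt{\nPures_{\play}}/\stopPr^{2}$. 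Finally, since $\gvalue(\policy)$ is the concatenation of the player blocks $\gvalue_{\play}(\policy)$, I would pass from the $\ell_{2}$ norm to the $\ell_{1}$ norm across blocks, $\norm{\gvalue(\policy)}=\sqrt{\sum_{\play}\norm{\gvalue_{\play}(\policy)}^{2}}\leq\sum_{\play\in\players}\norm{\gvalue_{\play}(\policy)}\leq\bigl(\sum_{\play\in\players}\sqrt{\nPures_{\play}}\bigr)/\stopPr^{2}$, giving the second inequality.

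The only step that requires genuine care is the uniform bound on $\bar{Q}_{\play}^{\policy}$: everything rests on the random-stopping mechanism forcing a geometric tail on the episode length, which is precisely what makes the series defining $Q_{\play}^{\policy}$ converge and stay uniformly bounded. Since this is the same mechanism that powers \cref{lem:well-defined-visitation-distribution}, no new ingredient is needed beyond reusing that estimate, and the remaining computations are elementary.
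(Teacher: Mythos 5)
Your proposal is correct and follows essentially the same route as the paper: both use the entrywise closed form $(\gvalue_{\play}(\policy))_{\pure_{\play},\rstate} = \visitMeas{\stateDist}{\policy}(\rstate)\,\bar{Q}_{\play}^{\policy}(\rstate,\pure_{\play})$ from \cref{lem:GradientViaQ}, bound the averaged $Q$-value by $1/\stopPr$ via the geometric tail of the stopping time, and bound the total visitation mass by $\normConst{\stateDist}{\policy}\leq 1/\stopPr$ to arrive at $\norm{\gvalue_{\play}(\policy)}^{2}\leq\nPures_{\play}/\stopPr^{4}$. The only cosmetic difference is that you bound $\sum_{\rstate}\visitMeas{\stateDist}{\policy}(\rstate)^{2}$ directly by $\bigl(\sum_{\rstate}\visitMeas{\stateDist}{\policy}(\rstate)\bigr)^{2}$ whereas the paper factors out $\normConst{\stateDist}{\policy}^{2}$ and uses $\sum_{\rstate}\visitDistr{\stateDist}{\policy}(\rstate)^{2}\leq 1$ --- these are the same estimate.
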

\begin{proof}
We start by analyzing $\norm{\gvalue_\play(\policy)}^2$ using \cref{lem:GradientViaQ}.
Specifically, we have:
\begin{align}
\norm{\gvalue_\play(\policy)}^2&=\sum_{\pure_\play^\circ,\rstate^\circ,\in \pures_\play,\states}
	(\gvalue_{\play}(\policy)_{\pure_\play^\circ,\rstate^\circ})^2
	=\sum_{\rstate^\circ\in\states}\sum_{\pure_\play^\circ\in\pures_\play}\parens{\frac{\pd \rvalue_{\play,\stateDist}(\policy)}{\pd \policy_\play(\pure_\play^\circ \vert \rstate^\circ)}}^2
	\notag\\
	&=\sum_{\rstate^\circ\in\states}\sum_{\pure_\play^\circ\in\pures_\play}
	\parens{\normConst{\stateDist}{\policy}\visitDistr{\stateDist}{\policy}(\rstate^\circ) \bar{Q}_{\play}^{\policy}(\rstate^\circ,\pure_{\play}^\circ)}^2
	\notag\\
	&\leq (\normConst{\stateDist}{\policy})^2 \max_{\pure_\play^\circ,\rstate^\circ,\in \pures_\play,\states} \parens{\bar{Q}_{\play}^{\policy}(\rstate^\circ,\pure_{\play}^\circ)}^2\sum_{\rstate^\circ\in\states}\sum_{\pure_\play^\circ\in\pures_\play}\visitDistr{\stateDist}{\policy}(\rstate^\circ)^2
	\notag\\
	&\leq \frac{1}{\stopPr^2}\max_{\pure_\play^\circ,\rstate^\circ,\in \pures_\play,\states}
		\parens{\ex_{\pure_{-\play} \sim \policy_{-\play}(\cdot \vert \rstate)}\left[ Q_{\play}^{\policy}\left(\rstate^\circ,(\pure_{\play}^\circ;\pure_{-\play})\right) \right]}^2\sum_{\rstate^\circ\in\states}\sum_{\pure_\play^\circ\in\pures_\play}\visitDistr{\stateDist}{\policy}(\rstate^\circ)
	\notag\\
	&\leq \frac{1}{\stopPr^2}\max_{\pure^\circ,\rstate^\circ,\in \pures,\states} \parens{Q_{\play}^{\policy}(\rstate^\circ,\pure^\circ)}^2\sum_{\pure_\play^\circ\in\pures_\play}\sum_{\rstate^\circ\in\states}\visitDistr{\stateDist}{\policy}(\rstate^\circ)
	\notag\\
	&\leq \frac{1}{\stopPr^2}\max_{\pure^\circ,\rstate^\circ,\in \pures,\states} \left(
		\ex_{\traj \sim \MDP(\policy|\rstate)}\left[ \sum_{\time = \tstart}^{T(\traj)} \rewards_{\play}(\rstate_{\time}(\traj), \pure_{\time}(\traj)) \vert \rstate_{\tstart} = \rstate^\circ, \pure_{\tstart} = \pure^\circ \right]
	\right)^2
	\abs{\pures_{\play}}
	\notag\\
	&\leq \frac{\nPures_{\play}}{\stopPr^2}
		\left(\ex_{\traj \sim \MDP(\policy|\rstate)}\left[ \sum_{\time = \tstart}^{T(\traj)} 1 \vert \rstate_{\tstart} = \rstate^\circ, \pure_{\tstart} = \pure^\circ \right]
	\right)^2
	\leq \frac{\nPures_{\play}}{\stopPr^4}
\end{align}
We thus conclude that
\begin{equation*}
\norm{\gvalue_\play(\policy)}
	\leq \frac{\sqrt{\nPures_{\play}}}{\stopPr^2}
	\quad
	\text{and}
	\quad
\norm{\gvalue(\policy)}
	\leq \frac{\sum_{\play\in\players}\sqrt{\nPures_{\play}}}{\stopPr^2}.
	\qedhere
\end{equation*}
\end{proof}

To prove the smoothness of the policy gradient operator, we have first to establish the performance lemma for our setting. Respectively, we get
\begin{lemma}[Performance lemma]\label{lem:performance-lemma}
For any pair of policy profiles $\policy=(\policy_{\play},\policy_{-\play}),\policy'=(\policy_{\play}',\policy_{-\play}')$, it holds
\begin{equation}
		\rvalue_{\play,\stateDist}(\policy_{\play},\policy_{-\play}) -  \rvalue_{\play,\stateDist}(\policy_{\play}',\policy_{-\play}') = \ex_{\traj \sim \MDP(\policy|\stateDist)}\left[\sum_{\time = \tstart}^{T(\traj)} \adv_{\play}^{\policy_\play',\policy_{-\play}'}(\rstate_\time,\pure_\time) \right]
\end{equation}
where $\MDP(\policy|\stateDist)$ signifies that players follow $\policy$ as policy profile with $\stateDist$ as the initial state distribution. 
\end{lemma}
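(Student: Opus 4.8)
The plan is to follow the classical performance-difference argument of Kakade and Langford, adapted to the random-stopping horizon; the only genuine subtlety will be the correct bookkeeping of the stopping time $\stopTime(\traj)$ in the telescoping sum.

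First I would record a one-step Bellman relation for the baseline profile $\policy'=(\policy'_\play,\policy'_{-\play})$. Conditioning the definition of $Q_\play^{\policy'}$ on whether the game halts immediately after the first stage, and using that the continuation probability out of a state-action pair $(\rstate,\pure)$ is exactly $\sum_{\rstatealt\in\states}\tprobof{\rstatealt\given\rstate,\pure}=1-\stopPr_{\rstate,\pure}$, one obtains
\[
Q_\play^{\policy'}(\rstate,\pure)
  = \rewards_\play(\rstate,\pure)
    + \sum_{\rstatealt\in\states}\tprobof{\rstatealt\given\rstate,\pure}\,\rvalue_{\play,\rstatealt}(\policy'),
\]
so that the advantage unfolds as the telescoping increment
\[
\adv_\play^{\policy'}(\rstate,\pure)
  = \rewards_\play(\rstate,\pure)
    + \sum_{\rstatealt\in\states}\tprobof{\rstatealt\given\rstate,\pure}\,\rvalue_{\play,\rstatealt}(\policy')
    - \rvalue_{\play,\rstate}(\policy').
\]

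Next I would substitute this expression into the right-hand side of the claim and split the expectation over $\traj\sim\MDP(\policy\vert\stateDist)$ into three sums. The reward sum $\ex_\traj[\sum_{\time=\tstart}^{\stopTime(\traj)}\rewards_\play(\rstate_\time,\pure_\time)]$ is, by definition, precisely $\rvalue_{\play,\stateDist}(\policy)$. The remaining two sums — the transition term and the subtracted-value term — must telescope to $-\rvalue_{\play,\stateDist}(\policy')$. The crux is the following identity for a trajectory generated by $\policy$: conditioning on reaching $(\rstate_\time,\pure_\time)$ at stage $\time\le\stopTime(\traj)$, the continuation event $\{\time+1\le\stopTime(\traj)\}$ occurs with probability $1-\stopPr_{\rstate_\time,\pure_\time}$, and on that event $\rstate_{\time+1}$ is drawn from $\tprobof{\cdot\given\rstate_\time,\pure_\time}/(1-\stopPr_{\rstate_\time,\pure_\time})$, whence
\[
\ex\bracks*{\one\{\time+1\le\stopTime(\traj)\}\,\rvalue_{\play,\rstate_{\time+1}}(\policy')\,\big\vert\,\rstate_\time,\pure_\time}
  = \sum_{\rstatealt\in\states}\tprobof{\rstatealt\given\rstate_\time,\pure_\time}\,\rvalue_{\play,\rstatealt}(\policy').
\]
Summing over $\time=\tstart,\dotsc,\stopTime(\traj)$ and re-indexing then rewrites the transition term as $\ex_\traj[\sum_{\time=\tafterstart}^{\stopTime(\traj)}\rvalue_{\play,\rstate_\time}(\policy')]$, which differs from the subtracted term $\ex_\traj[\sum_{\time=\tstart}^{\stopTime(\traj)}\rvalue_{\play,\rstate_\time}(\policy')]$ only by the stage-$\tstart$ value $\ex_{\rstate_\tstart\sim\stateDist}[\rvalue_{\play,\rstate_\tstart}(\policy')]=\rvalue_{\play,\stateDist}(\policy')$. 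Collecting the three sums yields the asserted identity.

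I expect the main obstacle to be making this telescoping rigorous in the presence of the random, state-action-dependent stopping rule. Unlike the infinite-horizon discounted case, here the boundary stage $\time=\stopTime(\traj)$ contributes no continuation value, and one must verify that the sub-stochasticity of $\tprob$ — encoded by the factor $1-\stopPr_{\rstate,\pure}$ — exactly absorbs the halting mass so that no spurious boundary term survives the re-indexing. \Cref{lem:well-defined-visitation-distribution} ensures $\stopTime(\traj)<\infty$ almost surely and that every sum above is absolutely summable (dominated by a geometric tail in $1-\stopPr$), which legitimizes the Fubini/Tonelli interchanges needed to pass the conditional-expectation identity through the sum over $\time$.
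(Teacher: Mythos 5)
Your proposal is correct and is essentially the paper's own argument read in reverse: both rest on the one-step Bellman relation $Q_\play^{\policy'}(\rstate,\pure)=\rewards_\play(\rstate,\pure)+\sum_{\rstatealt}\tprobof{\rstatealt\given\rstate,\pure}\,\rvalue_{\play,\rstatealt}(\policy')$, the policy-independence of the one-step transition kernel, and the telescoping of the value terms under the re-indexing $\time\mapsto\time+1$ with the continuation indicator $\one\{\time+1\leq\stopTime(\traj)\}$ absorbing the halting mass. Your explicit attention to the boundary stage and to the Fubini/Tonelli justification via \cref{lem:well-defined-visitation-distribution} is a welcome tightening of a step the paper leaves implicit.
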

\begin{proof}
We will initial prove the aforementioned result for an arbitrary deterministic initial state $\rstate_{\tstart}=\rstate$:
\begin{align}
	\rvalue_{\play,\rstate}(\policy) -  \rvalue_{\play,\rstate}(\policy') &= 
	\ex_{\traj \sim \MDP(\policy|\stateDist)}\left[\sum_{\time = \tstart}^{T(\traj)} \rewards_{\play}(\rstate_\time,\pure_\time)\right] - \rvalue_{\play,\rstate}(\policy')
	\notag\\
	&= \ex_{\traj \sim \MDP(\policy|\rstate)}\left[\sum_{\time = \tstart}^{T(\traj)}\left(\rewards_{\play}(\rstate_\time,\pure_\time)+\rvalue_{\play,\rstate_\time}(\policy') - \rvalue_{\play,\rstate_\time}(\policy')\right) \right] - \rvalue_{\play,\rstate}(\policy')
	\notag\\
	&= \ex_{\traj \sim \MDP(\policy|\rstate)}\left[\sum_{\time = \tstart}^{T(\traj)}\rewards_{\play}(\rstate_\time,\pure_\time)+\sum_{\time = \tstart}^{T(\traj)}\left(\rvalue_{\play,\rstate_\time}(\policy') - \rvalue_{\play,\rstate}(\policy') - \rvalue_{\play,\rstate_\time}(\policy')\right) \right]
	\notag\\
	&= \ex_{\traj \sim \MDP(\policy|\rstate)}\left[\sum_{\time = \tstart}^{T(\traj)}\left(\rewards_{\play}(\rstate_\time,\pure_\time)+ \one\{T(\traj) \geq \time+1\}\rvalue_{\play,\rstate_{\time+1}}(\policy')\right) - \rvalue_{\play,\rstate_\time}(\policy')\right]
	\notag\\
	&= \ex_{\traj \sim \MDP(\policy|\rstate)}\left[\sum_{\time = \tstart}^{T(\traj)}\left(Q_{\play}^{\policy'}(\rstate_\time,\pure_\time) - \rvalue_{\play,\rstate_\time}(\policy')\right) \right]
	\notag\\
	&= \ex_{\traj \sim \MDP(\policy|\rstate)}\left[\sum_{\time = \tstart}^{T(\traj)}\adv_{\play}^{\policy'}(\rstate_\time,\pure_\time) \right]
\end{align}

where in the last equation we recall the definition of the Advantage function and in the pre-last the equivalent definitions of $Q_{\play}^{\policy}(\rstate,\pure)$
\begin{align}
	Q_{\play}^{\policy}(\rstate,\pure) &= \ex_{\traj \sim \MDP(\policy|\rstate)}\left[ \sum_{\time = \tstart}^{T(\traj)} \rewards_{\play}(\rstate_{\time}(\traj), \pure_{\time}(\traj)) \vert \rstate_{\tstart} = \rstate, \pure_{\tstart} = \pure \right] \notag\\
	&= \rewards_{\play}(\rstate,\pure) + \ex_{\traj \sim \MDP(\policy|\rstate)}\left[\one\{T(\traj) \geq \tafterstart\}\rvalue_{\play,\rstate_1}(\policy) \vert \rstate_{\tstart} = \rstate, \pure_{0} = \pure \right]
\end{align}
Applying the linearity of $\ex_{\rstate\sim\stateDist}\bracks*{\cdot}$, we get the desired result:
\begin{equation}
		\rvalue_{\play,\stateDist}(\policy) -  \rvalue_{\play,\stateDist}(\policy') = \ex_{\traj \sim \MDP(\policy|\stateDist)}\left[\sum_{\time = \tstart}^{T(\traj)} \adv_{\play}^{\policy'}(\rstate_\time,\pure_\time) \right]=\normConst{\stateDist}{\policy}
  \ex_{\rstate\sim\visitDistr{\stateDist}{\policy}}\ex_{\pure\sim\policy(\cdot \vert \rstate)}\bracks*{\adv_{\play}^{\policy'}(\rstate,\pure) }
\end{equation}
where the last expression comes from \cref{lem:conversion}.
\end{proof}
%
Before closing this section by proving the Lipschitz-smoothness of our operator, we describe a useful observation that will be helpful in the smoothness bounds.

\begin{proposition}
\label{prop:useful-dtv}
For any pair of policy profiles $\policy=(\policy_{\play},\policy_{-\play}),\policy'=(\policy_{\play}',\policy_{-\play}')$ and an arbitrary initial state distribution $\stateDist$ and a subset $\mathcal{M}\subseteq\players$, it holds that:
\begin{equation}\sum_{\rstate}
\visitDistr{\stateDist}{\policy_{}}(\rstate)\sum_{\pure_{\mathcal{M}}}|(\policy_{\mathcal{M}}-\policy_{\mathcal{M}}^{'})(\pure_{\mathcal{M}} \vert \rstate)|\leq \sum_{\play\in\mathcal{M}}\sqrt{\nPures_{\play}}\norm{\policy_\play-\policy'_\play} \end{equation}
where $\policy_{\mathcal{M}}=(\policy_\play)_{\play\in\mathcal{M}} $ and $\pure_{\mathcal{M}}=(\pure_\play)_{\play\in\mathcal{M}} $, correspondingly.
\end{proposition}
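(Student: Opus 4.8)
The plan is to establish the bound state-by-state and then aggregate over $\states$, using (i) the product structure $\policy_{\mathcal{M}}(\pure_{\mathcal{M}}\mid\rstate)=\prod_{\play\in\mathcal{M}}\policy_\play(\pure_\play\mid\rstate)$ of the joint policy and (ii) the fact that $\visitDistr{\stateDist}{\policy}$ is a genuine probability vector on $\states$, i.e. $\sum_{\rstate}\visitDistr{\stateDist}{\policy}(\rstate)=\normConst{\stateDist}{\policy}/\normConst{\stateDist}{\policy}=1$, which is exactly the well-posedness content of \cref{lem:well-defined-visitation-distribution}. Fixing a state $\rstate$, I would first bound the inner $\ell^1$ (total-variation) distance $\sum_{\pure_{\mathcal{M}}}\lvert(\policy_{\mathcal{M}}-\policy_{\mathcal{M}}')(\pure_{\mathcal{M}}\mid\rstate)\rvert$ between the two product measures by the sum of the per-player $\ell^1$ distances, then convert each $\ell^1$ distance into an $\ell^2$ distance, and finally carry out the average over states.

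For the decoupling step, write $p_\play=\policy_\play(\cdot\mid\rstate)$ and $q_\play=\policy_\play'(\cdot\mid\rstate)$ for the marginals on $\pures_\play$, and use the telescoping identity (with respect to the canonical order on $\players$)
\begin{equation}
\prod_{\play\in\mathcal{M}}p_\play-\prod_{\play\in\mathcal{M}}q_\play
	=\sum_{\play\in\mathcal{M}}\Bigl(\prod_{\playalt\in\mathcal{M},\,\playalt<\play}q_\playalt\Bigr)(p_\play-q_\play)\Bigl(\prod_{\playalt\in\mathcal{M},\,\playalt>\play}p_\playalt\Bigr).
\end{equation}
Taking absolute values, summing over $\pure_{\mathcal{M}}$, and using $\sum_{\pure_\playalt}p_\playalt(\pure_\playalt)=\sum_{\pure_\playalt}q_\playalt(\pure_\playalt)=1$ to collapse every factor other than the $\play$-th yields the standard subadditivity of total variation for product measures,
\begin{equation}
\sum_{\pure_{\mathcal{M}}}\bigl\lvert(\policy_{\mathcal{M}}-\policy_{\mathcal{M}}')(\pure_{\mathcal{M}}\mid\rstate)\bigr\rvert
	\leq\sum_{\play\in\mathcal{M}}\onenorm{\policy_\play(\cdot\mid\rstate)-\policy_\play'(\cdot\mid\rstate)}.
\end{equation}

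For the conversion and aggregation step, I would apply Cauchy--Schwarz twice. First, across the $\nPures_\play$ actions of a fixed player at a fixed state, $\onenorm{\policy_\play(\cdot\mid\rstate)-\policy_\play'(\cdot\mid\rstate)}\leq\sqrt{\nPures_\play}\,\twonorm{\policy_\play(\cdot\mid\rstate)-\policy_\play'(\cdot\mid\rstate)}$. Then, weighting by $\visitDistr{\stateDist}{\policy}(\rstate)$ and summing over $\rstate$, I apply Cauchy--Schwarz a second time over the states --- now with the probability weights $\visitDistr{\stateDist}{\policy}(\rstate)$ --- to get
\begin{equation}
\sum_{\rstate}\visitDistr{\stateDist}{\policy}(\rstate)\,\twonorm{\policy_\play(\cdot\mid\rstate)-\policy_\play'(\cdot\mid\rstate)}
	\leq\Bigl(\sum_{\rstate}\visitDistr{\stateDist}{\policy}(\rstate)\,\twonorm{\policy_\play(\cdot\mid\rstate)-\policy_\play'(\cdot\mid\rstate)}^{2}\Bigr)^{1/2}
	\leq\norm{\policy_\play-\policy_\play'},
\end{equation}
where the first inequality uses $\sum_{\rstate}\visitDistr{\stateDist}{\policy}(\rstate)=1$, and the second drops the weights via $\visitDistr{\stateDist}{\policy}(\rstate)\leq1$ and recognizes the resulting sum of squares as $\norm{\policy_\play-\policy_\play'}^{2}$ for the Euclidean norm on $\R^{\pures_\play\times\states}$. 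Chaining the three displays (with the $\sqrt{\nPures_\play}$ pulled out of the state sum) gives the asserted inequality.

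This proposition is essentially a routine total-variation estimate, so I do not expect a genuine obstacle; the only point requiring care is the outer Cauchy--Schwarz. It is tempting to bound $\sum_{\rstate}\visitDistr{\stateDist}{\policy}(\rstate)\,\twonorm{\cdot}$ by simply dropping the weights, but that would produce an unwanted $\sqrt{\nStates}$ factor; using that the weights are a probability distribution is precisely what delivers the clean $\norm{\policy_\play-\policy_\play'}$ on the right-hand side. I would also remark that only the probability-vector property of $\visitDistr{\stateDist}{\policy}$ is used, so the estimate is insensitive to whether the visitation distribution is taken at $\policy$ or at $\policy'$.
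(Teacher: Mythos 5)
Your proof is correct and follows essentially the same route as the paper's: decouple the total variation of the product policies into per-player $\ell^1$ distances, convert each to an $\ell^2$ distance at the cost of $\sqrt{\nPures_\play}$, and use that $\visitDistr{\stateDist}{\policy}$ is a probability vector on $\states$. The only (cosmetic) differences are that you prove the subadditivity of total variation for product measures explicitly via the telescoping identity where the paper simply invokes it, and you aggregate over states with a weighted Cauchy--Schwarz step where the paper bounds each per-state slice $\twonorm{\policy_\play(\cdot\mid\rstate)-\policy_\play'(\cdot\mid\rstate)}$ directly by the global norm $\norm{\policy_\play-\policy_\play'}$ before summing the weights to one.
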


\begin{proof}
A series of direct calculations gives:
\begin{align}
\sum_{\rstate}
\visitDistr{\stateDist}{\policy_{}}(\rstate)\sum_{\pure_{\mathcal{M}}}|(\policy_{\mathcal{M}}-\policy_{\mathcal{M}}^{'})(\pure_{\mathcal{M}} \vert \rstate)|
	&= 2\sum_{\rstate}
\visitDistr{\stateDist}{\policy_{}}(\rstate)\frac{1}{2}\norm{(\policy_{\mathcal{M}}-\policy_{\mathcal{M}}^{'})}_1
	\notag\\
	&= 2\sum_{\rstate}
		\visitDistr{\stateDist}{\policy_{}}(\rstate)\frac{1}{2}d_{\mathrm{TV}}{(\policy_{\mathcal{M}}(\cdot|\rstate),\policy_{\mathcal{M}}^{'}(\cdot|\rstate))}
	\notag\\
	&\leq 2\sum_{\rstate}
\visitDistr{\stateDist}{\policy_{}}(\rstate)\sum_{\play\in\mathcal{M}}\frac{1}{2}d_{\mathrm{TV}}{(\policy_{\play}(\cdot|\rstate),\policy_{\play}^{'}(\cdot|\rstate))}
	\notag\\
	&=\sum_{\rstate} \visitDistr{\stateDist}{\policy_{}}(\rstate)\sum_{\play\in\mathcal{M}}\norm{(\policy_{\play}(\cdot|\rstate)-\policy_{\play}^{'}(\cdot|\rstate))}_1
	\notag\\
	&=\sum_{\rstate}
\visitDistr{\stateDist}{\policy_{}}(\rstate)\sum_{\play\in\mathcal{M}}\sqrt{\nPures_{\play}}\norm{\policy_\play-\policy'_\play}_2
	\notag\\
	&=\sum_{\play\in\mathcal{M}}\sqrt{\nPures_{\play}}\norm{\policy_\play-\policy'_\play}_2(\sum_{\rstate}
\visitDistr{\stateDist}{\policy_{}}(\rstate))
	\notag\\
	&=\sum_{\play\in\mathcal{M}}\sqrt{\nPures_{\play}}\norm{\policy_\play-\policy'_\play}_2
\end{align}
where $d_{\mathrm{TV}}$ corresponds to the total variation distance,
and the first inequality is a consequence of the triangle inequality for $d_{\mathrm{TV}}$.
\end{proof}
\begin{lemma}
\label{lem:smoothness}
For a given initial state distribution $\stateDist$, the independent player policy gradient operator $\gvalue(\policy)$ is Lipschitz continuous. More precisely, for any pair of policy profiles $\policy=(\policy_{\play},\policy_{-\play}),\policy'=(\policy_{\play}',\policy_{-\play}')$, it holds
\begin{equation}\norm{\gvalue_\play(\policy)-\gvalue_{\play}(\policy')}=
\norm{\nabla_\play  (\rvalue_{\play,\stateDist}(\policy_{})- 
\nabla_\play  (\rvalue_{\play,\stateDist}(\policy_{}^{'})} \leq \frac{3\sqrt{\nPures_{\play}}}{\stopPr^3}\sum_{\playalt=1}^{\nPlayers}\sqrt{\nPures_{\playalt}}\norm{\policy_\playalt-\policy'_\playalt}\quad \forall \play\in\players
\end{equation}
and consequently,
\begin{equation}\norm{\gvalue(\policy)-\gvalue(\policy')} \leq \frac{3\nPures}{\stopPr^3}\norm{\policy-\policy'}\end{equation}
\end{lemma}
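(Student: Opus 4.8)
The plan is to start from the coordinatewise expression of the gradient furnished by the third identity of \cref{lem:GradientViaQ}, namely $(\gvalue_\play(\policy))_{\pure_\play^\circ,\rstate^\circ} = \visitMeas{\stateDist}{\policy}(\rstate^\circ)\,\bar{Q}_\play^\policy(\rstate^\circ,\pure_\play^\circ)$, so that the gradient is a product of the (unnormalized) visitation measure and the averaged $Q$-function. Accordingly, I would write the difference as
\[
(\gvalue_\play(\policy) - \gvalue_\play(\policyalt))_{\pure_\play^\circ,\rstate^\circ}
= \visitMeas{\stateDist}{\policy}(\rstate^\circ)\bracks*{\bar{Q}_\play^\policy - \bar{Q}_\play^\policyalt}(\rstate^\circ,\pure_\play^\circ)
+ \bracks*{\visitMeas{\stateDist}{\policy} - \visitMeas{\stateDist}{\policyalt}}(\rstate^\circ)\,\bar{Q}_\play^\policyalt(\rstate^\circ,\pure_\play^\circ),
\]
and peel off the uniformly bounded factors using \cref{lem:well-defined-visitation-distribution} (so $\visitMeas{\stateDist}{\policy}(\rstate^\circ)\le\normConst{\stateDist}{\policy}\le 1/\stopPr$) and the elementary estimate $\abs{\bar{Q}_\play^\policy}\le 1/\stopPr$ (since rewards lie in $[-1,1]$ and the expected stopping time is $\le 1/\stopPr$, exactly as in \cref{lem:boundness}). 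This reduces the whole problem to controlling the two increments $\bar{Q}_\play^\policy - \bar{Q}_\play^\policyalt$ and $\visitMeas{\stateDist}{\policy} - \visitMeas{\stateDist}{\policyalt}$ in terms of $\norm{\policy_\playalt - \policyalt_\playalt}$.

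The next step is the resolvent identity. Using $\visitMatrix(\policy) = (I - \transMatrix(\policy))^{-1}$ from \cref{rem:matrix-definitions} together with the matrix forms of \cref{lem:matrix-rewritting}, I would invoke $\visitMatrix(\policy) - \visitMatrix(\policyalt) = \visitMatrix(\policy)[\transMatrix(\policy) - \transMatrix(\policyalt)]\visitMatrix(\policyalt)$. Since $\transMatrix(\policy)$ is substochastic with row sums $\le 1-\stopPr$, every row of $\visitMatrix(\policy)$ has $\ell_1$-mass $\le \sum_{t\ge0}(1-\stopPr)^t = 1/\stopPr$, and the perturbation entry $[\transMatrix(\policy)-\transMatrix(\policyalt)]_{(\rstate,\pure)\to(\rstate',\pure')} = [\policy(\pure'\vert\rstate') - \policyalt(\pure'\vert\rstate')]\,\tprobof{\rstate'\given\rstate,\pure}$ injects exactly one total-variation factor $\sum_{\pure'}\abs{\policy(\pure'\vert\rstate') - \policyalt(\pure'\vert\rstate')}$. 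Both increments then split into two mechanisms: a change in the explicit $\policy$-weighting (the averaging measure $\policy_{-\play}$ inside $\bar{Q}$, respectively the initial action weighting inside $\visitMeas{\stateDist}{\policy}$), handled by a direct triangle-inequality/TV bound against a quantity that is itself $\le 1/\stopPr$; and a change in the dynamics inside $\visitMatrix$, handled by the resolvent identity above. In every case the outcome is a bound of the form (bounded visitation row of mass $\le 1/\stopPr$) $\times$ ($1/\stopPr$) $\times$ (visitation-weighted total variation of $\policy$ against $\policyalt$).

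It then remains to assemble the bound. Each visitation-weighted total-variation sum has the form $\sum_\rstate w(\rstate)\sum_{\pure_M}\abs{(\policy_M - \policyalt_M)(\pure_M\vert\rstate)}$, where $w$ is a probability distribution over states (a normalized visitation distribution, the extracted mass $\normConst{\stateDist}{\policy}\le 1/\stopPr$ supplying one more factor of $1/\stopPr$) and $M$ is either the opponent set $-\play$ or the whole player set $\players$; \cref{prop:useful-dtv} bounds each by $\sum_{\playalt\in M}\sqrt{\nPures_\playalt}\,\norm{\policy_\playalt - \policyalt_\playalt} \le \sum_{\playalt}\sqrt{\nPures_\playalt}\,\norm{\policy_\playalt-\policyalt_\playalt}$. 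For the Euclidean norm I would keep the visitation measure \emph{inside} the state-sum and use $\sum_{\rstate^\circ}\visitMeas{\stateDist}{\policy}(\rstate^\circ)^2 \le (\normConst{\stateDist}{\policy})^2\le 1/\stopPr^2$, so that the summation over the $\nStates$ states is charged to the visitation measure (contributing only powers of $1/\stopPr$, never a factor of $\nStates$), while the summation over the $\nPures_\play$ actions of player $\play$ produces the factor $\sqrt{\nPures_\play}$. Collecting the three factors of $1/\stopPr$ yields $1/\stopPr^3$, and adding the contributions of the two subterms gives the stated constant; the aggregate bound finally follows from $\norm{\gvalue(\policy)-\gvalue(\policyalt)}^2 = \sum_\play \norm{\gvalue_\play(\policy)-\gvalue_\play(\policyalt)}^2$ together with Cauchy--Schwarz and $\sum_\playalt \nPures_\playalt \le \nPures$.

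The main obstacle is precisely this norm bookkeeping: ensuring the state sum is absorbed by the visitation measure (so that $\nStates$ never appears) while the action sum produces exactly $\sqrt{\nPures_\play}$, and simultaneously matching each visitation-weighted total-variation sum to the exact hypothesis of \cref{prop:useful-dtv}. The measure-change subterm of $\bar{Q}_\play^\policy-\bar{Q}_\play^\policyalt$ is the delicate one, since its per-state total variation does not factor out uniformly in $\rstate^\circ$, so one must either route it through the per-state $\ell_2$ estimate underlying \cref{prop:useful-dtv} and a further Cauchy--Schwarz, or absorb it into the visitation weighting so that the increment stays genuinely linear in $\norm{\policy_\playalt-\policyalt_\playalt}$; carefully tracking the $\stopPr$-powers through the nested resolvent expansions (each $\visitMatrix$ contributing $1/\stopPr$, each bounded terminal vector another $1/\stopPr$) so as to land on exactly $1/\stopPr^3$ rather than a higher power is where an over-count would inflate the constant.
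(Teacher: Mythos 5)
Your proposal is correct in substance but takes a genuinely different route from the paper's. The paper does not telescope the product formula $(\gvalue_\play(\policy))_{\pure_\play^\circ,\rstate^\circ}=\visitMeas{\stateDist}{\policy}(\rstate^\circ)\,\bar{Q}_{\play}^{\policy}(\rstate^\circ,\pure_{\play}^\circ)$ directly; instead it fixes a unit perturbation direction $\pertrubation$, uses the performance-difference lemma (\cref{lem:performance-lemma}) to rewrite $\rvalue_{\play,\stateDist}(\policy_{\lambda}^{\this})-\rvalue_{\play,\stateDist}(\policy_{\lambda}^{\that})$ as $\sum_{\rstate,\pure}\visitMeas{\stateDist}{\policy_{\lambda}^{\this}}(\rstate)\,(\policy_{\lambda}^{\this}-\policy_{\lambda}^{\that})(\pure\vert\rstate)\,Q_{\play}^{\policy_{\lambda}^{\that}}(\rstate,\pure)$ \textemdash\ so that the policy increment already appears as an explicit factor \textemdash\ and then differentiates this three-factor product in $\lambda$ at $0$, bounding the three resulting terms with precisely the ingredients you list (the resolvent identity for $\visitMatrix=(I-\transMatrix)^{-1}$, the $1/\stopPr$ row-mass and $Q$-bounds, and \cref{prop:useful-dtv}). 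What the paper's route buys is that it never needs $\bar{Q}_{\play}^{\policy}-\bar{Q}_{\play}^{\policy'}$ or $\visitMeas{\stateDist}{\policy}-\visitMeas{\stateDist}{\policy'}$ as standalone objects and it yields only three subterms; your product-rule decomposition yields four (each of your two terms splits again into a reweighting piece and a dynamics piece), so a straightforward execution lands on a constant of order $4\sqrt{\nPures_\play}/\stopPr^{3}$ rather than $3\sqrt{\nPures_\play}/\stopPr^{3}$. This is immaterial: the constant is not sharp (the paper's own collection of $\stopPr^{-2}(1+\stopPr^{-1})$ into $\stopPr^{-3}$ is equally loose) and nothing downstream uses the factor $3$. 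The two difficulties you flag both resolve as you anticipate: the state sum is absorbed by $\sum_{\rstate^\circ}\visitMeas{\stateDist}{\policy}(\rstate^\circ)^{2}\leq(\normConst{\stateDist}{\policy})^{2}\leq 1/\stopPr^{2}$ so no factor of $\nStates$ appears, and the non-uniform per-state total variation in the $\bar{Q}$ reweighting term can be handled by the crude bound $\max_{\rstate}\onenorm{(\policy_\playalt-\policy'_\playalt)(\cdot\vert\rstate)}\leq\sqrt{\nPures_\playalt}\,\norm{\policy_\playalt-\policy'_\playalt}$ in place of the visitation-weighted average of \cref{prop:useful-dtv}. In short, your argument is a finite-difference version of the paper's infinitesimal one, built on the same matrix machinery, and it goes through.
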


\begin{proof}
For the proof, we will follow the approach of \citet{zhang2021gradient} and \citet{AgarwalKLM20}.
Our first task is to bound the directional derivative of the $\play$-th player's value function. 
To that end, let $\policy,\policy'\in\policies$ and $\pertrubation \in \states\times\pures$ such that $\norm{\pertrubation}=1$, and consider $\lambda$-perturbed policies
\begin{equation}
\begin{aligned}
\policy_{\lambda}^{\this}(\pure\vert\rstate)
	&=(\policy_\play+\lambda\pertrubation,\policy_{-\play})
	\\
\policy_{\lambda}^{\that}(\pure\vert\rstate)
	&=(\policy_\play'+\lambda\pertrubation,\policy_{-\play}')
\end{aligned}
\end{equation}
We then have:
\begin{subequations}
\begin{align}
\abs*{\frac{\pd \rvalue_{\play,\stateDist}(\policy_{\lambda}^{\this})}{\pd \lambda}-
	\frac{\pd \rvalue_{\play,\stateDist}(\policy_{\lambda}^{\that})}{\pd \lambda}}
	&=\abs*{\frac{\pd \rvalue_{\play,\stateDist}(\policy_{\lambda}^{\this})- \rvalue_{\play,\stateDist}(\policy_{\lambda}^{\that})}{\pd \lambda}}
	=\abs*{\frac{\pd\left( \rvalue_{\play,\stateDist}(\policy_{\lambda}^{\this})- \rvalue_{\play,\stateDist}(\policy_{\lambda}^{\that})\right)}{\pd \lambda}}
	\notag\\
	&=\abs*{\frac{\pd\left(
	\normConst{\stateDist}{\policy_{\lambda}^{\this}}
  \ex_{\rstate\sim\visitDistr{\stateDist}{\policy_{\lambda}^{\this}}}\ex_{\pure\sim\policy_{\lambda}^{\this}(\cdot \vert \rstate)}\bracks*{\adv_{\play}^{\policy_{\lambda}^{\that}}(\rstate,\pure) }
	\right)}{\pd \lambda}}
	\label{eq:performance_application}\\
	&= \abs*{\frac{\pd\left(\normConst{\stateDist}{\policy_{\lambda}^{\this}}
  \sum_{\rstate,\pure}\visitDistr{\stateDist}{\policy_{\lambda}^{\this}}(\rstate)(\policy_{\lambda}^{\this}-\policy_{\lambda}^{\that})(\pure \vert \rstate)\adv_{\play}^{\policy_{\lambda}^{\that}}(\rstate,\pure)
	\right)}{\pd \lambda}}
	\label{eq:zero-of-advantage}\\
	&= \abs*{\frac{\pd\left(\normConst{\stateDist}{\policy_{\lambda}^{\this}}
  \sum_{\rstate,\pure}\visitDistr{\stateDist}{\policy_{\lambda}^{\this}}(\rstate)(\policy_{\lambda}^{\this}-\policy_{\lambda}^{\that})(\pure \vert \rstate)Q_{\play}^{\policy_{\lambda}^{\that}}(\rstate,\pure)
	\right)}{\pd \lambda}}
	\notag\\
	&= \abs*{\frac{\pd\left(
  \sum_{\rstate,\pure}\visitMeas{\stateDist}{\policy_{\lambda}^{\this}}(\rstate)(\policy_{\lambda}^{\this}-\policy_{\lambda}^{\that})(\pure \vert \rstate)Q_{\play}^{\policy_{\lambda}^{\that}}(\rstate,\pure)
	\right)}{\pd \lambda}}
\end{align}
\end{subequations}
where \eqref{eq:performance_application} follows from \cref{lem:performance-lemma} and \eqref{eq:zero-of-advantage} uses the fact that $\sum_{\pure\in\pures}\policy(\pure\vert\rstate)\adv_{\play}^{\policy}(\rstate,\pure)=$, for all $\rstate\in\states$ and the last one is derived by the definition $\visitDistr{\stateDist}{\policy}\parens{\rstate} \defeq \visitMeas{\stateDist}{\policy}\parens{\rstate}/\normConst{\stateDist}{\policy}$.

By triangular inequality, the linearity of $\pd$ operator and \cref{lem:well-defined-visitation-distribution}, we have:
\begin{align}
		\left|\frac{\pd (\rvalue_{\play,\stateDist}(\policy_{\lambda}^{\this})-\rvalue_{\play,\stateDist}(\policy_{\lambda}^{\that}))}{\pd \lambda}\Big|_{\lambda=0}\right|
	&\leq
	\abs*{\sum_{\rstate,\pure}\frac{
  \pd\visitMeas{\stateDist}{\policy_{\lambda}^{\this}}(\rstate)
  }{\pd \lambda}\Big|_{\lambda=0}
  (\policy_{}-\policy_{}^{'})(\pure \vert \rstate)Q_{\play}^{\policy_{}^{'}}(\rstate,\pure)}
	\notag\\
	&+
	\normConst{\stateDist}{\policy_{}^{\this}}\abs*{  \sum_{\rstate,\pure}
\visitDistr{\stateDist}{\policy_{}}(\rstate)\frac{\pd(\policy_{\lambda}^{\this}-\policy_{\lambda}^{\that})(\pure \vert \rstate)}{\pd \lambda}\Big|_{\lambda=0}Q_{\play}^{\policy_{}^{'}}(\rstate,\pure)}
	\notag\\
	&+ 
	\normConst{\stateDist}{\policy_{}^{\this}}\abs*{\sum_{\rstate,\pure}
  \visitDistr{\stateDist}{\policy_{}}(\rstate)(\policy_{}-\policy_{}^{'})(\pure \vert \rstate)\frac{\pd Q_{\play}^{\policy_{\lambda}^{\that}}(\rstate,\pure)
	}{\pd \lambda}\Big|_{\lambda=0}}
\end{align}
We will bound the following three terms separately:
\begin{equation}
\begin{aligned}
\mathrm{Term}_A
	&= \abs*{\sum_{\rstate,\pure}\frac{
  \pd\visitMeas{\stateDist}{\policy_{\lambda}^{\this}}(\rstate)
  }{\pd \lambda}\Big|_{\lambda=0}
  (\policy_{}-\policy_{}^{'})(\pure \vert \rstate)Q_{\play}^{\policy_{}^{'}}(\rstate,\pure)}
	\\
\mathrm{Term}_B
	&= \abs*{  \sum_{\rstate,\pure}
\visitDistr{\stateDist}{\policy_{}}(\rstate)\frac{\pd(\policy_{\lambda}^{\this}-\policy_{\lambda}^{\that})(\pure \vert \rstate)}{\pd \lambda}\Big|_{\lambda=0}Q_{\play}^{\policy_{}^{'}}(\rstate,\pure)}
	\\
\mathrm{Term}_C
	&= \abs*{\sum_{\rstate,\pure}
  \visitDistr{\stateDist}{\policy_{}}(\rstate)(\policy_{}-\policy_{}^{'})(\pure \vert \rstate)\frac{\pd Q_{\play}^{\policy_{\lambda}^{\that}}(\rstate,\pure)}{\pd \lambda}\Big|_{\lambda=0}}
\end{aligned}
\end{equation}
For $\mathrm{Term}_A$, we will use \cref{lem:matrix-rewritting} in order to compute compactly the derivative:
\begin{align}
\frac{\pd\visitMeas{\stateDist}{\policy_{\lambda}^{\this}}(\rstate)}{\pd \lambda}
	&= \frac{\pd\parens*{\bracks*{\sum_{\rstate'\in\states}\stateDist(\rstate')\sum_{\pure'\in\pures}\policy_{\lambda}^{\this}(\pure'\vert\rstate')e_{\rstate',\pure'}}^\top\visitMatrix(\policy_{\lambda}^{\this})\sum_{\pure\in\pures} e_{\rstate,\pure}}}{\pd \lambda}
	\notag\\
	&= \parens*{\;\bracks*{\sum_{\rstate'\in\states}\stateDist(\rstate')\sum_{\pure'\in\pures}\frac{\pd\policy_{\lambda}^{\this}(\pure'\vert\rstate')}{\pd \lambda}e_{\rstate',\pure'}}^\top\visitMatrix(\policy_{\lambda}^{\this})\sum_{\pure\in\pures} e_{\rstate,\pure}}
	\notag\\
	&\qquad
		+ \parens*{\;\bracks*{\sum_{\rstate'\in\states}\stateDist(\rstate')\sum_{\pure'\in\pures}\policy_{\lambda}^{\this}(\pure'\vert\rstate')e_{\rstate',\pure'}}^\top\frac{\pd\visitMatrix(\policy_{\lambda}^{\this})}{\pd \lambda}\sum_{\pure\in\pures} e_{\rstate,\pure}}
	\notag\\
	&= \parens*{\;\bracks*{\sum_{\rstate'\in\states}\stateDist(\rstate')\sum_{\pure'\in\pures}{\pertrubation(\pure'_\play\vert\rstate')\cdot\policy_{-\play}(\pure'_{-\play}\vert\rstate')}e_{\rstate',\pure'}}^\top\visitMatrix(\policy_{\lambda}^{\this})\sum_{\pure\in\pures} e_{\rstate,\pure}}
	\notag\\
	&\qquad
		+ \parens*{\;\bracks*{\sum_{\rstate'\in\states}\stateDist(\rstate')\sum_{\pure'\in\pures}\policy_{\lambda}^{\this}(\pure'\vert\rstate')e_{\rstate',\pure'}}^\top\frac{\pd
  (I-\transMatrix(\policy_{\lambda}^{\this}))^{-1}}{\pd \lambda}\sum_{\pure\in\pures} e_{\rstate,\pure}}
  \notag\\
	&=	\parens*{\;\bracks*{\sum_{\rstate'\in\states}\stateDist(\rstate')\sum_{\pure'\in\pures}{\pertrubation(\pure'_\play\vert\rstate')\cdot\policy_{-\play}(\pure'_{-\play}\vert\rstate')}e_{\rstate',\pure'}}^\top\visitMatrix(\policy_{\lambda}^{\this})\sum_{\pure\in\pures} e_{\rstate,\pure}}
	\notag\\
	&\qquad
		+ \parens*{\;\bracks*{\sum_{\rstate'\in\states}\stateDist(\rstate')\sum_{\pure'\in\pures}\policy_{\lambda}^{\this}(\pure'\vert\rstate')e_{\rstate',\pure'}}^\top
  \big( 
  \visitMatrix(\policy_{\lambda}^{\this})
  \frac{\pd
  \transMatrix(\policy_{\lambda}^{\this})}{\pd \lambda}
  \visitMatrix(\policy_{\lambda}^{\this})
  \big)
  \sum_{\pure\in\pures} e_{\rstate,\pure}}
\end{align}
Thus for $\lambda=0$, we get
\begin{align}
\frac{\pd\visitMeas{\stateDist}{\policy_{\lambda}^{\this}}(\rstate)}{\pd \lambda}\Big|_{\lambda=0}
	&=	\parens*{\;\bracks*{\sum_{\rstate'\in\states}\stateDist(\rstate')\sum_{\pure'\in\pures}{\pertrubation(\pure'_\play\vert\rstate')\cdot\policy_{-\play}(\pure'_{-\play}\vert\rstate')}e_{\rstate',\pure'}}^\top\visitMatrix(\policy_{})\sum_{\pure\in\pures} e_{\rstate,\pure}}
	\notag\\
	&+ \parens*{\;\bracks*{\sum_{\rstate'\in\states}\stateDist(\rstate')\sum_{\pure'\in\pures}\policy_{}(\pure'\vert\rstate')e_{\rstate',\pure'}}^\top
  \big( 
  \visitMatrix(\policy_{})
  \frac{\pd
  \transMatrix(\policy_{\lambda}^{\this})}{\pd \lambda}\Big|_{\lambda=0}
  \visitMatrix(\policy_{})
  \big)
  \sum_{\pure\in\pures} e_{\rstate,\pure}}
\end{align}
Notice that $ \bracks*{\frac{\pd
  \transMatrix(\policy_{\lambda}^{\this})}{\pd \lambda}\Big|_{\lambda=0}}_{(\rstate^\circ,\pure^\circ)\to(\rstate^\star,\pure^\star)}=\pertrubation(\pure^\star_\play\vert\rstate^\star)\cdot\policy_{-\play}(\pure^\star_{-\play}\vert\rstate^\star)P(\rstate^\star|\rstate^\circ,\pure^\circ)$.
\newmacro{\aux}{\mathrm{aux}}

To simplify notation let us call $\aux_A\defeq \bracks*{\sum_{\rstate'\in\states}\stateDist(\rstate')\sum_{\pure'\in\pures}{\pertrubation(\pure'_\play\vert\rstate')\cdot\policy_{-\play}(\pure'_{-\play}\vert\rstate')}e_{\rstate',\pure'}}$, $\aux_B\defeq
\bracks*{\sum_{\rstate'\in\states}\stateDist(\rstate')\sum_{\pure'\in\pures}\policy_{}(\pure'\vert\rstate')e_{\rstate',\pure'}}
$ and $\aux_C(\rstate)\defeq \sum_{\pure\in\pures} e_{\rstate,\pure}$.

We thus get:
\begin{align}
\mathrm{Term}_A
	&=\abs*{\sum_{\rstate,\pure}
		\frac{\pd\visitMeas{\stateDist}{\policy_{\lambda}^{\this}}(\rstate)}{\pd \lambda}\Big|_{\lambda=0} (\policy_{'}-\policy_{}^{'})(\pure \vert \rstate)Q_{\play}^{\policy_{\rstate'}^{'}}(\rstate,\pure)}
	\notag\\
	&=\abs*{\sum_{\rstate,\pure}\parens*{
		  \aux_A^\top\visitMatrix(\policy_{})\aux_C(\rstate)
  +\aux_B^\top
  \big( 
  \visitMatrix(\policy_{})
  \frac{\pd
  \transMatrix(\policy_{\lambda}^{\this})}{\pd \lambda}\Big|_{\lambda=0}
  \visitMatrix(\policy_{})
  \big)\aux_C(\rstate)}
	(\policy_{}-\policy_{}^{'})(\pure \vert \rstate)Q_{\play}^{\policy_{}^{'}}(\rstate,\pure)}
	\notag\\
	&=\abs*{\parens*{
  \aux_A^\top\visitMatrix(\policy_{})
  +\aux_B^\top
  \big( 
  \visitMatrix(\policy_{})
  \frac{\pd
  \transMatrix(\policy_{\lambda}^{\this})}{\pd \lambda}\Big|_{\lambda=0}
  \visitMatrix(\policy_{})
  \big)}
  \underbrace{\sum_{\rstate,\pure}
  (\policy_{}-\policy_{}^{'})(\pure \vert \rstate)Q_{\play}^{\policy_{}^{'}}(\rstate,\pure)\aux_C(\rstate)}_{\aux_D}}
	\notag\\
	&\leq\norm{\aux_A}_1 \norm{\visitMatrix(\policy_{})\aux_D}_\infty
		+ \norm{\aux_B}_1\norm{
  \big( 
  \visitMatrix(\policy_{})
  \frac{\pd
  \transMatrix(\policy_{\lambda}^{\this})}{\pd \lambda}\Big|_{\lambda=0}
  \visitMatrix(\policy_{})
  \big)\aux_D}_{\infty}
\label{eq:final-bound-term-a}
\end{align}
It is easy to see that $\norm{\aux_A}_1\leq \sqrt{\nPures_{\play}}$, $\norm{\aux_B}_1=1$. Indeed,
\begin{align}
\norm{\aux_A}_1
	&=\sum_{\rstate'\in\states}\stateDist(\rstate')\sum_{\pure'\in\pures}{|\pertrubation(\pure'_\play\vert\rstate')|\cdot\policy_{-\play}(\pure'_{-\play}\vert\rstate')}
	= \sum_{\rstate'\in\states}\stateDist(\rstate')\sum_{\pure_{\play}'\in\pures_\play}|\pertrubation(\pure'_\play\vert\rstate')|
	\notag\\
	&= \sum_{\rstate'\in\states}\stateDist(\rstate')\norm{\pertrubation_{\play\vert\rstate'}}_1
	\leq \sum_{\rstate'\in\states}\stateDist(\rstate')\sqrt{\nPures_{\play}}\norm{\pertrubation_{\play\vert\rstate'}}_2\leq \sqrt{\nPures_{\play}}
	\notag\\
	\norm{\aux_B}_1
	&= \sum_{\rstate'\in\states}\stateDist(\rstate')\sum_{\pure'\in\pures}\policy_{}(\pure'\vert\rstate')=1
\end{align}
Additionally by Conversion Lemma in Matrix form (See \cref{lem:conversion-matrix}), we have that:
\begin{equation}\label{eq:bound-on-T}
\norm{\visitMatrix(\policy_{})x}_\infty=\max_{\rstate,\pure}
|e_{\rstate,\pure}^\top\visitMatrix(\policy_{})x|=
\max_{\rstate,\pure}
|  \ex_{\traj \sim \MDP}\bracks*{\insum_{\time =\tstart}^{\stopTime\parens{\traj}}x(\rstate_\time, \pure_\time )\vert \pure_\tstart=\pure,\rstate_\tstart=\rstate}|\leq \frac{1}{\stopPr}\norm{x}_\infty
\end{equation}
Similarly, for the matrix $  \frac{\pd
  \transMatrix(\policy_{\lambda}^{\this})}{\pd \lambda}\Big|_{\lambda=0}$, we have that
\begin{align}
\label{eq:bound-on-P}
\norm{\frac{\pd \transMatrix(\policy_{\lambda}^{\this})}{\pd \lambda}\Big|_{\lambda=0}x}_\infty&=\max_{\rstate,\pure}
	\abs*{e_{\rstate,\pure}^\top\frac{\pd\transMatrix(\policy_{\lambda}^{\this})}{\pd \lambda}\Big|_{\lambda=0}x}
  \notag\\
	&= \max_{\rstate,\pure} | \sum_{\rstate',\pure'}  \pertrubation(\pure'_\play\vert\rstate')\cdot\policy_{-\play}(\pure'_{-\play}\vert\rstate')P(\rstate'|\rstate,\pure) x_{\rstate',\pure'} |
	\notag\\
	&\leq \sum_{\rstate',\pure'}
		\abs*{\pertrubation(\pure'_\play\vert\rstate')}
			\cdot\policy_{-\play}(\pure'_{-\play}\vert\rstate')P(\rstate'|\rstate,\pure)
	\notag\\
	&\leq \sqrt{\nPures_{\play}}\norm{\pertrubation_{\play\vert\rstate'}}_2\norm{x}_\infty
	\leq\sqrt{\nPures_{\play}}\norm{x}_\infty
\end{align}
since $\norm{\pertrubation}_2=1$. 
Then, using \eqref{eq:bound-on-P} and \eqref{eq:bound-on-T} in \eqref{eq:final-bound-term-a} we get that :
\begin{align}
\mathrm{Term}_A
	&\leq \frac{\sqrt{\nPures_{\play}}}{\stopPr}\norm{\aux_D}_\infty
		+ \frac{\sqrt{\nPures_{\play}}}{\stopPr^2}\norm{\aux_D}_\infty
	\notag\\
	&\leq \frac{\sqrt{\nPures_{\play}}}{\stopPr}(1+\frac{1}{\stopPr})
		\norm*{\sum_{\rstate,\pure}(\policy_{}-\policy_{}^{'})(\pure \vert \rstate)Q_{\play}^{\policy_{}^{'}}(\rstate,\pure)\aux_C(\rstate)}_\infty
	\notag\\
	&\leq \frac{\sqrt{\nPures_{\play}}}{\stopPr^2}(1+\frac{1}{\stopPr})\max_{\rstate}\abs*{\sum_{\pure}
  		(\policy_{}-\policy_{}^{'})(\pure \vert \rstate)}\norm{\aux_C(\rstate)}_\infty
	\notag\\
	&\leq \frac{\sqrt{\nPures_{\play}}}{\stopPr^2}(1+\frac{1}{\stopPr})
		\sum_{\playalt=1}^{\nPlayers}\sqrt{\nPures_{\play}}\norm{\policy_\playalt-\policy'_\playalt}
	\notag\\
	&\leq \frac{\sqrt{\nPures_{\play}}}{\stopPr^3}\sum_{\playalt=1}^{\nPlayers}
		\sqrt{\nPures_{\play}} \cdot \norm{\policy_\playalt-\policy'_\playalt}
\end{align} 
where we used above the fact that $Q$ function is bounded by $1/\stopPr$, $\norm{\pertrubation}=1$ and \cref{prop:useful-dtv} to bound the difference of the policy profiles.

Moving forward, for $\mathrm{Term}_B$, we have:
\begin{align}
\mathrm{Term}_B
	&=\abs*{  \sum_{\rstate,\pure}
\visitDistr{\stateDist}{\policy_{}}(\rstate)\frac{\pd(\policy_{\lambda}^{\this}-\policy_{\lambda}^{\that})(\pure \vert \rstate)}{\pd \lambda}\Big|_{\lambda=0}Q_{\play}^{\policy_{}^{'}}(\rstate,\pure)}
	\notag\\
	&=\abs*{  \sum_{\rstate,\pure} \visitDistr{\stateDist}{\policy_{}}(\rstate)\pertrubation(\pure_\play\vert\rstate)(\policy_{-\play}-\policy_{-\play}^{'})(\pure \vert \rstate)Q_{\play}^{\policy_{}^{'}}(\rstate,\pure)}
	\notag\\
	&\leq\frac{1}{\stopPr}\abs*{  \sum_{\rstate}
\visitDistr{\stateDist}{\policy_{}}(\rstate)\sum_{\pure_\play}\pertrubation(\pure_\play\vert\rstate)\sum_{\pure_{-\play}}(\policy_{-\play}-\policy_{-\play}^{'})(\pure \vert \rstate)}
	\notag\\
	&\leq\frac{1}{\stopPr} \sum_{\rstate}\abs*{
\visitDistr{\stateDist}{\policy_{}}(\rstate)\max_{\rstate}\sum_{\pure_\play}|\pertrubation(\pure_\play\vert\rstate)|\sum_{\pure_{-\play}}(\policy_{-\play}-\policy_{-\play}^{'})(\pure \vert \rstate)}
	\notag\\
	&\leq\frac{1}{\stopPr} \max_{\rstate}\norm{\pertrubation_{\play\vert\rstate}}_1 \sum_{\rstate}
\visitDistr{\stateDist}{\policy_{}}(\rstate)\sum_{\pure_{-\play}}|(\policy_{-\play}-\policy_{-\play}^{'})(\pure \vert \rstate)|
	\notag\\
	&\leq\frac{\sqrt{\nPures_{\play}}}{\stopPr} \max_{\rstate}\norm{\pertrubation_{\play\vert\rstate}}_2 \big(\sum_{\rstate}
\visitDistr{\stateDist}{\policy_{}}(\rstate)\sum_{\pure_{-\play}}|(\policy_{-\play}-\policy_{-\play}^{'})(\pure \vert \rstate)|\big)
	\notag\\
	&\leq \frac{\sqrt{\nPures_{\play}}}{\stopPr}\sum_{\playalt\in\players\setminus\{\play\}}\sqrt{\nPures_{\play}}\norm{\policy_\playalt-\policy'_\playalt}
\leq\frac{\sqrt{\nPures_{\play}}}{\stopPr}\sum_{\playalt=1}^{\nPlayers}\sqrt{\nPures_{\play}}\norm{\policy_\playalt-\policy'_\playalt}
\end{align}
where we used again the fact that $Q$ function is bounded by $1/\stopPr$ and \cref{prop:useful-dtv} to bound the difference of the policy profiles.

Finally, for $\mathrm{Term}_C$, we have:
\begin{align}
\mathrm{Term}_C
	&=\abs*{\sum_{\rstate,\pure}
  \visitDistr{\stateDist}{\policy_{}}(\rstate)(\policy_{}-\policy_{}^{'})(\pure \vert \rstate)\frac{\pd Q_{\play}^{\policy_{\lambda}^{\that}}(\rstate,\pure)
	}{\pd \lambda}\Big|_{\lambda=0}}
	\notag\\
	&\leq \max_{\rstate,\pure}\abs*{\frac{\pd Q_{\play}^{\policy_{\lambda}^{\that}}(\rstate,\pure)}{\pd \lambda}\Big|_{\lambda=0}} \sum_{\rstate,\pure}
	\visitDistr{\stateDist}{\policy_{}}(\rstate)\abs*{(\policy_{}-\policy_{}^{'})(\pure \vert \rstate)}
	\notag\\
	&\leq \max_{\rstate,\pure}\abs*{\frac{\pd Q_{\play}^{\policy_{\lambda}^{\that}}(\rstate,\pure)
	}{\pd \lambda}\Big|_{\lambda=0}} \sum_{\playalt=1}^{\nPlayers}\sqrt{\nPures_{\playalt}}\norm{\policy_\playalt-\policy'_\playalt}
	\notag\\
 &\leq \max_{\rstate,\pure}\abs*{e_{\rstate,\pure}^\top
 \frac{\pd \visitMatrix(\policy_{\lambda}^{\that})}{\pd \lambda}\Big|_{\lambda=0}\reward_\play}\sum_{\playalt=1}^{\nPlayers}\sqrt{\nPures_{\playalt}}\norm{\policy_\playalt-\policy'_\playalt}
	\notag\\
 &\leq \max_{\rstate,\pure}\abs*{e_{\rstate,\pure}^\top
  \frac{\pd
  (I-\transMatrix(\policy_{\lambda}^{\this}))^{-1}}{\pd \lambda}\Big|_{\lambda=0}\reward_\play}\sum_{\playalt=1}^{\nPlayers}\sqrt{\nPures_{\playalt}}\norm{\policy_\playalt-\policy'_\playalt}
	\notag\\
 &\leq \max_{\rstate,\pure}\abs*{e_{\rstate,\pure}^\top
  \big( 
  \visitMatrix(\policy_{})
  \frac{\pd
  \transMatrix(\policy_{\lambda}^{\this})}{\pd \lambda}\Big|_{\lambda=0}
  \visitMatrix(\policy_{})
  \big)
  \reward_\play}\sum_{\playalt=1}^{\nPlayers}\sqrt{\nPures_{\playalt}}\norm{\policy_\playalt-\policy'_\playalt}
	\notag\\
	&\leq \frac{\sqrt{\nPures_{\playalt}}}{\stopPr^2}\sum_{\playalt=1}^{\nPlayers}\sqrt{\nPures_{\playalt}}\norm{\policy_\playalt-\policy'_\playalt}
\end{align}
using again \eqref{eq:bound-on-P} and \eqref{eq:bound-on-T} and \cref{prop:useful-dtv}.
Thus, we are ready now to bound the gradient per player:
\begin{equation}
\left|\frac{\pd (\rvalue_{\play,\stateDist}(\policy_{\lambda}^{\this})-\rvalue_{\play,\stateDist}(\policy_{\lambda}^{\that}))}{\pd \lambda}\Big|_{\lambda=0}\right|
\leq \mathrm{Term}_A+\normConst{\stateDist}{\policy_{}^{\this}}
(\mathrm{Term}_B+ \mathrm{Term}_C)\leq \frac{3\sqrt{\nPures_{\play}}}{\stopPr^3}\sum_{\playalt=1}^{\nPlayers}\sqrt{\nPures_{\playalt}}\norm{\policy_\playalt-\policy'_\playalt}
\end{equation}
where we recall that $ \normConst{\stateDist}{\policy_{}^{\this}}\leq \frac{1}{\stopPr} $
Since we prove it for an arbitrary perturbation vector $\pertrubation$ for the directional derivative, for the independent player's policy gradient it holds also that:
\begin{equation}\norm{\gvalue_\play(\policy)-\gvalue_{\play}(\policy')}=
\norm{\nabla_\play  (\rvalue_{\play,\stateDist}(\policy_{})- 
\nabla_\play  (\rvalue_{\play,\stateDist}(\policy_{}^{'})} \leq \frac{3\sqrt{\nPures_{\play}}}{\stopPr^3}\sum_{\playalt=1}^{\nPlayers}\sqrt{\nPures_{\playalt}}\norm{\policy_\playalt-\policy'_\playalt}\quad \forall \play\in\players
\end{equation}
Finally for the concatenated gradient operator we get:
\begin{align}
\norm{\gvalue(\policy)-\gvalue(\policy')}
&=\sqrt{\sum_{\play\in\players}\norm{\gvalue_\play(\policy)-\gvalue_{\play}(\policy')}^2}=\sqrt{\sum_{\play\in\players}
\norm{\nabla_\play  (\rvalue_{\play,\stateDist}(\policy_{})- 
\nabla_\play  (\rvalue_{\play,\stateDist}(\policy_{}^{'})}^2}
	\notag\\
&\leq \sqrt{\sum_{\play\in\players} \frac{9\nPures_{\play}}{\stopPr^6}\big(\sum_{\playalt\in \players}\sqrt{\nPures_{\playalt}}\norm{\policy_\playalt-\policy'_\playalt}\big)^2}
\leq \sqrt{\sum_{\play\in\players} \frac{9\nPures_{\play}}{\stopPr^6}\sum_{\playalt\in \players}\nPures_{\playalt}\sum_{\playalt\in \players}\norm{\policy_\playalt-\policy'_\playalt}^2}
	\notag\\
&\leq \frac{3}{\stopPr^3}\sqrt{(\sum_{\play\in \players}\nPures_{\play})^2\norm{\policy-\policy'}^2}\leq \frac{3\nPures}{\stopPr^3}\norm{\policy-\policy'}
\end{align}
which completes our proof.
\end{proof}

\begin{remark}
In the proof above, we considered perturbations that may formally lie outside the game's policy space.
However, it is not difficult to see that for sufficiently small $\lambda$ both $V_i({\pi(\lambda)}),\nabla V_i({\pi(\lambda)}), Z_\rho^{\pi(\lambda)}$ are well-defined and bounded, for $ \policy_{\lambda}^{\this}(\pure\vert\rstate)=(\policy_\play+\lambda\pertrubation,\policy_{-\play},\policy_{-\play})$.
In view of this, we may harmlessly assume that all functions considered above are defined in an open neighborhood of the players' policy space.
\end{remark}

\section{Statistics of \reinforce}
\label{app:reinforce}

In this appendix, we prove the two fundamental properties of the \reinforce Policy Gradient estimator that we stated in \cref{lem:reinforce}, namely:
\begin{enumerate}
[\itshape a\upshape)]
\item
\reinforce is an unbiased estimator of $\gvalue(\policy)$.
\item
The variance of \reinforce is bounded from above by $\bigoh\parens{1/\min_{\rstate\in\states,\pure_{\play}\in\pures_{\play}} \policy_\play(\pure_\play \vert \rstate)}$ for each $\play\in\players$.
\end{enumerate}

We recall here that $\nabla_{\play}$ denotes the gradient of the quantity in question with respect to $\policy_{\play}$, \ie when $\policy_{-\play}$ is kept fixed and only $\policy_{\play}$ is varied. 
For concision, we will write $\gvalue_{\play}(\policy) = \nabla_{\play} \rvalue_{\play,\stateDist}(\policy)$ for the individual gradient of player $\play$'s value function, and $\gvalue(\policy) = (\gvalue_{\play}(\policy))_{\play\in\players}$ for the ensemble thereof.

With all this said and done, we begin by restating \cref{lem:reinforce} for convenience:

\ReinforceLemma*
\begin{proof}
Without loss of generality let's assume that $\MDP\equiv\MDP\parens{\policy\ |\ \stateDist}$ for some initial state distribution $\stateDist$.
Additionally, we denote $\prob^\policy\parens{\traj}$ the induced probability of a random trajectory $\traj = (\rstate_{\time}, \pure_{\time}, \realReward_{\time})_{\time \leq\stopTime(\traj)}$.
\begin{align}
\ex_{\traj \sim \MDP}\bracks{\logvalue_\play}	&= \ex_{\traj \sim \MDP}\bracks{\rewards_\play\parens{\traj}\cdot\logtrick_\play\parens{\traj}}=\sum_{\traj\in\mathcal{T}}\prob^\policy\parens{\traj}\rewards_\play\parens{\traj}\cdot\logtrick_\play\parens{\traj}
	\notag\\
	&= \sum_{\traj\in\mathcal{T}}\prob^\policy\parens{\traj}\rewards_\play\parens{\traj}\cdot\bracks{\sum_{\time =\tstart}^{\stopTime\parens{\traj}} \nabla_{\play}\parens{\log\policy_\play\parens{\action_{\play,\time}|\rstate_\time}}}
	\notag\\
	&=\sum_{\traj\in\mathcal{T}} \prob^\policy\parens{\traj} \rewards_\play\parens{\traj}
		\cdot \nabla_{\play}\bracks*{\sum_{\time =\tstart}^{\stopTime\parens{\traj}} {\log\policy_\play\parens{\action_{\play,\time}|\rstate_\time}}}
	\notag\\
	&=\sum_{\traj\in\mathcal{T}}
		\prob^\policy\parens{\traj} \rewards_\play\parens{\traj} \nabla_{\play}
		\sum_{\time=\tstart}^{\stopTime\parens{\traj}} {\log\policy_\play\parens{\action_{\play,\time}|\rstate_\time}}
	\notag\\
	&\qquad
		+ \sum_{\traj\in\mathcal{T}} \prob^\policy\parens{\traj}
  \rewards_\play\parens{\traj} \parens*{ \nabla_{\play} \sum_{\playalt\neq\play} \sum_{\time=\tstart}^{\stopTime\parens{\traj}}
  \log\policy_{\playalt}\parens{\pure_{\playalt,\time}|\rstate_\time}
    +   \nabla_{\play}\sum_{\time =\tstart}^{\stopTime\parens{\traj}} \log\probof{\rstate_\time\given\rstate_{\time -1 }, \action_{\time -1}}}
	\notag\\
	&\qquad
		+ \sum_{\traj\in\mathcal{T}} \prob^\policy\parens{\traj} \rewards_\play\parens{\traj} \nabla_{\play} \log \stateDist(\rstate_\tstart)
	\notag\\
	&= \sum_{\traj\in\mathcal{T}}\prob^{\policy}\parens{\traj} \rewards_\play\parens{\traj}\nabla_{\play}\parens{\log \prob^{\policy}\parens{\traj}}
	\notag\\
	&=  \sum_{\traj\in\mathcal{T}} \parens{\nabla_{\play} \prob^{\policy}\parens{\traj}}\rewards_\play\parens{\traj}= \nabla_{\play}\parens{\sum_{\traj\in\mathcal{T}}\prob^{\policy}\parens{\traj}\rewards_\play\parens{\traj}}
	=\nabla_{\play}\rvalue_{\play,\stateDist}\parens{\policy}
\end{align}
where in the penultimate inequality we used the definition for the derivative of the logarithm. We also note here that
\begin{equation}
  \ex_{\traj\sim \MDP }\bracks{ \logvalue_\play} = \ex_{\traj\sim \MDP}\bracks{\rewards_\play\parens{\traj} \nabla_{\play}\parens{\log\prob^{\policy}\parens{\traj}}}
\end{equation}

For the variance of \reinforce estimator we have that
\begin{equation}
\begin{aligned}
    \ex_{\traj \sim \MDP} \bracks*{\norm{\reinforce_\play(\policy)-\gvalue_\play(\policy)}^2}=&
\ex_{\traj \sim \MDP} \bracks*{\norm{\reinforce_\play(\policy)}^2}
	\notag\\
&-2\ex_{\traj \sim \MDP} \bracks*{\braket{\reinforce_\play(\policy)}{\gvalue_\play(\policy)}}
	\notag\\
&+\ex_{\traj \sim \MDP} \bracks*{\norm{\gvalue_\play(\policy)}^2}
\end{aligned}
\end{equation}
or equivalently
$    \ex_{\traj \sim \MDP} \bracks*{\norm{\reinforce_\play(\policy)-\gvalue_\play(\policy)}^2}=
\ex_{\traj \sim \MDP} \bracks*{\norm{\reinforce_\play(\policy)}^2}-\ex_{\traj \sim \MDP} \bracks*{\norm{\gvalue_\play(\policy)}^2}
$.
Therefore, we have that
\begin{align}
\ex_{\traj \sim \MDP} \bracks*{\norm{\reinforce_\play(\policy)-\gvalue_\play(\policy)}^2}\le\ex_{\traj \sim \MDP} \bracks*{\norm{\reinforce_\play(\policy)}^2}=\exof{\dnorm{\logvalue_\play}^2}
\end{align}
and, after a series of \textendash\ tedious but otherwise straightforward \textendash\ calculations, we get:
\begin{align}
    \exof{\dnorm{\logvalue_\play}^2} &= \ex_{\traj \sim\MDP }\bracks{\dnorm{\rewards_\play\parens{\traj}\logtrick_\play\parens{\traj}}^2}\leq \ex_{\traj \sim\MDP }\bracks{\dnorm{\rewards_\play\parens{\traj}}^2\dnorm{\logtrick_\play\parens{\traj}}^2}
	\notag\\
  	&\leq \ex_{\traj \sim\MDP}\bracks{\parens{\stopTime\parens{\traj}+1}^2\dnorm{\sum_{\time = \tstart}^{\stopTime\parens{\traj}}\nabla_{\play}\log \policy_\play\parens{\action_{\play,\time},\rstate_{\time}}}^2}
	\notag\\ 
  	&\leq \ex_{\traj \sim\MDP }\bracks{\parens{\stopTime\parens{\traj}+1}^3\sum_{\time =\tstart}^{\infty} \sum_{\rstate,\action\in\states\times\actions_\play}\mathds{1}\braces{\time\leq \stopTime}\mathds{1}\braces{\rstate_\time = \rstate,\action_{\play,\time} =\action}\dnorm{\nabla_{\play}\log \policy_\play\parens{\action,\rstate}}^2}
	\notag\\
  	&= \sum_{\time =\tstart}^{\infty} \sum_{\rstate,\action\in\states\times\actions_\play}\ex_{\traj \sim\MDP }\bracks{\parens{\stopTime\parens{\traj}+1}^3\mathds{1}\braces{\time\leq \stopTime}\mathds{1}\braces{\rstate_\time = \rstate,\action_{\play,\time} =\action}\dfrac{1}{\parens{\policy_\play\parens{\action,\rstate}}^2}}
	\notag\\
  	&\leq \sum_{\time =\tstart}^{\infty} \sum_{\rstate,\action\in\states\times\actions_\play} \dfrac{1}{\parens{\policy_\play\parens{\action,\rstate}}^2}\ex_{\traj \sim\MDP }\bracks{\parens{\stopTime\parens{\traj}+1}^3\mathds{1}\braces{\time\leq \stopTime}\mathds{1}\braces{\rstate_\time = \rstate,\action_{\play,\time} =\action}}
	\notag\\
  	&\leq \sum_{\time =\tstart}^{\infty} \sum_{\rstate,\action\in\states\times\actions_\play} \dfrac{1}{\policy_\play\parens{\action,\rstate}}\ex_{\traj \sim\MDP }\bracks{\parens{\stopTime\parens{\traj}+1}^3\mathds{1}\braces{\time\leq \stopTime}\mathds{1}\braces{\rstate_\time = \rstate}}
	\notag\\
  	&\leq \sum_{\time =\tstart}^{\infty} \sum_{\rstate,\action\in\states\times\actions_\play} \dfrac{1}{\kappa_\play}\braces{\parens{\stopTime\parens{\traj}+1}^3\mathds{1}\braces{\time\leq \stopTime}\mathds{1}\braces{\rstate_\time = \rstate}}
	\notag\\
  	&= \sum_{\time =\tstart}^{\infty} \sum_{\rstate\in\states} \dfrac{\abs{\nPures_\play}}{\kappa_\play}\ex_{\traj \sim\MDP }\bracks{\parens{\stopTime\parens{\traj}+1}^3\mathds{1}\braces{\time\leq \stopTime}\mathds{1}\braces{\rstate_\time = \rstate}}
	\notag\\
  	&= \dfrac{\abs{\nPures_\play}}{\kappa_\play}\ex_{\traj \sim\MDP }\bracks{\parens{\stopTime\parens{\traj}+1}^3\sum_{\time =\tstart}^{\stopTime}\mathds{1}\braces{\time\leq \stopTime}}
	\notag\\
  	&\leq \dfrac{\abs{\nPures_\play}}{\kappa_\play}\ex_{\traj \sim\MDP }\bracks{\parens{\stopTime\parens{\traj}+1}^4}
	\notag\\
  	&\leq \dfrac{\abs{\nPures_\play}}{\kappa_\play}\sum_{\time =\tstart}^\infty\parens{1-\stopPr}^\time\stopPr\parens{\time+1}^4\leq \dfrac{24}{\stopPr^4}\dfrac{\abs{\nPures_\play}}{\kappa_\play}
\end{align}
where, to go from the first to the second inequality we used the boundeness by one of the rewards, while from the second to the third, we used Jensen's inequality.
\end{proof}

\section{Solution concepts}
\label{app:solutions}


In this last appendix, we proceed to establish three important facts regarding the gradient characterization of stationary Nash policies.
More precisely, we prove the following:
\begin{itemize}
\item
In \cref{lem:GDP}, we prove the crucial property of Gradient Dominance for the multi-agent random stopping setting.
\item
In \cref{lem:fos2ne}, we establish that any stationary point corresponds to Nash Equilibria.
\item
In \cref{prop:var}, we prove the ``drift'' inequalities for all the different types of stationary points. 
\end{itemize}

We begin with the gradient dominance property of the game, which we restate below for convenience:

\GradientDominance*

\begin{proof}
We start by rewriting the LHS of \eqref{eq:GDP} using \cref{lem:performance-lemma,lem:conversion} for $\policy^{\this}=(\policy_{\play}';\policy_{-\play})$ and 
$\policy^{\that}=(\policy_{\play};\policy_{-\play})$:
 \begin{align}
 \rvalue_{\play,\stateDist}\parens{\policy^{\this}} - \rvalue_{\play,\stateDist}\parens{\policy^{\that}}    
 &= \sum_{\rstate \in \states}\visitMeas{\stateDist}{\policy^{\this}}(\rstate) \ex_{\pure \sim \policy^{\this}(\cdot \mid \rstate)}\left[A_{\play}^{\policy_\play^{\that}}(\rstate,\pure)\right]
	\notag\\
 &= \sum_{\rstate \in \states}\visitMeas{\stateDist}{\policy^{\this}}(\rstate) \sum_{\action_\play\in\actions_\play} \policy'_\play(\action_\play|\rstate) \sum_{\action_{-\play}\in\actions_{-\play}} \policy_{-\play}(\action_{-\play}|\rstate) A_{\play}^{\policy^{\that}}(\rstate,\pure)
	\notag\\
 &= \sum_{\rstate \in \states}\visitMeas{\stateDist}{\policy^{\this}}(\rstate) \sum_{\action_\play\in\actions_\play} \policy'_\play(\action_\play|\rstate) \overline{A}_\play^{\policy^{\that}} (\rstate,\action_\play)
	\notag\\
&\leq \sum_{\rstate \in \states}\visitMeas{\stateDist}{\policy^{\this}}(\rstate) \sum_{\action_\play\in\actions_\play}\policy'_\play(\action_\play |\rstate ) \max_{\action_\play\in\actions_\play}  \overline{A}_\play^{\policy^{\that}} (\rstate,\action_\play)
\end{align}
Thus, by a series of direct calculations, we obtain:
\begin{align}
 \rvalue_{\play,\stateDist}\parens{\policy^{\this}} - \rvalue_{\play,\stateDist}\parens{\policy^{\that}}    
&\leq 
\max_{\tilde{\policy}_\play\in\simplex(\actions)^{\states}}\sum_{\rstate \in \states} {\visitMeas{\stateDist}{\policy^{\this}}\parens{\rstate}} \sum_{\action_\play\in\actions_\play}\tilde{\policy}_\play(\action_\play|\rstate)\overline{A}_\play^{\policy^{\that}} (\rstate,\action_\play)
	\notag\\
&\leq 
\max_{\tilde{\policy}_\play\in\simplex(\actions)^{\states}}\sum_{\rstate \in \states} {\visitMeas{\stateDist}{\policy^{\this}}\parens{\rstate}} \sum_{\action_\play\in\actions_\play}\parens{\tilde{\policy}_\play(\action_\play|\rstate)-{\policy}_\play(\action_\play|\rstate)}\overline{A}_\play^{\policy^{\that}} (\rstate,\action_\play)
	\notag\\
&\leq 
\max_{\tilde{\policy}_\play\in\simplex(\actions)^{\states}}\sum_{\rstate \in \states} {\frac{\visitMeas{\stateDist}{\policy^{\this}}\parens{\rstate}}{\visitMeas{\stateDist}{\policy^{\that}}\parens{\rstate}}\visitMeas{\stateDist}{\policy^{\that}}\parens{\rstate}} \sum_{\action_\play\in\actions_\play}\parens{\tilde{\policy}_\play(\action_\play|\rstate)-{\policy}_\play(\action_\play|\rstate)}\overline{A}_\play^{\policy^{\that}} (\rstate,\action_\play)
	\notag\\
&\leq \norm*{\dfrac{\visitMeas{\stateDist}{\policy^{\this}}(\rstate)}{\visitMeas{\stateDist}{\policy^{\that}}\parens{\rstate}}}_\infty \max_{\tilde{\policy}_\play\in\simplex(\actions)^{\states}}\sum_{\rstate \in \states} \sum_{\action_\play\in\actions_\play} {\visitMeas{\stateDist}{\policy^{\that}}\parens{\rstate}}  (\tilde{\policy}_\play(\action_\play|\rstate) - \policy_\play(\action_\play|\rstate))\overline{Q}_\play^{\policy^{\that}} (\rstate,\action_\play)
	\notag\\
&\leq \norm*{\dfrac{\visitMeas{\stateDist}{\policy^{\this}}(\rstate)}{\visitMeas{\stateDist}{\policy^{\that}}\parens{\rstate}}}_\infty \max_{\tilde{\policy}_\play\in\simplex(\actions)^{\states}}\sum_{\rstate \in \states,\action_\play\in\actions_\play} { (\tilde{\policy}_\play(\action_\play|\rstate) - \policy_\play(\action_\play|\rstate))\visitMeas{\stateDist}{\policy^{\that}}\parens{\rstate}} \overline{Q}_\play^{\policy^{\that}} (\rstate,\action_\play)
	\notag\\
&\leq \norm*{\dfrac{\visitMeas{\stateDist}{\policy^{\this}}(\rstate)}{\visitMeas{\stateDist}{\policy^{\that}}\parens{\rstate}}}_\infty \max_{\tilde{\policy}_\play\in\simplex(\actions)^{\states}}\sum_{\rstate \in \states,\action_\play\in\actions_\play}  (\tilde{\policy}_\play(\action_\play|\rstate) - \policy_\play(\action_\play|\rstate))\frac{\partial \rvalue_{\play,\stateDist}(\policy)}{\partial \policy_\play(\pure_\play \mid \rstate)}
\end{align}
so
\begin{align}
\rvalue_{\play,\stateDist}(\policyalt_{\play};\policy_{-\play}) -
	\rvalue_{\play,\stateDist}(\policy_{\play};\policy_{-\play})
	&\leq \mismatch \max_{\bar\policy_{\play}\in\policies_{\play}}
		\braket{\nabla_{\play}\rvalue_{\play,\stateDist}(\policy)}{\bar{\policy}_{\play}-{\policy_{\play}}}
\end{align}
and our proof is complete.
\end{proof}

\begin{remark*}
Notice that we have assumed that $\visitMeas{\stateDist}{\policy^{\that}}>0$. If this wasn't the case we could take a trivial bound of $\infty$.
\end{remark*}

We now proceed to establish the link between \eqref{eq:FOS} and \eqref{eq:NP}:

\FOSNE*

\begin{proof}
By the definition of first-order stationarity applied to the policies $\policy^{\ast}$ and $\policy$, it is straightforward to check that
$\braket{\gvalue(\policy^*)}{\policy^*-\policy} \geq 0$
if and only if
$\max_{\bar\policy_{\play}\in\policies_{\play}}\braket{\nabla_{\play}\rvalue_{\play,\stateDist}(\policy^*)}{\policy_\play-\bar\policy_\play^*} \leq 0$.
However, by the gradient dominance property established in \cref{lem:GDP}, we readily get
\begin{align}
\rvalue_{\play,\stateDist}(\policy_{\play};\policy_{-\play}^*) -
	\rvalue_{\play,\stateDist}(\policy_{\play}^*;\policy_{-\play}^*)
	&\leq  \mismatch \max_{\bar\policy_{\play}\in\policies_{\play}}
		\braket{\nabla_{\play}\rvalue_{\play,\stateDist}(\policy^*)}{\bar{\policy}_{\play}-{\policy_{\play}^*}}\leq 0
	\\
\shortintertext{and hence}
\rvalue_{\play,\stateDist}(\policy_{\play};\policy_{-\play}^*)
	&\leq \rvalue_{\play,\stateDist}(\policy_{\play}^*;\policy_{-\play}^*)
		\quad
		\text{for all $\policy_{\play}\in \policies_\play$}
\end{align}
and our claim follows.
\end{proof}

With all this in place, we are finally in a position to prove the characterization of \acl{SOS} and strict Nash policies that of \cref{prop:var}.
For ease of reference, we restate the relevant claims below.

\StationaryProperties*

\begin{proof}
We begin with the characterization of \acl{SOS} policies.
To that end, let $\vdim = \abs{\states} \sum_{\play} \abs{\pures_{\play}}$ denote the ambient dimension of $\dspace$ and consider the mapping $\varphi\from\R^{\vdim\times\vdim} \to \R$ mapping $\hmat \mapsto \max\setdef{\tvec^{\top}\hmat\tvec}{\tvec \in \tcone(\eq), \norm{\tvec} = 1}$.
Clearly, $\varphi$ is convex as the pointwise maximum of a set of linear \textendash\ and hence convex \textendash\ functions.
This in turn implies the continuity of $\varphi$ as every convex function is continuous on the interior of its effective domain.
Since $\eq$ satisfies \eqref{eq:SOS} by assumption, we have $\varphi(\Jac_{\gvalue}(\eq)) < 0$, so, by continuity and the convexity of $\policies$, there exists some $\strong > 0$ and a convex neighborhood $\nhd$ of $\eq$ in $\policies$ such that $\varphi(\Jac_{\gvalue}(\policy)) \leq -\strong$ for all $\policy \in \nhd$.

With this in mind, letting $\tvec = \policy-\eq \in \tcone(\eq)$ for some $\policy\in\nhd$, a straightforward Taylor expansion with integral remainder yields
\begin{equation}
\gvalue(\policy) - \gvalue(\eq)
	= \int_{0}^{1} \Jac_{\gvalue}(\eq + \tau\tvec) \tvec \dd\tau
\end{equation}
and hence, setting $\policy_{\tau} = \eq + \tau\tvec$, we get
\begin{align}
\braket{\gvalue(\policy) - \gvalue(\eq)}{\policy - \eq}
	&= \int_{0}^{1} \tvec^{\top} \Jac_{\gvalue}(\policy_{\tau}) \tvec \dd\tau
	\notag\\
	&\leq \norm{\tvec}^{2}
		\int_{0}^{1} \varphi(\Jac_{\gvalue}(\policy_{\tau})) \dd\tau
	\leq -\strong \norm{\tvec}^{2}
	= -\strong \norm{\policy - \eq}^{2}
\end{align}
However, by \eqref{eq:FOS}, we have $\braket{\gvalue(\eq)}{\policy - \eq} \leq 0$ which, combined with the above, yields $\braket{\gvalue(\policy)}{\policy - \eq} \leq -\strong \norm{\policy - \eq}^{2}$, as claimed.

For the second part of our lemma, pick some $\policy\neq\eq$ and let $\tvec = (\policy-\eq) / \norm{\policy-\eq}$, so $\tvec \in \tcone(\eq)$ and $\norm{\tvec} = 1$.
Then, given that \eqref{eq:FOS} is satisfied as a strict inequality for all $\policy\neq\eq$, we readily get $\braket{\gvalue(\eq)}{\tvec} < 0$ for all $\tvec\in\tcone(\eq)$ with $\norm{\tvec} = 1$.
Thus, by the joint continuity of the function $\braket{\gvalue(\policy)}{\tvec}$ in $\policy$ and $\tvec$, there exists a compact convex neighborhood $\cpt$ of $\eq$ in $\policies$ such that $\strong \defeq \min\setdef{\braket{\gvalue(\policy)}{\tvec}}{\policy\in\cpt,\tvec\in\tcone(\eq),\norm{\tvec}=1} < 0$.
Thus, letting $\tvec = (\policy-\eq) / \norm{\policy-\eq}$ as above, we conclude that $\braket{\gvalue(\policy)}{\policy - \eq} \leq -\strong \norm{\policy - \eq}$, as claimed.
\end{proof}

\section*{Acknowledgments}
\begingroup
\small
%
%
Part of this work was done while the authors were visiting the Simons Institute for the Theory of Computing.
P.~Mertikopoulos gratefully acknowledges financial support by
the French National Research Agency (ANR) in the framework of
the ``Investissements d'avenir'' program (ANR-15-IDEX-02),
the LabEx PERSYVAL (ANR-11-LABX-0025-01),
MIAI@Grenoble Alpes (ANR-19-P3IA-0003),
and the bilateral ANR-NRF grant ALIAS (ANR-19-CE48-0018-01). K. Lotidis and E. Vlatakis are grateful for financial support by the Onassis Foundation (F ZR 033-1/2021-2022, 010-1/2018-2019).
A. Giannou is grateful for financial support by  ONR: “A Theoretically Principled Framework for Learning by Pruning”.
E. V. Vlatakis-Gkaragkounis is grateful for financial support by the Google-Simons Fellowship, Pancretan Association of America and Simons Collaboration on Algorithms and Geometry. This project was completed while he was a visiting research fellow at the Simons Institute for the Theory of Computing. Additionally, he would like to acknowledge the following series of NSF-CCF grants under the numbers 1763970/2107187/1563155/1814873.

\endgroup

\bibliographystyle{icml}
\bibliography{bibtex/IEEEabrv.bib,bibtex/localbibliography.bib,bibtex/Bibliography-MV.bib,bibtex/Bibliography-PM.bib}

\end{document}